\newcommand{\Ff}{{\mathbb F}}
\newcommand\rank{\operatorname{rank}}   
\newcommand\cc{{\mathcal C}}        %
\def\Tr{\operatorname{Tr}}
\theoremstyle{plain}
\newtheorem{thm}{Theorem}
\newtheorem{lem}[thm]{Lemma}
\theoremstyle{definition}
\newtheorem{remark}{Remark}
\def\Tr{\operatorname{Tr}}
\begin{document}
\title{Entanglement-Assisted and Subsystem Quantum Codes: New Propagation Rules and Constructions}

\author{\centerline{Gaojun Luo, Martianus Frederic Ezerman, and San Ling}
\thanks{G. Luo, M. F. Ezerman, and S. Ling are with the School of Physical and Mathematical Sciences, Nanyang Technological University, 21 Nanyang Link, Singapore 637371, e-mails: $\{\rm gaojun.luo, fredezerman, lingsan\}$@ntu.edu.sg.}
\thanks{G. Luo, M. F. Ezerman, and S. Ling are supported by Nanyang Technological University Research Grant No. 04INS000047C230GRT01. }

}


\maketitle

\begin{abstract}
This paper proposes new propagation rules on quantum codes in the entanglement-assisted and in quantum subsystem scenarios. The rules lead to new families of such quantum codes whose parameters are demonstrably optimal. To obtain the results, we devise tools to puncture and shorten codes in ways that ensure their Hermitian hulls have certain desirable properties. More specifically, we give a general framework to construct $k$-dimensional generalized Reed-Solomon codes whose Hermitian hulls are $(k-1)$-dimensional maximum distance separable codes.
\end{abstract}

\begin{IEEEkeywords}
Entanglement-assisted code, linear code, MDS code, propagation rule, quantum error-correcting code, subsystem code.
\end{IEEEkeywords}

\section{Introduction}\label{sec:intro}

Let $q$ be a prime power and let $\Ff_q$ be the finite field with $q$ elements. The {\it weight} of a vector $\mathbf{v} \in \Ff_q^n$, denoted by ${\rm wt}(\mathbf{v})$, is the number of its nonzero entries. Given a nonempty $\mathcal{S} \subseteq \Ff_q^n$, let ${\rm wt} (\mathcal{S}) = \min\{ {\rm wt}(\mathbf{v}) \colon \mathbf{v} \in \mathcal{S}, \mathbf{v}\neq \mathbf{0}\}$. An $[n,k,d]_{q^2}$ code $\cc$ is a $k$-dimensional subspace of $\Ff_{q^2}^n$. Its minimum distance $d$ is the smallest weight of its nonzero codewords and $\cc$ is \emph{maximum distance separable} (MDS) if $d=n-k+1$. The \emph{Hermitian inner product} of vectors $\mathbf{u}=(u_1,\cdots,u_n)$ and $\mathbf{v}=(v_1,\cdots,v_n)$ of $\Ff_{q^2}^n$ is $\langle\mathbf{u},\mathbf{v}\rangle_{{\rm H}}=\sum_{i=1}^nu_iv_i^q$. The \emph{Hermitian dual} of $\cc$ is
\[
\mathcal{C}^{\perp_{\rm H}} = \left\{\mathbf{u} \in \mathbb{F}_{q^2}^n : \langle\mathbf{u},\mathbf{v}\rangle_{{\rm H}}=0,
\mbox{ for all } \mathbf{v} \in \mathcal{C} \right\}.
\]
The intersection $\cc\cap\mathcal{C}^{\perp_{\rm H}}$ is called the \emph{Hermitian hull} of $\cc$ and is denoted by ${\rm Hull}_{\rm H}(\cc)$. The two extremal cases, namely, ${\rm Hull}_{\rm H}(\cc)=\cc$ and ${\rm Hull}_{\rm H}(\cc)=\{\mathbf{0}\}$ are known respectively as \emph{Hermitian self-orthogonal} and \emph{Hermitian linear complementary dual} (LCD) codes.

A $q$-ary \emph{quantum error-correcting code} (QECC), also known as a {\it qudit code}, is a $K$-dimensional subspace of $(\mathbb{C}^q)^{\otimes n}$. We use the respective terms {\it qubit} and {\it qutrit} codes when $q=2$ and $q=3$. The parameters $[[n,\kappa,\delta]]_q$ of a QECC signifies that the code has dimension $q^{\kappa}$ and can correct quantum error operators affecting up to $\lfloor(\delta-1)/2\rfloor$ arbitrary positions in the quantum ensemble. Formalizing the stabilizer framework in \cite{Gottesman1997}, Calderbank {\it et al}. in \cite{Calderbank1998} proposed a general approach to construct qubit QECCs, by group character theory and finite geometry. This method was subsequently extended to the nonbinary case in \cite{Ketkar2006}, establishing the correspondence between a Hermitian self-orthogonal classical code and a stabilizer QECC.

The stabilizer frameworks requires the ingredient classical linear codes to be Hermitian self-orthogonal. One can relax the orthogonality condition while still being able to perform quantum error control in the frameworks of \emph{entanglement-assisted quantum error-correcting codes} (EAQECCs) \cite{Brun2006} and \emph{subsystem codes} \cite{Poulin2005}.

\subsection{Entanglement-Assisted Quantum Error-Correcting Codes}

Brun, Devetak, and Hsieh introduced EAQECCs in \cite{Brun2006}. The sender and the receiver share pairs of error-free maximally entangled states ahead of time. A $q$-ary EAQECC, denoted by $[[n,\kappa,\delta; c]]_q$, encodes $\kappa$ logical qudits into $n$ physical qudits, with the help of $n-\kappa-c$ ancillas and $c$ pairs of maximally entangled qudits. Such a quantum code can correct up to $\lfloor(\delta-1)/2\rfloor$ quantum errors. An EAQECC is a QECC if the code is designed without entanglement assistance, that is, when $c=0$.

With maximally entangled states as an additional resource, the pools of feasible classical ingredients in the construction of EAQECCs can include classical codes which are not self-orthogonal. A general construction of qubit EAQECCs was provided in \cite{Brun2006} via any binary or quaternary linear codes. This approach was generalized to the qudit case in \cite{Galindo2019}.

\begin{lem}{\rm \cite[Theorem 3]{Galindo2019}}\label{prop:two}
If $\mathcal{C}$ is an $[n,k,d]_{q^2}$ code, then there exists an 
$[[n,\kappa, \delta; c]]_q$ EAQECC $\mathcal{Q}$ with
\begin{align*}
c = k - \dim_{\mathbb{F}_{q^2}}
\left({\rm Hull}_{\rm H}(\cc)\right), \
\kappa = n-2k+c \mbox{, and}\
\delta = {\rm wt}\left(\mathcal{C}^{\perp_{\rm H}} \setminus {\rm Hull}_{\rm H}(\cc)\right).
\end{align*}
\end{lem}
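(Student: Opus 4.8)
The plan is to pass from the $\Ff_{q^2}$-linear code $\mathcal{C}$ to the symplectic $\Ff_q$-picture underlying the qudit stabilizer formalism of \cite{Ketkar2006}, and then to feed the result into the entanglement-assisted construction of \cite{Brun2006,Galindo2019}. First I would fix an $\Ff_q$-basis of $\Ff_{q^2}$ and the induced $\Ff_q$-linear isomorphism $\phi\colon\Ff_{q^2}^n\to\Ff_q^{2n}$ that carries the trace-Hermitian form $(\mathbf{u},\mathbf{v})\mapsto\Tr_{q^2/q}(\langle\mathbf{u},\mathbf{v}\rangle_{\rm H})$ to the standard symplectic form on $\Ff_q^{2n}$; this is precisely the correspondence for which commuting generalized Pauli operators match symplectically orthogonal vectors. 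Under $\phi$ the code $\mathcal{C}$, of $\Ff_{q^2}$-dimension $k$, becomes an $\Ff_q$-subspace $S=\phi(\mathcal{C})$ of dimension $m=2k$.

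The key algebraic step is to compute the rank of the symplectic form restricted to $S$ and to identify its radical. Here I would use that, because $\mathcal{C}$ is $\Ff_{q^2}$-linear, being trace-Hermitian orthogonal to all of $\mathcal{C}$ is equivalent to being genuinely Hermitian orthogonal to $\mathcal{C}$: if $\langle\mathbf{u},\mathbf{v}\rangle_{\rm H}=\gamma\neq 0$, then scaling $\mathbf{u}$ by every $\lambda\in\Ff_{q^2}$ and invoking nondegeneracy of the trace yields a contradiction. Consequently $S^{\perp_{\rm s}}$ pulls back to $\mathcal{C}^{\perp_{\rm H}}$, and the radical of the form on $S$ pulls back to $\mathcal{C}\cap\mathcal{C}^{\perp_{\rm H}}={\rm Hull}_{\rm H}(\mathcal{C})$. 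Equivalently, writing $G$ for a generator matrix of $\mathcal{C}$ and $M=G\,(G^{(q)})^{{\rm T}}$ for its Hermitian Gram matrix (with $G^{(q)}$ the entrywise $q$-th power), one checks that $\mathbf{x}G\in{\rm Hull}_{\rm H}(\mathcal{C})\iff\mathbf{x}M=\mathbf{0}$, so $\rank M = k-\dim_{\Ff_{q^2}}{\rm Hull}_{\rm H}(\mathcal{C})=c$. Translating back, the radical has $\Ff_q$-dimension $2\dim_{\Ff_{q^2}}{\rm Hull}_{\rm H}(\mathcal{C})$ and the symplectic form on $S$ has rank $2c$.

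With these two invariants in hand I would invoke the qudit entanglement-assisted stabilizer construction. A Witt decomposition writes $S$ as an orthogonal sum of its totally isotropic radical together with $c$ hyperbolic planes; each hyperbolic plane contributes a pair of anticommuting generators that is made to commute by borrowing one half of a pre-shared maximally entangled pair, which is exactly why the code consumes $c$ ebits. The standard dimension count for this construction reads $\kappa=n-m+c$, which with $m=2k$ gives $\kappa=n-2k+c$ and specializes correctly to the Hermitian self-orthogonal case ($c=0$, $\kappa=n-2k$). Finally, the correctable errors are the representatives of the symplectic dual modulo the effective stabilizer; pulling back through $\phi$ these are exactly the vectors of $\mathcal{C}^{\perp_{\rm H}}$ lying outside the hull, whence $\delta={\rm wt}(\mathcal{C}^{\perp_{\rm H}}\setminus{\rm Hull}_{\rm H}(\mathcal{C}))$.

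I expect the main obstacle to be the careful bookkeeping that matches the $\Ff_q$-symplectic rank $2c$ with the $\Ff_{q^2}$-hull codimension $c$: the persistent factor of two between the two fields must be tracked at every step, alongside a rigorous justification of the formula $\kappa=n-m+c$ from the hyperbolic-plus-radical decomposition and the accompanying ebit accounting. Pinning down the distance claim also requires care, since one must verify that the set to be excluded is the hull rather than all of $\mathcal{C}$, which again hinges on the identification of the symplectic dual with the Hermitian dual.
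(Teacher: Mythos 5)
First, a point of reference: the paper does not prove this lemma at all --- it is quoted verbatim from \cite{Galindo2019}, so the only fair comparison is with the proof in that reference (which rests on the qudit Wilde--Brun symplectic formalism together with the $\Ff_{q^2}\leftrightarrow\Ff_q^{2n}$ expansion of \cite{Ketkar2006}). Your reconstruction follows exactly that route, and most of it is right: the Gram-matrix computation $\rank\bigl(G(G^{(q)})^{\top}\bigr)=k-\dim_{\Ff_{q^2}}{\rm Hull}_{\rm H}(\cc)=c$ is the same rank identity used in \cite{Galindo2019}; the identification of the radical of the restricted form with $\phi\bigl({\rm Hull}_{\rm H}(\cc)\bigr)$ (of $\Ff_q$-dimension $2\ell$, giving symplectic rank $2c$ on $S$), the radical-plus-hyperbolic decomposition with one ebit per hyperbolic plane, the count $\kappa=n-2k+c$, and the observation that the excluded set in the distance is the hull rather than all of $\cc$ are all correct and are the substance of the cited proof.

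The one step that fails as written is your choice of form. The map $(\mathbf{u},\mathbf{v})\mapsto\Tr_{q^2/q}\bigl(\langle\mathbf{u},\mathbf{v}\rangle_{\rm H}\bigr)=\sum_i\bigl(u_iv_i^q+u_i^qv_i\bigr)$ is a \emph{symmetric} $\Ff_q$-bilinear form, so for odd $q$ no $\Ff_q$-linear isomorphism can carry it to the alternating symplectic form on $\Ff_q^{2n}$; equivalently, orthogonality under this form does not encode commutation of generalized Pauli operators. The form that does the job is the trace-alternating form of \cite{Ketkar2006}, essentially $\langle\mathbf{u},\mathbf{v}\rangle_a=\Tr\bigl((\mathbf{u}\cdot\mathbf{v}^q-\mathbf{u}^q\cdot\mathbf{v})/(\beta^{2q}-\beta^2)\bigr)$ for a suitable $\beta\in\Ff_{q^2}$. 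The repair is harmless and local: your own scaling-plus-nondegeneracy argument shows that for an $\Ff_{q^2}$-\emph{linear} code the dual under any such trace form coincides with the Hermitian dual, so the identifications $S^{\perp_{\rm s}}\leftrightarrow\cc^{\perp_{\rm H}}$ and ${\rm rad}(S)\leftrightarrow{\rm Hull}_{\rm H}(\cc)$, and everything downstream of them, go through verbatim once the alternating form is substituted. (In characteristic $2$ your form happens to be alternating, so the slip only matters for odd $q$; note also that the weight-preservation of $\phi$, which you use implicitly in the distance claim, is a property of the coordinatewise Ketkar-style map and should be stated.)
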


If $\cc\subseteq \cc^{\perp_{\rm H}}$ or $\cc^{\perp_{\rm H}}\subseteq \cc$ in Lemma \ref{prop:two}, we obtain a QECC with parameters $[[n,n-2k, \delta_1]]_q$ or $[[n,2k-n, \delta_2]]_q$, where $\delta_1 = {\rm wt}\left(\mathcal{C}^{\perp_{\rm H}} \setminus \cc\right)$ and $\delta_2 = {\rm wt}\left(\mathcal{C} \setminus \cc^{\perp_{\rm H}}\right)$. The code $\mathcal{Q}$ in Lemma \ref{prop:two} is {\it nondegenerate} or {\it pure} if $\delta = d(\cc^{\perp_{\rm H}})$.

In both classical and quantum setups, having propagation rules to derive new codes from a code at hand is useful. In \cite{Ketkar2006}, Ketkar {\it et al}. demonstrated that the existence of a pure $[[n,\kappa,\delta>1]]_q$ QECC implies the existence of an $[[n-1,\kappa+1,\delta-1]]_q$ QECC via \emph{shortened} classical codes. There is, however, no known propagation rule via \emph{punctured} classical codes since puncturing does not generally preserve self-orthogonality. In EAQECCs, on the other hand, puncturing leads to a propagation rule. Galindo {\it et al}. in \cite{Galindo2019,Galindo2021} took a Hermitian self-orthogonal code $\cc$ and showed that an $[[n,\kappa,\delta]]_q$ QECC constructed from $\cc$ by Lemma \ref{prop:two} gives rise to an $[[n-s,\kappa,\geq\delta; s]]_q$ EAQECC for each $s \in \{1,2,\ldots,\delta-1 \}$. Grassl, Huber, and Winter in \cite{GraHubWin2022} generalized the propagation rule by proving that \emph{any pure} $[[n,\kappa,\delta]]_q$ QECC leads to an $[[n-s,\kappa,\geq\delta; s]]_q$ EAQECC for each $s$ in the same range.

The Singleton-like bound for any $[[n,\kappa, \delta; c]]_q$ EAQECC in \cite[Corollary 9]{GraHubWin2022} reads
\begin{alignat}{5}
  \kappa &\le c+\max\{0,n - 2 \delta + 2\},\label{eq:QMDS_small_distance}\\
  \kappa &\le n-\delta+1,\label{eq:QMDS_trivial}\\
  \kappa
  &\le\frac{(n-\delta+1)(c+2\delta-2-n)}{3\delta-3-n} \mbox{, with }
    \delta-1\ge\frac{n}{2}.\label{eq:QMDS_large_distance}
\end{alignat}
We call EAQECCs that achieve equality in \eqref{eq:QMDS_small_distance} whenever $\delta \le \frac{n}{2}$ and in \eqref{eq:QMDS_large_distance} whenever $\delta>\frac{n}{2}$ {\it quantum maximum distance separable} (QMDS). Given a classical $[n,k,n-k+1]_{q^2}$ MDS code whose Hermitian hull has dimension $\ell$, Lemma \ref{prop:two} produces two EAQECCs with respective parameters
\begin{equation}\label{EAQcon}
[[n,k-\ell,n-k+1;n-k-\ell]]_q\ \mbox{and}\ [[n,n-k-\ell,k+1;k-\ell]]_q.
\end{equation}
In general, only one of the two EAQECCs is QMDS since the code with $\delta > n/2$ cannot meet the bound in \eqref{eq:QMDS_large_distance}. Lemma \ref{prop:two} has been a main construction tool for EAQECCs. 
 
In a concurrent work \cite[Section 3]{Luo2021}, Luo {\it et al}. established the constraint 
\begin{equation}\label{eq:QMDS}
 2\delta \le n+c-\kappa+2
\end{equation}
on the parameters of any $[[n,\kappa, \delta; c]]_q$ EAQECC constructed by Lemma \ref{prop:two}.

An EAQECC which is generated based on Lemma \ref{prop:two} is {\it optimal} if its parameters reach equality in the bound in \eqref{eq:QMDS}. An optimal $[[n,\kappa, \delta; c]]_q$ is QMDS whenever $\delta\le\frac{n}{2}$. Thus, each classical MDS code corresponds to two optimal EAQECCs. By determining the dimensions of the Hermitian or Euclidean hulls of MDS codes, many researchers have proposed various constructions of optimal EAQECCs. The works in \cite{Fan2016,Guenda2017,Lu2018,Lu2018a,Luo2019,Fang2020,Gao2021,Li2019,Tian2020,Chen2021,Chen2021a,Qian2019,Qian2017,Wang2019} are representatives of the literature on the topic.

\subsection{Subsystem codes}

A \emph{quantum subsystem code} is a subspace of $(\mathbb{C}^q)^{\otimes n}$ which can be decomposed into a tensor product $\mathcal{A} \otimes \mathcal{B}$ of a $K$-dimensional vector space $\mathcal{A}$ and an $R$-dimensional vector space $\mathcal{B}$. The space $\mathcal{A}$ is the \emph{subsystem}. It stores information. The space $\mathcal{B}$ is the \emph{gauge subsystem}. Quantum errors in $\mathcal{B}$ can be ignored as $\mathcal{B}$ is utilized only to provide some extra redundancy. The notation $[[n,\kappa,r,\delta]]_q$ for a subsystem code signifies that the code is $q$-ary, with $K=q^{\kappa}$, $R=q^r$, and the ability to correct quantum error operators affecting up to $\lfloor(\delta-1)/2\rfloor$ arbitrary positions in the (information) subspace $\mathcal{A}$. Interested readers can consult the chapter \cite{KP2013} written by Kribs and Poulin for the basic theory. 

Like stabilizer QECCs and EAQECCs, subsystem codes can be constructed via linear classical codes over finite fields. The following general construction of subsystem codes was established in \cite{Aly2006}.

\begin{lem}\label{QSCH}{\rm \cite[Corollary 4]{Aly2006}}
Let $\cc$ be an $[n,k,d]_{q^2}$ code with Hermitian hull ${\rm Hull}_{\rm H}(\cc)$ of dimension $\ell$. If $k+\ell<n$, then there exists a subsystem code $\mathcal{Q}$ with parameters
\[
\left[\left[n,n-k-\ell,k-\ell,{\rm wt}\left({\rm Hull}_{\rm H}(\cc)^{\perp_{\rm H}}\setminus \cc\right)\right]\right]_q.
\]
\end{lem}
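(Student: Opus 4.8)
The plan is to realize the asserted subsystem code through the stabilizer--gauge formalism, transporting the $\Ff_{q^2}$-linear code $\cc$ to a gauge subgroup of the $q$-ary Pauli group by the trace-Hermitian correspondence and then reading off all three parameters from that single dictionary.

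First, I would record the structure of the hull. Writing $H={\rm Hull}_{\rm H}(\cc)=\cc\cap\cc^{\perp_{\rm H}}$, any $\mathbf{u},\mathbf{v}\in H$ satisfy $\mathbf{u}\in\cc^{\perp_{\rm H}}$ and $\mathbf{v}\in\cc$, whence $\langle\mathbf{u},\mathbf{v}\rangle_{\rm H}=0$; thus $H$ is Hermitian self-orthogonal of dimension $\ell$. Dualizing the identity $(\cc\cap\cc^{\perp_{\rm H}})^{\perp_{\rm H}}=\cc^{\perp_{\rm H}}+\cc$ shows $H^{\perp_{\rm H}}=\cc+\cc^{\perp_{\rm H}}$, a code of dimension $n-\ell$ that contains $\cc$. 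These two facts are all the classical input the argument needs.

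Next I would set up the quantum side. Fixing an $\Ff_q$-basis of $\Ff_{q^2}$ identifies $\Ff_{q^2}^n$ with $\Ff_q^{2n}$ and sends the Hermitian form to the trace-symplectic form that governs commutation of Pauli operators, so that the additive code of $\Ff_q$-dimension $2k$ attached to $\cc$ generates a gauge group $\mathcal{G}$. Because $\cc$ is $\Ff_{q^2}$-linear, a label is trace-symplectically orthogonal to all of $\cc$ precisely when it is Hermitian-orthogonal to $\cc$; hence the center of $\mathcal{G}$, serving as the stabilizer $\mathcal{S}$, corresponds exactly to $\cc\cap\cc^{\perp_{\rm H}}=H$, and its self-orthogonality established above makes $\mathcal{S}$ a bona fide abelian stabilizer. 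This identification of the center with the hull is the crux of the construction.

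With the correspondence fixed, the parameters drop out by counting $\Ff_q$-dimensions. The gauge code has $\Ff_q$-dimension $2k$ and its center $H$ has $\Ff_q$-dimension $2\ell$, so the standard subsystem formulas $\kappa=n-\tfrac{1}{2}(\dim_{\Ff_q}\mathcal{G}+\dim_{\Ff_q}\mathcal{S})$ and $r=\tfrac{1}{2}(\dim_{\Ff_q}\mathcal{G}-\dim_{\Ff_q}\mathcal{S})$ give $\kappa=n-k-\ell$ and $r=k-\ell$; the hypothesis $k+\ell<n$ forces $\kappa>0$, so the information subsystem is nontrivial. Since the quotient $\mathcal{G}/\mathcal{S}$ has even $\Ff_q$-dimension $2(k-\ell)$ and inherits a nondegenerate symplectic form, it splits into $k-\ell$ hyperbolic pairs, confirming exactly $r=k-\ell$ gauge qudits. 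For the distance, an undetectable logical error is a label that normalizes $\mathcal{S}$ but lies outside $\mathcal{G}$; the normalizer corresponds to $H^{\perp_{\rm H}}$ and the gauge group to $\cc$, so $\delta={\rm wt}(H^{\perp_{\rm H}}\setminus\cc)={\rm wt}({\rm Hull}_{\rm H}(\cc)^{\perp_{\rm H}}\setminus\cc)$, as stated. The main obstacle I expect is making the trace-Hermitian dictionary fully rigorous---verifying that Hermitian orthogonality over $\Ff_{q^2}$ matches trace-symplectic commutation and that the center of $\mathcal{G}$ is exactly the hull---because every one of the three parameters is extracted from that one correspondence.
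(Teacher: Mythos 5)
The paper offers no proof of this lemma---it is quoted verbatim from \cite[Corollary 4]{Aly2006}---and your argument is exactly the standard derivation behind that cited result: the Clifford subsystem-code construction, in which the gauge group is generated by the image of $\cc$ under the trace-symplectic correspondence, its center is identified with ${\rm Hull}_{\rm H}(\cc)$ (using the fact that trace-alternating and Hermitian duality coincide on $\Ff_{q^2}$-linear codes), and $\kappa=n-k-\ell$, $r=k-\ell$, $\delta={\rm wt}\left({\rm Hull}_{\rm H}(\cc)^{\perp_{\rm H}}\setminus\cc\right)$ follow by dimension counting and the characterization of undetectable errors as elements of $N(\mathcal{S})\setminus\mathcal{G}$. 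Your reconstruction is correct and takes essentially the same route as the source the paper cites.
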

Subsystem codes constructed by Lemmas \ref{QSCH} are also known as {\it Clifford subsystem codes}. If the minimum distance of such a code is $\delta = d\left({\rm Hull}_{\rm H}(\cc)^{\perp_{\rm H}}\right)$, then the code is {\it pure} or {\it nondegenerate}. An $[[n,\kappa,0,\delta]]_q$ Clifford subsystem code is a QECC.

Aly and Klappenecker gave two propagation rules on pure subsystem codes via extended and shortened classical codes in \cite{SALAHA.2009}. A pure $[[n,\kappa,r,\delta]]_q$ subsystem code propagates to two pure subsystem codes of respective parameters $[[n+1,\kappa,r,\geq\delta]]_q$ and $[[n-1,\kappa+1,r,\delta-1]]_q$. Given an $[[n,\kappa,r,\delta]]_q$ subsystem code, a trade-off among its parameters was stated in \cite{Aly2006} as the \emph{Singleton-like bound} $\kappa+r\leq n-2\delta+2$. A subsystem code that reaches equality in the bound is called {\it optimal}. In particular, a linear code whose Hermitian hull is MDS corresponds to an optimal subsystem code. Using self-orthogonal classical codes, Aly and Klappenecker identified six families of optimal subsystem codes in \cite{SALAHA.2009}. Qian and Zhang constructed a family of optimal subsystem codes with parameters $[[2^{2m}+1,(2^m-1)^2,4,2^m-1]]_{2^m}$ via cyclic codes that are not Hermitian self-orthogonal in \cite{Qian2013}.

\subsection{Our contributions}

In this paper, we propose \emph{new propagation rules on quantum codes based on punctured and shortened classical codes}. We also present some families of optimal EAQECCs and subsystem codes. The motivations and contributions are summarized in three items. The first two come from the quantum setup whereas the third is from classical coding theory.
\begin{enumerate}
\item The parameters of the Hermitian hull of a linear code form a key ingredient in the constructions of EAQECCs and subsystem codes. Given a linear code $\cc$, we denote by $\cc^S$ and $\cc_S$ the respective codes obtained by puncturing and shortening $\cc$ on an index set $S$ of size $|S|=s$. 
\begin{itemize}
    \item Theorem \ref{Hull} gives a simple condition on the index set $S$ to ensure ${\rm Hull}_{\rm H}(\cc^S)={\rm Hull}_{\rm H}(\cc_S)=({\rm Hull}_{\rm H}(\cc))_S$ and establishes the dimension of the code. The Hermitian hulls of both $\cc^S$ and $\cc_S$ are, therefore, the code ${\rm Hull}_{\rm H}(\cc)$ shortened on a suitable $S$.
    
    \item We derive two new propagation rules on EAQECCs by using Theorem \ref{Hull}. The existence of a pure $[[n,\kappa, \delta; c]]_q$ EAQECC $\mathcal{Q}$ constructed by Lemma \ref{prop:two} and a linear code with $\ell$-dimensional Hermitian hull implies the existence of two EAQECCs for each $ s \in \{1,2,\ldots, \ell\}$ with respective parameters 
    \begin{equation}\label{eq:propruleone}
    [[n-s,\kappa, \geq\delta; c+s]]_q \mbox{ and } 
    [[n-s,\kappa+s, \geq\delta-s; c]]_q.
    \end{equation}
    The first propagation rule also holds when the code $\mathcal{Q}$ is not pure. The resulting two quantum codes reach equality in the bound in \eqref{eq:QMDS} if the quantum code $\mathcal{Q}$ is already optimal with respect to the bound in \eqref{eq:QMDS}. Our first new propagation rule generalizes the rule in \cite{Galindo2019,Galindo2021,GraHubWin2022}, in which the initial code $\mathcal{Q}$ must be a \emph{stabilizer} QECC and the range of the variable $s$ is smaller than that in our rule.
    
    \item In a similar manner, we derive two new propagation rules on quantum subsystem codes. The existence of a pure $[[n,\kappa,r,\delta]]_q$ subsystem code $\widehat{\mathcal{Q}}$ derived by Lemma \ref{QSCH} and a linear code with $\ell$-dimensional Hermitian hull ensures the existence of two subsystem codes for each $ s \in \{1,2,\ldots, \ell\}$ with respective parameters 
    \begin{equation}\label{eq:propruletwo}
    [[n-s,\kappa,r+s,\geq\delta-s]]_q \mbox{ and } 
    [[n-s,\kappa+s,r,\geq\delta-s]]_q.
    \end{equation}
Our first propagation rule for subsystem codes also works when the code $\widehat{\mathcal{Q}}$ is not pure. If $\widehat{\mathcal{Q}}$ is optimal, then the resulting two subsystem codes are also optimal.
\end{itemize}

Constructions of EAQECCs and subsystem codes via linear codes whose hulls have certain properties under the \emph{Euclidean} inner product have been given in \cite[Theorem 4]{Galindo2019} and \cite[Corollary 5]{Aly2006}, respectively. Our propagation rules also work, with proper modifications, when the inner product on the classical code ingredients is Euclidean. For brevity, this paper present the results when the inner product is Hermitian and the hulls are defined accordingly.

\item Linear codes with MDS Hermitian hulls correspond to optimal subsystem codes. In Remark \ref{rem:21} we show how to utilize such linear codes to construct Hermitian LCD codes. To the best of our knowledge, only one such construction is previously known from \cite{Qian2013}. Since then, researchers have paid more attention to finding the dimensions of Hermitian hulls of linear codes. A natural but somehow overlooked question in this topic is 
\begin{center}
How can we explicitly construct codes whose Hermitian hulls are MDS?
\end{center}
Our investigation to answer the question leads to the following results.
\begin{itemize}
    \item We study the Hermitian hulls of generalized Reed-Solomon (GRS) codes and provide a general framework to generate $k$-dimensional GRS codes whose hulls are $(k-1)$-dimensional MDS codes in Theorem \ref{GRSMDSHULL}. 
    \item The framework leads to four families of GRS codes with MDS Hermitian hulls. 
    \item Applying our new propagation rules on these families of GRS codes gives us the sixteen families of optimal EAQECCs in Theorem \ref{thmMDS} and the eight families of optimal subsystem codes in Theorem \ref{subMDS}.
    \item Tables \ref{table1}, \ref{table3}, \ref{table2}, and \ref{table4} confirm that most of the constructed optimal quantum codes have new parameters.
\end{itemize}

\item A linear code is Euclidean LCD if its Euclidean hull contains only the zero codeword $\{\mathbf{0}\}$. Such codes play a critical role in orthogonal direct sum masking \cite{Bringer2014,Carlet2016}. In \cite[Proposition 14]{Carlet2016}, Carlet and Guilley briefly discussed the construction of Euclidean LCD codes by puncturing and shortening. In a landmark paper \cite{Carlet2018}, Carlet {\it et al}. proved that any $q$-ary linear code is equivalent to a Euclidean LCD code if $q>3$ and each $q^2$-ary linear code is equivalent to a Hermitian LCD code if $q>2$. We improve on the results in those three references by studying the hulls of punctured and shortened codes.
\begin{itemize}
    \item In Theorem \ref{Hull} we formally state that, given a linear code over arbitrary finite field, we can produce two Euclidean (or Hermitian) LCD codes. One code comes from puncturing whereas the other is the result of shortening. 
    \item Moreover, we provide an \emph{easy-to-implement method} to construct Euclidean or Hermitian LCD codes from any given linear code. A theoretical determination of the exact minimum distances of punctured or shortened codes remains an open problem. For the time being, we resort to the second-best option of performing computations.
\end{itemize}
\end{enumerate}

After this introduction, we present new propagation rules on quantum codes in Section \ref{sec:PR}. We start by devising tools from classical codes based on the puncturing and shortening techniques in the first subsection. The remaining two subsections state the propagation rules on EAQECCs and for subsystem codes, in that order. Based on some measures of optimality and by deploying the tools and results obtained in Section \ref{sec:PR}, Section \ref{sec:CQC} focuses on the constructions of optimal quantum codes in the entanglement-assisted and subsystem frameworks. The first subsection supplies an extensive treatment on the construction of classical codes whose Hermitian hulls are MDS. Families of optimal EAQECCs and optimal subsystem codes, most of them with previously unknown ranges of parameters, are explicitly listed in the subsequent two subsections. We end with some concluding remarks in Section \ref{sec:conclu}.

\section{New propagation rules on quantum codes}\label{sec:PR}

This section presents new propagation rules on EAQECCs and subsystem codes. We assume, throughout, that the respective initial quantum codes are constructed based on Lemmas \ref{prop:two} and \ref{QSCH}.

\subsection{Tools from Punctured and Shortened Classical Codes}

Puncturing and shortening are essential propagation rules in the construction of new codes from old ones. Given a positive integer $n$, we use the shorthand $[n]$ for the set $\{1,\ldots,n\}$. Let $\cc$ be a (not necessarily linear) code over $\Ff_q$ of length $n$ and let $S$ be a subset of $[n]$. The \emph{punctured code} $\cc^S$ is obtained by deleting the components indexed by the set $S$ in each codeword of $\cc$. Let $\cc(S)$ be the collection of all codewords of $\cc$ with entries $0$ on the positions indexed by the set $S$. The \emph{shortened code} $\cc_S$ is derived by puncturing $\cc(S)$ on $S$. On a chosen set $S$, the length of both $\cc^S$ and $\cc_S$ is $n-|S|$.

Based on the Euclidean dual of a code $\cc$, the dimensions of its punctured and shortened codes are calculated in \cite[Theorem 1.5.7]{huffman2003} under certain conditions. For our purpose, we adjust the calculation to the Hermitian form.

\begin{lem}\label{pas}
Let $\cc$ be an $[n,k,d]_{q^2}$ code. If $S$ is a subset of $[n]$ with $|S|=s$, then the following assertions hold.
\begin{enumerate}
\item $(\cc^{\perp_{\rm H}})_S=(\cc^S)^{\perp_{\rm H}}$ and $(\cc^{\perp_{\rm H}})^S=(\cc_S)^{\perp_{\rm H}}$.
\item If $s<d$, then the respective dimensions of $\cc^S$ and $(\cc^{\perp_{\rm H}})_S$ are $k$ and $n-k-s$.
\item If $s<d$ and $\cc$ is an MDS code, then $\cc^S$ and $(\cc^{\perp_{\rm H}})_S$ are MDS codes.
\end{enumerate}
\end{lem}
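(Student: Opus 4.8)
The plan is to establish the three assertions in order, since each builds naturally on the previous one. The key technical tool throughout is the relationship between the Hermitian inner product and the coordinate deletions performed by puncturing and shortening; the main subtlety is keeping track of how the conjugation by the $q$-power Frobenius interacts with these operations, but since puncturing and shortening only delete or zero-out coordinates (they never mix coordinates), the Hermitian structure is preserved coordinate-wise.

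For assertion (1), I would argue directly from the definitions. To show $(\cc^{\perp_{\rm H}})_S = (\cc^S)^{\perp_{\rm H}}$, I would take a vector $\mathbf{u}$ supported off $S$ that is Hermitian-orthogonal to all of $\cc$, restrict it to the complement of $S$, and verify it is orthogonal to $\cc^S$; the reverse containment follows by padding a vector in $(\cc^S)^{\perp_{\rm H}}$ back to length $n$ with zeros on $S$ and checking Hermitian orthogonality against an arbitrary codeword of $\cc$. The crucial point is that inserting or deleting zero entries does not change the value of the Hermitian sum $\sum_i u_i v_i^q$, because the conjugation acts entrywise. The second identity $(\cc^{\perp_{\rm H}})^S = (\cc_S)^{\perp_{\rm H}}$ follows by applying the first identity to $\cc^{\perp_{\rm H}}$ in place of $\cc$, using $(\cc^{\perp_{\rm H}})^{\perp_{\rm H}} = \cc$, and interchanging the roles of puncturing and shortening. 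Concretely, I would verify that the two identities are dual to each other under the operation $\cc \mapsto \cc^{\perp_{\rm H}}$, which swaps $\cc^S \leftrightarrow (\cc_S)^{\perp_{\rm H}}$ and $\cc_S \leftrightarrow (\cc^S)^{\perp_{\rm H}}$.

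For assertion (2), the dimension count rests on the condition $s < d$. Since $d = {\rm wt}(\cc)$, any nonzero codeword of $\cc$ has weight at least $d > s$, so no nonzero codeword can vanish entirely on the complement of $S$; equivalently, the projection map that deletes the coordinates outside $S$ is injective on $\cc$ only in a trivial sense — the relevant statement is that the puncturing map (deleting coordinates in $S$) is injective on $\cc$, because a codeword in its kernel would be supported on $S$ and hence have weight at most $s < d$, forcing it to be zero. This injectivity gives $\dim \cc^S = k$. The dimension of $(\cc^{\perp_{\rm H}})_S$ then follows from assertion (1): since $(\cc^{\perp_{\rm H}})_S = (\cc^S)^{\perp_{\rm H}}$ has length $n - s$ and its dual $\cc^S$ has dimension $k$, we get $\dim (\cc^{\perp_{\rm H}})_S = (n - s) - k$.

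For assertion (3), assume $\cc$ is MDS, so $d = n - k + 1$ and the condition $s < d$ reads $s \le n - k$. The MDS property of $\cc^S$ follows because $\cc^S$ has length $n-s$, dimension $k$ by part (2), and minimum distance at least $d - s = (n-k+1) - s = (n-s) - k + 1$; combined with the Singleton bound (which forces the distance to be at most this value), $\cc^S$ is MDS. The bound $d(\cc^S) \ge d - s$ is the standard fact that puncturing on $s$ coordinates decreases the minimum distance by at most $s$. Finally, $(\cc^{\perp_{\rm H}})_S$ is MDS because it is the Hermitian dual of the MDS code $\cc^S$, and the dual of an MDS code is MDS. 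The main obstacle, if any, is purely bookkeeping: ensuring the Frobenius conjugation is handled consistently in assertion (1) and that the inequality $s < d$ is applied with the correct $d$ (that of $\cc$, not of its dual) when deducing injectivity; everything else reduces to the well-known Euclidean statements in \cite[Theorem 1.5.7]{huffman2003} transported verbatim to the Hermitian setting.
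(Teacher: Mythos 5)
Your proposal is correct. For assertions (1) and (2) it follows essentially the same route as the paper, which simply defers to the method of proof of \cite[Theorem 1.5.7]{huffman2003}: your direct verification that padding with zeros on $S$ and deleting coordinates in $S$ preserve the Hermitian sum $\sum_i u_i v_i^q$, your derivation of the second duality identity by substituting $\cc^{\perp_{\rm H}}$ for $\cc$, and your injectivity argument for the puncturing map when $s<d$ are exactly the Huffman--Pless argument transported to the Hermitian setting. The only divergence is in assertion (3): the paper invokes the characterization that $\cc$ is MDS if and only if it has a minimum weight codeword supported in any $d$ coordinates \cite[Theorem 4 in Chapter 11]{macwilliams1977}, which pins down the minimum distance of the punctured code exactly, whereas you combine the standard lower bound $d(\cc^S)\geq d-s$ (valid here because puncturing is injective) with the Singleton bound to force equality, and then handle $(\cc^{\perp_{\rm H}})_S=(\cc^S)^{\perp_{\rm H}}$ via the fact that the Hermitian dual of an MDS code is MDS. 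Both routes are standard and complete; yours has the mild advantage of not needing the support-distribution characterization of MDS codes, at the cost of quietly relying on the (true, but worth stating) fact that the dual-of-MDS-is-MDS property carries over from the Euclidean to the Hermitian dual because the entrywise Frobenius map is a weight-preserving bijection taking linear codes to linear codes.
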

\begin{proof}
One can prove each of the first two assertions by a method analogous to the one in the proof of \cite[Theorem 1.5.7]{huffman2003}. The third assertion follows from the fact that the code $\cc$ is MDS if and only if it has a minimum weight codeword in any $d$ coordinates 
\cite[Theorem 4 in Chapter 11]{macwilliams1977}.
\end{proof}

Let $\cc$ be an $[n,k,d]_q$ code with generator matrix $G$ and let $\rho$ be a permutation on the set $[n]$. Let $\mathbf{a} =(a_1, \ldots, a_n) \in \left(\Ff_q^*\right)^n$. The code 
\begin{equation}\label{eq:rhoc}
\rho_{\mathbf{a}}(\cc)=\{(a_1c_{\rho(1)},\ldots,a_nc_{\rho(n)}) :(c_1,\ldots,c_n)\in\cc\}
\end{equation}
is \emph{monomially equivalent} to $\cc$. If $a_1= \ldots = a_n = 1$, then $\rho_{\mathbf{a}}(\cc)$ is \emph{permutation equivalent} to $\cc$ and is often written in abbreviated form as $\rho(\cc)$. An \emph{information set} of $\cc$ is a set of $k$ coordinates such that its $k$ corresponding columns in $G$ are linearly independent. The following lemma, whose proof follows immediately from the definition of Hermitian hulls, determines the hulls of punctured codes.

\begin{lem}\label{lemeq}
Let $\cc$ be a $q^2$-ary linear code of length $n$ and let $\rho$ be a permutation on the set $[n]$. The Hermitian hull of the permutation equivalent code $\rho(\cc)$ of $\cc$ is $\rho({\rm Hull}_{\rm H}(\cc))$.
\end{lem}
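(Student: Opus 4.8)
The plan is to exploit the single structural fact that a coordinate permutation leaves the Hermitian inner product invariant, and then to read off the claim about hulls from the compatibility of an invertible linear map with intersections.

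First I would record that, for any $\mathbf{u},\mathbf{v}\in\Ff_{q^2}^n$,
\[
\langle \rho(\mathbf{u}),\rho(\mathbf{v})\rangle_{\rm H}
=\sum_{i=1}^{n}u_{\rho(i)}\,v_{\rho(i)}^{q}
=\sum_{j=1}^{n}u_{j}\,v_{j}^{q}
=\langle \mathbf{u},\mathbf{v}\rangle_{\rm H},
\]
where the middle equality is merely a reindexing of the sum by the bijection $j=\rho(i)$. The $q$-th power is applied entrywise and therefore commutes with the reordering, so no extra care is needed for the Hermitian twist. This reindexing identity is the only computation in the argument.

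Next I would use this invariance to identify the Hermitian dual of $\rho(\cc)$. If $\mathbf{u}\in\cc^{\perp_{\rm H}}$ and $\mathbf{v}\in\cc$, then $\langle \rho(\mathbf{u}),\rho(\mathbf{v})\rangle_{\rm H}=\langle \mathbf{u},\mathbf{v}\rangle_{\rm H}=0$, which shows $\rho(\cc^{\perp_{\rm H}})\subseteq(\rho(\cc))^{\perp_{\rm H}}$. Since $\rho$ is an invertible linear map on $\Ff_{q^2}^n$, it preserves dimensions; hence $\dim\rho(\cc^{\perp_{\rm H}})=n-\dim\cc=n-\dim\rho(\cc)=\dim(\rho(\cc))^{\perp_{\rm H}}$, and the inclusion is forced to be the equality $(\rho(\cc))^{\perp_{\rm H}}=\rho(\cc^{\perp_{\rm H}})$. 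Alternatively, applying the same inclusion to the coordinate permutation $\rho^{-1}$ yields the reverse containment directly, avoiding the dimension count.

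Finally I would combine these observations. Because $\rho$ is an injective linear map, it commutes with the intersection of subspaces, so
\[
{\rm Hull}_{\rm H}(\rho(\cc))
=\rho(\cc)\cap(\rho(\cc))^{\perp_{\rm H}}
=\rho(\cc)\cap\rho(\cc^{\perp_{\rm H}})
=\rho\bigl(\cc\cap\cc^{\perp_{\rm H}}\bigr)
=\rho({\rm Hull}_{\rm H}(\cc)),
\]
which is the desired statement. I do not anticipate any real obstacle: the entire content is the reindexing identity in the first step, and everything else is bookkeeping about an invertible linear map. The only point worth flagging explicitly is that the Hermitian (rather than Euclidean) form survives the permutation unchanged, which it does precisely because the $q$-th power acts coordinatewise and is therefore unaffected by the reordering.
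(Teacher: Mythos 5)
Your proof is correct: the paper itself offers no argument for this lemma, dismissing it as following ``immediately from the definition of Hermitian hulls,'' and your three steps (invariance of $\langle\cdot,\cdot\rangle_{\rm H}$ under coordinate permutation, hence $(\rho(\cc))^{\perp_{\rm H}}=\rho(\cc^{\perp_{\rm H}})$, hence the hull identity via compatibility of the injective map $\rho$ with intersections) are precisely the routine verification the authors have in mind. Nothing is missing; in particular your observation that the coordinatewise $q$-th power is unaffected by reindexing is the one point where the Hermitian case could conceivably differ from the Euclidean one, and you handle it correctly.
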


For each $\mathbf{a} =(a_1,\ldots,a_{n}) \in \mathbb{F}_{q^2}$, we denote by $\mathbf{a}^q$ the corresponding $(a_1^q,\ldots,a_{n}^q)$. Let $\mathcal{C}$ be an $[n,k,d]_{q^2}$ code with generator matrix $G$. Labelling the rows of $G$ as $\mathbf{a}_1,\ldots,\mathbf{a}_k$, we let $G^{\dagger}$ stand for the $n \times k$ matrix whose columns are $\mathbf{a}_1^q,\ldots,\mathbf{a}_k^q$. We call $G^{\dagger}$ the {\it Hermitian transpose} of $G$. Our first major result establishes that the Hermitian hulls of $\cc^S$ and $\cc_S$ are the shortened code $({\rm Hull}_{\rm H}(\cc))_S$. Analogous statements hold for Euclidean hulls and the proofs for the respective claims can be similarly supplied.

\begin{thm}\label{Hull}
Let $\cc$ be an $[n,k,d]_{q^2}$ code with Hermitian hull ${\rm Hull}_{\rm H}(\cc)$ of dimension $\ell$ and let $S \subseteq [n]$ be such that $|S|=s$. Then the following statements hold.
\begin{enumerate}
\item $({\rm Hull}_{\rm H}(\cc))_S\subseteq{\rm Hull}_{\rm H}(\cc^S)$ and $({\rm Hull}_{\rm H}(\cc))_S\subseteq{\rm Hull}_{\rm H}(\cc_S)$.
\item If $S$ is a subset of an information set of ${\rm Hull}_{\rm H}(\cc)$, then 
\begin{equation}\label{eq:samecodes}
{\rm Hull}_{\rm H}(\cc^S) = {\rm Hull}_{\rm H}(\cc_S) = 
({\rm Hull}_{\rm H}(\cc))_S \mbox{ with } 
{\rm dim}({\rm Hull}_{\rm H}(\cc^S))=\ell-s.
\end{equation}
\end{enumerate}
\end{thm}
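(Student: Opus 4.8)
The plan is to prove the two inclusions in (1) directly, then to settle (2) by first handling the punctured code via a rank computation on Gram matrices and finally deducing the shortened case by duality. For (1), the key observation is that if a vector $\mathbf{v}$ is zero on $S$, then puncturing and shortening it on $S$ produce the same vector $\mathbf{v}|_{[n]\setminus S}$. Taking $\mathbf{v}\in{\rm Hull}_{\rm H}(\cc)=\cc\cap\cc^{\perp_{\rm H}}$ that is zero on $S$, I would combine this with the identities $(\cc^{\perp_{\rm H}})_S=(\cc^S)^{\perp_{\rm H}}$ and $(\cc^{\perp_{\rm H}})^S=(\cc_S)^{\perp_{\rm H}}$ from Lemma \ref{pas} to verify that the resulting shortened vector lies in $\cc^S\cap(\cc^S)^{\perp_{\rm H}}$ and in $\cc_S\cap(\cc_S)^{\perp_{\rm H}}$. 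This routine check yields both $({\rm Hull}_{\rm H}(\cc))_S\subseteq{\rm Hull}_{\rm H}(\cc^S)$ and $({\rm Hull}_{\rm H}(\cc))_S\subseteq{\rm Hull}_{\rm H}(\cc_S)$.

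For (2), I would first invoke Lemma \ref{lemeq} to reduce, via a permutation equivalence, to the case where $S$ occupies a fixed block of coordinates. Writing $H={\rm Hull}_{\rm H}(\cc)$ and using that $S$ lies in an information set of $H$ (so the columns of a generator matrix of $H$ indexed by $S$ are independent), I would choose a generator matrix $G_H$ of $H$ whose $S$-columns form $\left(\begin{smallmatrix} I_s \\ 0\end{smallmatrix}\right)$ and extend it to a generator matrix $G=\left(\begin{smallmatrix} G_H \\ G'\end{smallmatrix}\right)$ of $\cc$. Since $H\subseteq\cc^{\perp_{\rm H}}$, the rows of $G_H$ are Hermitian-orthogonal to all of $\cc$, whence $G_H G^{\dagger}=0$ and the Gram matrix is block-diagonal, $GG^{\dagger}=\left(\begin{smallmatrix} 0 & 0 \\ 0 & M\end{smallmatrix}\right)$ with $M=G'(G')^{\dagger}$; the hull-dimension formula $\dim{\rm Hull}_{\rm H}(\cc)=k-\rank(GG^{\dagger})$ then forces $M$ to be nonsingular of size $k-\ell$. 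The normalized $S$-columns also give directly $\dim(({\rm Hull}_{\rm H}(\cc))_S)=\ell-s$.

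The heart of the argument is the punctured code. Let $\hat G$ denote the submatrix of $G$ on the coordinates outside $S$, so $\hat G$ generates $\cc^S$ and $\dim{\rm Hull}_{\rm H}(\cc^S)=\rank(\hat G)-\rank(\hat G\hat G^{\dagger})$. I would compute $\hat G\hat G^{\dagger}=GG^{\dagger}-G^{(S)}(G^{(S)})^{\dagger}$, where $G^{(S)}$ collects the $S$-columns of $G$; the normalization makes $G^{(S)}(G^{(S)})^{\dagger}$ explicit, and a Schur-complement evaluation, using the invertible $-I_s$ block it produces in the corner and recovering $M$ as the Schur complement, should give $\rank(\hat G\hat G^{\dagger})=s+k-\ell$. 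Since $\rank(\hat G)\le k$, this yields $\dim{\rm Hull}_{\rm H}(\cc^S)\le\ell-s$; together with the inclusion from (1) and $\dim(({\rm Hull}_{\rm H}(\cc))_S)=\ell-s$, the two spaces must coincide. I expect this Schur-complement rank computation to be the main obstacle, since it is precisely where the information-set hypothesis is consumed (it is what guarantees the $-I_s$ corner block and thereby the drop of exactly $s$ in the hull dimension).

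Finally, I would dispatch the shortened code by duality rather than repeating the computation. Using ${\rm Hull}_{\rm H}(\mathcal{D})={\rm Hull}_{\rm H}(\mathcal{D}^{\perp_{\rm H}})$ for any code $\mathcal{D}$, together with $(\cc_S)^{\perp_{\rm H}}=(\cc^{\perp_{\rm H}})^S$ from Lemma \ref{pas}, I can write ${\rm Hull}_{\rm H}(\cc_S)={\rm Hull}_{\rm H}((\cc^{\perp_{\rm H}})^S)$, which is the hull of a punctured code. Since ${\rm Hull}_{\rm H}(\cc^{\perp_{\rm H}})=H$ and $S$ lies in an information set of $H$, the punctured-code result just established, applied to $\cc^{\perp_{\rm H}}$, gives ${\rm Hull}_{\rm H}((\cc^{\perp_{\rm H}})^S)=H_S=({\rm Hull}_{\rm H}(\cc))_S$. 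This completes (2), with all three spaces equal to $({\rm Hull}_{\rm H}(\cc))_S$ of dimension $\ell-s$.
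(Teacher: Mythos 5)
Your proof is correct, and while Part (1) is the same direct verification as the paper's, your Part (2) takes a genuinely different route. The paper normalizes the \emph{entire} hull generator matrix to $\begin{pmatrix} I_\ell & A\end{pmatrix}$, writes $G=\left(\begin{smallmatrix} I_\ell & A\\ O & B\end{smallmatrix}\right)$, computes $G^S(G^S)^\dagger$ directly as a block-diagonal matrix of rank $k-\ell+s$, and then exhibits $\begin{pmatrix} I_{\ell-s} & \widetilde{A}\end{pmatrix}$ explicitly as a generator matrix of each of ${\rm Hull}_{\rm H}(\cc^S)$, ${\rm Hull}_{\rm H}(\cc_S)$, and $({\rm Hull}_{\rm H}(\cc))_S$, running a parallel computation for the shortened code. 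You normalize only the $S$-columns, evaluate $\rank(\hat{G}\hat{G}^\dagger)=s+k-\ell$ via the difference identity $\hat{G}\hat{G}^\dagger=GG^\dagger-G^{(S)}(G^{(S)})^\dagger$ and a Schur complement (which does work out: the complement of the $-I_s$ block collapses back to $M$), and conclude by a squeeze --- the upper bound $\dim{\rm Hull}_{\rm H}(\cc^S)\le\ell-s$ against the inclusion and the count $\dim(({\rm Hull}_{\rm H}(\cc))_S)=\ell-s$ from Part (1); you then transfer the shortened case by duality, ${\rm Hull}_{\rm H}(\cc_S)={\rm Hull}_{\rm H}\bigl((\cc_S)^{\perp_{\rm H}}\bigr)={\rm Hull}_{\rm H}\bigl((\cc^{\perp_{\rm H}})^S\bigr)=({\rm Hull}_{\rm H}(\cc^{\perp_{\rm H}}))_S$, instead of recomputing. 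Your route buys two things: it never needs $\dim\cc^S=k$ in advance (the inequality $\rank(\hat{G})\le k$ suffices, whereas the paper's inference of the hull dimension from $\rank(G^S(G^S)^\dagger)$ tacitly assumes $G^S$ has full row rank, a fact the paper only establishes afterwards in Lemma \ref{pasdim}), and the duality step eliminates the second computation entirely. What the paper's longer argument buys is the explicit generator matrix of the common hull, which it reuses later (Remark \ref{Remark21} and the proof of Theorem \ref{QSCP}). Two points you should spell out in a final write-up: the hull-dimension formula applied to $\hat{G}$ must be stated for a spanning matrix that may fail to have full row rank, namely $\dim{\rm Hull}_{\rm H}(\cc^S)=\rank(\hat{G})-\rank(\hat{G}\hat{G}^\dagger)$, which holds because the left kernel of $\hat{G}$ is contained in that of $\hat{G}\hat{G}^\dagger$; and the duality step rests on ${\rm Hull}_{\rm H}(\mathcal{D})={\rm Hull}_{\rm H}(\mathcal{D}^{\perp_{\rm H}})$, immediate from $(\mathcal{D}^{\perp_{\rm H}})^{\perp_{\rm H}}=\mathcal{D}$.
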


\begin{proof}
We prove the statements in their given order or appearance.
\begin{enumerate}
    \item Let $\mathbf{c}$ be a codeword of $({\rm Hull}_{\rm H}(\cc))_S$ whose entries in ${\rm Hull}_{\rm H}(\cc)$ are indexed by the set $[n] \setminus S$. We can recover the corresponding $\widetilde{\mathbf{c}}\in{\rm Hull}_{\rm H}(\cc)$ by inserting $0$s in the coordinate positions indexed by $S$. Since $\widetilde{\mathbf{c}}\in\cc$, we have $\mathbf{c}\in\cc_S$ and $\mathbf{c}\in\cc^S$. Since $\widetilde{\mathbf{c}}\in\cc^{\perp_{\rm H}}$, Lemma \ref{pas} confirms that $\mathbf{c}\in(\cc^{\perp_{\rm H}})_S=(\cc^S)^{\perp_{\rm H}}$. Hence, $\mathbf{c}\in {\rm Hull}_{\rm H}(\cc^S)$, which implies $({\rm Hull}_{\rm H}(\cc))_S\subseteq {\rm Hull}_{\rm H}(\cc^S)$. Since $(\cc^{\perp_{\rm H}})_S\subseteq(\cc^{\perp_{\rm H}})^S$, we have $\mathbf{c}\in(\cc^{\perp_{\rm H}})^S$. By Lemma \ref{pas}, $\mathbf{c}\in(\cc^{\perp_{\rm H}})^S=(\cc_S)^{\perp_{\rm H}}$. Since $\mathbf{c}\in\cc_S$, we have $\mathbf{c}\in {\rm Hull}_{\rm H}(\cc_S)$. Thus, $({\rm Hull}_{\rm H}(\cc))_S\subseteq{\rm Hull}_{\rm H}(\cc_S)$.

    \item By Lemma \ref{lemeq}, we can assume without loss of generality that $\begin{pmatrix} I_{\ell} & A \end{pmatrix}$ is a generator matrix of ${\rm Hull}_{\rm H}(\cc)$. We can then stipulate a $k\times n$ generator matrix $G$ of $\cc$ in the form of
    \begin{equation*}\label{eq:GenG}
    G=\begin{pmatrix}
    I_{\ell} & A \\
    O & B
    \end{pmatrix},
    \end{equation*}
    with $O$ being the appropriate zero matrix. It is immediate to verify that $I_{\ell}+AA^{\dagger}$ and $AB^{\dagger}$ are zero matrices and $BB^{\dagger}$ has full rank. Let $\widetilde{I}_{\ell,S}$ be the matrix constructed by deleting the columns of $I_\ell$ whose indices are in $S$. Then the punctured code $\cc^S$ has generator matrix
    \begin{equation*}
    G^S=\begin{pmatrix}
    \widetilde{I}_{\ell,S} & A \\
    O & B
    \end{pmatrix}.
    \end{equation*}
    Thus, we obtain 
    \begin{equation*}
    G^S(G^S)^{\dagger}=\begin{pmatrix}
    \widetilde{I}_{\ell,S}\widetilde{I}_{\ell,S}^{\dagger}+AA^{\dagger} & AB^{\dagger} \\
    BA^{\dagger} & BB^{\dagger}
    \end{pmatrix}
    =\begin{pmatrix}
    \widetilde{I}_{\ell,S}\widetilde{I}_{\ell,S}^{\dagger}-I_\ell & O \\
    O & BB^{\dagger}
    \end{pmatrix}.
    \end{equation*}
    Since $\rank \left(G^S(G^S)^{\dagger}\right)= k-\ell+s$, the dimension of ${\rm Hull}_{\rm H}(\cc^S)$ is $\ell-s$. Deleting the rows of the matrix $\begin{pmatrix} \widetilde{I}_{\ell,S} & A \end{pmatrix}$ indexed by $S$ results in an $(\ell-s)\times (n-s)$ matrix $\begin{pmatrix} I_{\ell-s} & \widetilde{A} \end{pmatrix}$. By how the matrix $A$ is defined, we infer that $\begin{pmatrix} I_{\ell-s} & \widetilde{A} \end{pmatrix} \, (G^S)^{\dagger}$ is the zero matrix. This implies that the code generated by $\begin{pmatrix} I_{\ell-s} & \widetilde{A} \end{pmatrix}$ is a subcode of $(\cc^S)^{\perp_{\rm H}}$. Hence, ${\rm Hull}_{\rm H}(\cc^S)$ has generator matrix $\begin{pmatrix} I_{\ell-s} & \widetilde{A} \end{pmatrix}$.

    Since it is clear that the shortened code $\cc_S$ has generator matrix
    \begin{equation*}
    G_S=\begin{pmatrix}
    I_{\ell-s} & \widetilde{A} \\
    O & B
    \end{pmatrix},
    \end{equation*}
    an argument similar to the one we have just used can confirm that $\begin{pmatrix} I_{\ell-s} & \widetilde{A} \end{pmatrix}$ indeed generates ${\rm Hull}_{\rm H}(\cc_S)$.

    Finally, we consider the shortened code $({\rm Hull}_{\rm H}(\cc))_S$ and note that ${\rm Hull}_{\rm H}(\cc)$ has generator matrix $\begin{pmatrix} I_{\ell} & A \end{pmatrix}$ with $S\subseteq[\ell]$. Deleting the rows and columns of $\begin{pmatrix} I_{\ell} & A \end{pmatrix}$ which are indexed by $S$ yields an $(\ell-s)\times (n-s)$ matrix $\begin{pmatrix} I_{\ell-s} & \widetilde{A} \end{pmatrix}$ that generates $({\rm Hull}_{\rm H}(\cc))_S$. The desired conclusion follows.
\end{enumerate}
\end{proof}

\begin{remark}
It is not true in general that ${\rm Hull}_{\rm H}(\cc^S) = 
{\rm Hull}_{\rm H}(\cc_S) = ({\rm Hull}_{\rm H}(\cc))_S$ for any $S\subseteq[n]$. The following is a counterexample. Let $\Ff_4 = \{0,1,\omega,\omega^2\}$, with $\omega$ being a root of $1 +x + x^2 \in \Ff_2[x]$. Let $\cc$ be a $[28,10,9]_{4}$ code generated by
\[
\begingroup 
\setlength\arraycolsep{3pt}
\begin{pmatrix}
 1&0&0&0&0&0&0&0&w&0&0&1&0&w^2&w&w^2&w&w^2&0&0&0&w^2&w&w^2&1&w&1&w\\
 0&1&0&0&0&0&0&0&1&0&0&1&w&w&w^2&w&1&1&w^2&w^2&w&0&1&0&1&w&1&1\\
 0&0&1&0&0&0&0&0&w&0&0&w^2&w&w&w&1&0&w&w&w&1&w&1&w&w^2&1&1&0\\
 0&0&0&1&0&0&0&0&w&0&0&w&1&w^2&w&w&0&0&w^2&w&w&0&0&w^2&w&1&w&w^2\\
 0&0&0&0&1&0&0&0&1&0&0&w&w^2&0&1&1&0&1&0&1&1&0&w^2&1&0&w&1&w\\
 0&0&0&0&0&1&0&0&w&0&0&0&1&w^2&0&1&w^2&0&w&w^2&w^2&w&1&w^2&w&w&w^2&w^2\\
 0&0&0&0&0&0&1&0&w&0&0&1&0&w^2&w^2&1&w^2&w^2&w&0&0&w&w&0&w^2&w&0&w\\
 0&0&0&0&0&0&0&1&w&0&0&0&w^2&w&1&w^2&0&0&w^2&w&w&w^2&0&w^2&w&0&0&w^2\\
 0&0&0&0&0&0&0&0&0&1&0&1&w&1&w^2&0&w&1&1&1&0&w^2&w&w&w&w^2&w^2&w\\
 0&0&0&0&0&0&0&0&0&0&1&0&w&w^2&w&w&w&w^2&0&w^2&w&w^2&1&1&1&w&w^2&w
\end{pmatrix}.
\endgroup
\]
The codes ${\rm Hull}_{\rm H}(\cc)$ and ${\rm Hull}_{\rm H}(\cc)^{\perp_{\rm H}}$ have respective parameters $[28,1,20]_4$ and $[28,27,1]_4$. If $S=[6]$, then the dimensions of $({\rm Hull}_{\rm H}(\cc))_S$, ${\rm Hull}_{\rm H}(\cc_S)$, and ${\rm Hull}_{\rm H}(\cc^S)$ are $0$, $1$, and $2$, respectively.
\end{remark}


\begin{remark}\label{rem:21}
The conclusion stated as \eqref{eq:samecodes} in Statement 2) of Theorem \ref{Hull} was established based on an information set of the hull of a code $\cc$. The same conclusion holds when we assume that ${\rm Hull}_{\rm H}(\cc)^{\perp_{\rm H}}$ is a code with parameters $[n,n-\ell,d_2]_{q^2}$ and $S$ is an arbitrary subset of $[n]$ with $s<d_2$. The crucial move is to find the minimum distance of ${\rm Hull}_{\rm H}(\cc)$ of an $[n,k,d]_{q^2}$ code $\cc$. If ${\rm Hull}_{\rm H}(\cc)^{\perp_{\rm H}}$ has parameters $[n,n-\ell,d_2]_{q^2}$, then either puncturing or shortening on any $s < d_2$ coordinates of $\cc$ gives a new code whose Hermitian hull is of dimension $\ell-s$. For $\ell - s$ to traverse a large range of values, {\it e.g.}, from $\ell-1$ to $0$, we want $d_2$ to be as close to $\ell+1$ as possible. If $d_2=\ell+1$, then ${\rm Hull}_{\rm H}(\cc)$ is MDS. Thus, we aim to construct a code $\cc$ whose Hermitian hull is MDS. Once we have such a code, the dimensions of the Hermitian hulls of its punctured and shortened codes cover all values less than the the dimension of ${\rm Hull}_{\rm H}(\cc)$.
\end{remark}

The following corollary presents the parameters of punctured and shortened codes constructed based on Theorem \ref{Hull}.

\begin{lem}\label{pasdim}
Keeping the notation in Theorem \ref{Hull}, we derive the following results.
\begin{enumerate}
\item If $S$ is a subset of an information set of ${\rm Hull}_{\rm H}(\cc)$, then $\cc^S$ and $\cc_S$ have respective parameters $[n-s,k,\geq d-s]_{q^2}$ and $[n-s,k-s,\geq d]_{q^2}$.
\item If $S$ is an arbitrary subset of $[n]$ such that $ |S|=s<d_2$, then $\cc_S$ is an $[n-s,k-s,\geq d]_{q^2}$ code. If $d_2\leq d$, then $\cc^S$ is an $[n-s,k,\geq d-s]_{q^2}$ code.
\end{enumerate}
\end{lem}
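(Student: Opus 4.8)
The plan is to reduce every claim to a single linear-algebra statement: for a suitably chosen generator matrix, the $s$ columns indexed by $S$ are linearly independent. Once this is in hand the lengths are immediate (both $\cc^S$ and $\cc_S$ have length $n-s$), the dimensions follow from the rank behaviour of puncturing and shortening, and the distance bounds come from two elementary facts: shortening never decreases the minimum distance, while puncturing on $s$ coordinates decreases it by at most $s$ as soon as the dimension is preserved (since then the puncturing map is an isomorphism).

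For Statement 1 I would invoke Lemma \ref{lemeq} exactly as in the proof of Theorem \ref{Hull} to assume that $\cc$ has a generator matrix
\[
G=\begin{pmatrix} I_{\ell} & A \\ O & B \end{pmatrix},
\]
with $\begin{pmatrix} I_{\ell} & A \end{pmatrix}$ generating ${\rm Hull}_{\rm H}(\cc)$ and $S\subseteq[\ell]$. For $\cc_S$, the matrix $G_S=\begin{pmatrix} I_{\ell-s} & \widetilde A \\ O & B \end{pmatrix}$ from Theorem \ref{Hull} has $k-s$ manifestly independent rows, so $\dim\cc_S=k-s$; and since each codeword of $\cc_S$ lifts to a codeword of $\cc$ that vanishes on $S$ with the same weight, $d(\cc_S)\ge d$. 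For $\cc^S$ I would avoid counting the rows of $G^S$ directly and instead dualize: by Lemma \ref{pas}(1), $(\cc^S)^{\perp_{\rm H}}=(\cc^{\perp_{\rm H}})_S$, and because ${\rm Hull}_{\rm H}(\cc)\subseteq\cc^{\perp_{\rm H}}$ a generator matrix of $\cc^{\perp_{\rm H}}$ can be chosen to contain $\begin{pmatrix} I_{\ell} & A \end{pmatrix}$ among its rows. As $S\subseteq[\ell]$, the $S$-columns of that matrix are standard basis vectors, hence independent, so $\dim(\cc^{\perp_{\rm H}})_S=(n-k)-s$ and thus $\dim\cc^S=(n-s)-(n-k-s)=k$; injectivity of puncturing then gives $d(\cc^S)\ge d-s$.

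For Statement 2 the one extra ingredient is the translation of $s<d_2$ into column independence. I would use the standard fact that the minimum distance $d_2$ of ${\rm Hull}_{\rm H}(\cc)^{\perp_{\rm H}}$ equals the least number of linearly dependent columns of a generator matrix $G_H$ of ${\rm Hull}_{\rm H}(\cc)$, since $G_H$ serves as a Hermitian parity-check matrix for ${\rm Hull}_{\rm H}(\cc)^{\perp_{\rm H}}$ and raising entries to the $q$-th power preserves linear independence over $\Ff_{q^2}$. Hence $s<d_2$ forces every $s$ columns of $G_H$ to be independent, in particular the $S$-columns for our arbitrary $S$. Choosing a generator matrix of $\cc$ whose first $\ell$ rows are $G_H$ (possible as the hull is a subcode of $\cc$), its $S$-columns contain these independent columns and so have rank $s$; shortening then imposes $s$ independent conditions, giving $\dim\cc_S=k-s$ and, as before, $d(\cc_S)\ge d$. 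For $\cc^S$ the extra hypothesis $d_2\le d$ is precisely what chains $s<d_2\le d$ into $s<d$, at which point Lemma \ref{pas}(2) gives $\dim\cc^S=k$ and injectivity of puncturing gives $d(\cc^S)\ge d-s$.

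The main obstacle is the bookkeeping in Statement 2: pinning the dimensions down \emph{exactly} (rather than as inequalities) hinges entirely on the column-independence input, and the two codes genuinely require different hypotheses. Shortening controls the distance for free and needs only $s<d_2$ to fix the dimension, whereas puncturing needs the dimension preserved through $s<d$, which is why the clause $d_2\le d$ appears only for $\cc^S$. I would take care to flag this asymmetry, to verify consistency with the hull-dimension count already recorded in Remark \ref{rem:21}, and to note that the distance bounds remain valid throughout (vacuously so when $d-s\le 0$) once the dimensions are established.
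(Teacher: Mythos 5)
Your proof is correct, and at its core it follows the same route as the paper: handle the shortened code directly and get the punctured code by dualizing through the identity $(\cc^S)^{\perp_{\rm H}}=(\cc^{\perp_{\rm H}})_S$ of Lemma \ref{pas}, exploiting that ${\rm Hull}_{\rm H}(\cc)$ is a subcode of both $\cc$ and $\cc^{\perp_{\rm H}}$. The paper's proof of Statement 1 is the information-set version of your matrix computation: since $S$ lies in an information set of $\cc$, $\dim \cc_S=k-s$, so $(\cc_S)^{\perp_{\rm H}}=(\cc^{\perp_{\rm H}})^S$ has dimension $n-k$; and since $S$ also lies in an information set of $\cc^{\perp_{\rm H}}$, replacing $\cc$ by $\cc^{\perp_{\rm H}}$ gives $\dim\cc^S=k$. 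Where you genuinely diverge is Statement 2 for $\cc_S$: the paper merely cites assertion 2) of Lemma \ref{pas}, which, unpacked, means applying that assertion to $\cc^{\perp_{\rm H}}$ after noting that $\cc^{\perp_{\rm H}}\subseteq{\rm Hull}_{\rm H}(\cc)^{\perp_{\rm H}}$ forces $d_2\le d(\cc^{\perp_{\rm H}})$ and hence $s<d(\cc^{\perp_{\rm H}})$. You instead translate $s<d_2$ into linear independence of the $S$-columns of the hull's generator matrix (parity-check duality plus the Frobenius automorphism), which shows that $S$ is contained in an information set of ${\rm Hull}_{\rm H}(\cc)$ and thereby reduces Statement 2 to Statement 1; this is exactly the mechanism recorded in Remark \ref{rem:21}, and it is more self-contained than the paper's citation, which leaves the needed containment implicit. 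Two minor points of precision rather than gaps: the $S$-columns of your chosen generator matrix of $\cc^{\perp_{\rm H}}$ are not literally standard basis vectors, since the rows below the hull block may be nonzero in those positions (unless you first clear them by row reduction), but their top $\ell$ entries are standard basis vectors, which already gives the independence you need; and the hypothesis $d_2\le d$ whose asymmetry you flag is in fact automatic, because ${\rm Hull}_{\rm H}(\cc)^{\perp_{\rm H}}=\cc+\cc^{\perp_{\rm H}}\supseteq\cc$, so the asymmetry sits in the lemma's phrasing rather than in the mathematics.
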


\begin{proof}
We rely on Lemma \ref{pas} and Theorem \ref{Hull}.
\begin{enumerate}
    \item The lengths and minimum distances of $\cc^S$ and $\cc_S$ follow by definition. Since $S$ is a subset of an information set of $\cc$, the dimension of $\cc_S$ is $k-s$. Hence, $(\cc_S)^{\perp_{\rm H}}$ has length $n-s$ and dimension $n-k$. By Lemma \ref{pas}, we have $(\cc_S)^{\perp_{\rm H}}=(\cc^{\perp_{\rm H}})^S$, which implies that the dimension of $(\cc^{\perp_{\rm H}})^S$ is $n-k$. We know that, if $S$ is a subset of an information set of $\cc$, then the respective dimensions of $\cc_S$ and $(\cc^{\perp_{\rm H}})^S$ are $k-s$ and $n-k$. Since $S$ is a subset of an information set of $\cc^{\perp_{\rm H}}$, replacing $\cc$ by $\cc^{\perp_{\rm H}}$ confirms that the dimension of $\cc^S$ is $k$.

    \item The desired result follows directly from assertion $2$) in Lemma \ref{pas}.
\end{enumerate}
\end{proof}

\subsection{Propagation Rules on EAQECCs}

Given an EAQECC constructed by Lemmas \ref{prop:two}, we can now present two general propagation rules based on the result on the Hermitian hulls of punctured and shortened codes in Theorem \ref{Hull}.

\begin{thm}\label{EAQP}
Let $\mathcal{Q}$ be an $[[n,\kappa, \delta; c]]_q$ EAQECC constructed by a corresponding linear code $\cc$ with $\ell$-dimensional Hermitian hull from Lemma \ref{prop:two}. The code $\mathcal{Q}$ implies the existence of an $[[n-s,\kappa, \geq\delta; c+s]]_q$ EAQECCs for each $s\in[\ell]$.
\end{thm}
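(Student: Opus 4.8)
The plan is to reduce the claim entirely to the classical side: realize the initial EAQECC $\mathcal{Q}$ as coming from an $[n,k,d]_{q^2}$ code $\cc$ via Lemma \ref{prop:two}, puncture $\cc$ on a carefully chosen set $S$ of size $s$, and then re-apply Lemma \ref{prop:two} to the punctured code $\cc^S$. Writing $\ell = \dim({\rm Hull}_{\rm H}(\cc))$, Lemma \ref{prop:two} gives $c = k-\ell$, $\kappa = n-2k+c = n-k-\ell$, and $\delta = {\rm wt}(\cc^{\perp_{\rm H}}\setminus {\rm Hull}_{\rm H}(\cc))$. The target parameters $[[n-s,\kappa,\geq\delta;c+s]]_q$ therefore amount to producing a length-$(n-s)$, dimension-$k$ code whose Hermitian hull has dimension $\ell - s$ and whose Hermitian dual attains minimum weight at least $\delta$ outside that hull.

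First I would fix $S$ to be a subset of size $s$ of an information set of ${\rm Hull}_{\rm H}(\cc)$; this is possible precisely because $s \in [\ell]$ and $\dim({\rm Hull}_{\rm H}(\cc)) = \ell$. With this choice, Theorem \ref{Hull}(2) applies and yields ${\rm Hull}_{\rm H}(\cc^S) = ({\rm Hull}_{\rm H}(\cc))_S$ with $\dim({\rm Hull}_{\rm H}(\cc^S)) = \ell - s$, while Lemma \ref{pasdim}(1) gives that $\cc^S$ is an $[n-s,k,\geq d-s]_{q^2}$ code. Feeding $\cc^S$ into Lemma \ref{prop:two} then produces an EAQECC whose entanglement and dimension parameters are immediate bookkeeping: $c' = k - (\ell-s) = c+s$ and $\kappa' = (n-s) - 2k + c' = n-2k+c = \kappa$, exactly as required.

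The one genuinely non-trivial point, and the step I expect to be the main obstacle, is the distance inequality $\delta' := {\rm wt}\bigl((\cc^S)^{\perp_{\rm H}}\setminus {\rm Hull}_{\rm H}(\cc^S)\bigr) \geq \delta$. Here I would invoke Lemma \ref{pas}(1) to rewrite $(\cc^S)^{\perp_{\rm H}} = (\cc^{\perp_{\rm H}})_S$ and combine it with the hull identity ${\rm Hull}_{\rm H}(\cc^S) = ({\rm Hull}_{\rm H}(\cc))_S$ from the previous step. Any nonzero $\mathbf{v} \in (\cc^{\perp_{\rm H}})_S \setminus ({\rm Hull}_{\rm H}(\cc))_S$ lifts, by reinserting zeros on the coordinates of $S$, to a vector $\widetilde{\mathbf{v}} \in \cc^{\perp_{\rm H}}$ of the same weight. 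The crux is to argue $\widetilde{\mathbf{v}} \notin {\rm Hull}_{\rm H}(\cc)$: were it in the hull, then, since it vanishes on $S$, its shortening would land in $({\rm Hull}_{\rm H}(\cc))_S$, forcing $\mathbf{v} \in ({\rm Hull}_{\rm H}(\cc))_S$ and contradicting the choice of $\mathbf{v}$. Hence $\widetilde{\mathbf{v}} \in \cc^{\perp_{\rm H}}\setminus{\rm Hull}_{\rm H}(\cc)$, so ${\rm wt}(\mathbf{v}) = {\rm wt}(\widetilde{\mathbf{v}}) \geq \delta$; taking the minimum over all such $\mathbf{v}$ gives $\delta' \geq \delta$.

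Putting these together, Lemma \ref{prop:two} applied to $\cc^S$ delivers precisely an $[[n-s,\kappa,\geq\delta;c+s]]_q$ EAQECC, and since $S$ may be taken of any size $s \in [\ell]$, the construction runs over the full claimed range. I would remark that purity of $\mathcal{Q}$ is never used: the weight argument only needs the shortening to preserve weights and to send the complement of the hull into the complement of the shortened hull, which is exactly why this first rule continues to hold when $\mathcal{Q}$ is not pure.
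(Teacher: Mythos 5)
Your proposal is correct and follows essentially the same route as the paper's own proof: choose $S$ inside an information set of ${\rm Hull}_{\rm H}(\cc)$, use Theorem \ref{Hull} and Lemma \ref{pasdim} to get the parameters and hull dimension of $\cc^S$, rewrite $(\cc^S)^{\perp_{\rm H}}=(\cc^{\perp_{\rm H}})_S$ via Lemma \ref{pas}, and run the zero-reinsertion lifting argument to get the weight bound before re-applying Lemma \ref{prop:two}. The only cosmetic differences are that you prove just the one inclusion needed for the bound (the paper establishes the full set equality $(\cc^{\perp_{\rm H}})_S\setminus({\rm Hull}_{\rm H}(\cc))_S=(\cc^{\perp_{\rm H}}\setminus{\rm Hull}_{\rm H}(\cc))_S$) and that you make the bookkeeping $c'=c+s$, $\kappa'=\kappa$ explicit, which the paper leaves implicit.
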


\begin{proof}
By Lemma \ref{prop:two}, there exists an $[n,k,d]_{q^2}$ linear code $\cc$ with $\dim_{\Ff_{q^2}}({\rm Hull}_{\rm H}(\cc))=\ell$. Let $S$ be a subset of an information set of ${\rm Hull}_{\rm H}(\cc)$ with $|S|=s$. By Theorem \ref{Hull} and Lemma \ref{pasdim}, we know that $\cc^S$ is an $[n-s,k]_{q^2}$ code with $\dim_{\Ff_{q^2}}({\rm Hull}_{\rm H}(\cc^S))=\ell-s$. It follows from Lemma \ref{pas} and Theorem \ref{Hull} that 
\[
{\rm wt}\left((\cc^S)^{\perp_{\rm H}}\setminus{\rm Hull}_{\rm H}(\cc^S)\right)={\rm wt}\left((\cc^{\perp_{\rm H}})_S\setminus({\rm Hull}_{\rm H}(\cc))_S\right).
\]

We claim that $(\cc^{\perp_{\rm H}})_S\setminus({\rm Hull}_{\rm H}(\cc))_S=\left(\mathcal{C}^{\perp_{\rm H}} \setminus {\rm Hull}_{\rm H}(\cc)\right)_S$. Let $\mathbf{c}$ be a codeword in $(\cc^{\perp_{\rm H}})_S\setminus({\rm Hull}_{\rm H}(\cc))_S$ whose entries are indexed by the set $[n]\setminus S$. We recover $\widetilde{\mathbf{c}}$ of length $n$ from the codeword $\mathbf{c}$ by inserting $0$s in the coordinate positions indexed by $S$. It is evident that $\widetilde{\mathbf{c}}\in\mathcal{C}^{\perp_{\rm H}}$. If $\widetilde{\mathbf{c}}\in{\rm Hull}_{\rm H}(\cc)$, then $\mathbf{c}\in({\rm Hull}_{\rm H}(\cc))_S$, which is a contradiction. Hence, $\widetilde{\mathbf{c}}\in\mathcal{C}^{\perp_{\rm H}} 
\setminus {\rm Hull}_{\rm H}(\cc)$. By how $\widetilde{\mathbf{c}}$ is defined, $\mathbf{c}\in\left(\mathcal{C}^{\perp_{\rm H}} \setminus {\rm Hull}_{\rm H}(\cc)\right)_S$, which implies 
\[
(\cc^{\perp_{\rm H}})_S\setminus({\rm Hull}_{\rm H}(\cc))_S \subseteq\left(\mathcal{C}^{\perp_{\rm H}} \setminus {\rm Hull}_{\rm H}(\cc)\right)_S.
\]
Conversely, the inclusion $\left(\mathcal{C}^{\perp_{\rm H}} \setminus {\rm Hull}_{\rm H}(\cc)\right)_S \subseteq (\cc^{\perp_{\rm H}})_S \setminus({\rm Hull}_{\rm H}(\cc))_S$ can be shown in a similar manner.

One can quickly confirm that 
\[
{\rm wt}\left((\cc^S)^{\perp_{\rm H}}\setminus{\rm Hull}_{\rm H}(\cc^S)\right)={\rm wt}\left(\left(\mathcal{C}^{\perp_{\rm H}} \setminus {\rm Hull}_{\rm H}(\cc)\right)_S\right)\geq\delta.
\]
By Lemma \ref{prop:two}, the code $\cc^S$ leads to an EAQECC with parameters $[[n-s,\kappa, \geq\delta; c+s]]_q$. 
\end{proof}

\begin{remark}
The propagation rule in \cite{Galindo2019,Galindo2021} for EAQECCs says that an $[[n,\kappa,\delta]]_q$ QECC constructed from $\cc$ by Lemma \ref{prop:two} gives rise to an $[[n-s,\kappa,\geq\delta; s]]_q$ EAQECC for each $s\in[\delta-1]$. Grassl, Huber, and Winter in \cite{GraHubWin2022} proved that a pure $[[n,\kappa,\delta]]_q$ QECC implies the existence of an $[[n-s,\kappa,\geq\delta; s]]_q$ EAQECC for each $s\in[\delta-1]$.
Theorem \ref{EAQP} generalizes the collective results of \cite{Galindo2019,Galindo2021,GraHubWin2022} in two directions. First, the initial code is no longer a QECC. Second, the variable $s$ covers a wider range, that is, the number of maximally entangled states traverses all possible values $> c$.
\end{remark}

\begin{thm}\label{EAQP1}
Let $\mathcal{Q}$ be a pure $[[n,\kappa, \delta; c]]_q$ EAQECC constructed via a corresponding linear code $\cc$ with $\ell$-dimensional Hermitian hull by Lemma \ref{prop:two}. The code $\mathcal{Q}$ implies the existence of an EAQECC $\widetilde{\mathcal{Q}}$ with parameters $[[n-s,\kappa+s, \geq\delta-s; c]]_q$ for each $s\in[\ell]$.
\end{thm}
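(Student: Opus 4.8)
The plan is to obtain $\widetilde{\mathcal{Q}}$ by \emph{shortening} the classical ingredient $\cc$, dual to the puncturing used in Theorem \ref{EAQP}. First I would recover, via Lemma \ref{prop:two}, the underlying $[n,k,d]_{q^2}$ code $\cc$ with $\dim_{\Ff_{q^2}}({\rm Hull}_{\rm H}(\cc)) = \ell$, so that $c = k - \ell$ and $\kappa = n - 2k + c$. Choosing $S$ to be a subset of an information set of ${\rm Hull}_{\rm H}(\cc)$ with $|S| = s$, where necessarily $s \in [\ell]$, Theorem \ref{Hull} and Lemma \ref{pasdim} guarantee that $\cc_S$ is an $[n-s,\,k-s]_{q^2}$ code whose Hermitian hull equals $({\rm Hull}_{\rm H}(\cc))_S$ and has dimension $\ell - s$.

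Feeding $\cc_S$ back into Lemma \ref{prop:two} produces an EAQECC whose entanglement parameter is $(k-s) - (\ell - s) = k - \ell = c$ and whose dimension is $(n-s) - 2(k-s) + c = \kappa + s$; these already match the $c$ and $\kappa + s$ required by the statement, so the length and the first two quantum parameters fall out formally. What remains is the distance. By Lemma \ref{prop:two} it equals $\delta' = {\rm wt}\left((\cc_S)^{\perp_{\rm H}} \setminus {\rm Hull}_{\rm H}(\cc_S)\right)$, and I would rewrite this using $(\cc_S)^{\perp_{\rm H}} = (\cc^{\perp_{\rm H}})^S$ from Lemma \ref{pas} together with ${\rm Hull}_{\rm H}(\cc_S) = ({\rm Hull}_{\rm H}(\cc))_S$ from Theorem \ref{Hull}, so that $\delta' = {\rm wt}\left((\cc^{\perp_{\rm H}})^S \setminus ({\rm Hull}_{\rm H}(\cc))_S\right)$.

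The crux, and the step I expect to be the main obstacle, is bounding this punctured weight below by $\delta - s$, and this is exactly where purity must enter. I would take any $\mathbf{c} \in (\cc^{\perp_{\rm H}})^S \setminus ({\rm Hull}_{\rm H}(\cc))_S$; since $\mathbf{0} \in ({\rm Hull}_{\rm H}(\cc))_S$, such a $\mathbf{c}$ is nonzero. Lifting $\mathbf{c}$ to a codeword $\widetilde{\mathbf{c}} \in \cc^{\perp_{\rm H}}$ by reinstating the $s$ deleted coordinates yields a nonzero codeword of $\cc^{\perp_{\rm H}}$, whence purity, namely $\delta = d(\cc^{\perp_{\rm H}})$, forces ${\rm wt}(\widetilde{\mathbf{c}}) \geq \delta$. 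Because $\widetilde{\mathbf{c}}$ and $\mathbf{c}$ differ only in the $s$ punctured positions, ${\rm wt}(\mathbf{c}) \geq {\rm wt}(\widetilde{\mathbf{c}}) - s \geq \delta - s$, so $\delta' \geq \delta - s$.

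Finally I would stress why purity is indispensable here whereas it was not in Theorem \ref{EAQP}: there, shortening $\cc^{\perp_{\rm H}}$ preserved the weights of the surviving codewords, so the bound ${\rm wt}\left(\cc^{\perp_{\rm H}} \setminus {\rm Hull}_{\rm H}(\cc)\right) \geq \delta$ transferred intact. Here shortening $\cc$ corresponds to \emph{puncturing} $\cc^{\perp_{\rm H}}$, which can lower weights, so I must control the weight of every nonzero codeword of $\cc^{\perp_{\rm H}}$ — including the low-weight ones that may sit inside ${\rm Hull}_{\rm H}(\cc)$ when $\mathcal{Q}$ is impure — and only the hypothesis $d(\cc^{\perp_{\rm H}}) = \delta$ supplies that uniform control. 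Collecting the parameters and quoting Lemma \ref{prop:two} one last time then yields the desired $[[n-s,\,\kappa+s,\,\geq \delta - s;\,c]]_q$ EAQECC for every $s \in [\ell]$.
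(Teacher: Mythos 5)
Your proposal is correct and takes essentially the same route as the paper's own proof: shorten $\cc$ on a subset of an information set of ${\rm Hull}_{\rm H}(\cc)$, use Theorem \ref{Hull} and Lemma \ref{pasdim} to see that $\cc_S$ is an $[n-s,k-s]_{q^2}$ code with $(\ell-s)$-dimensional Hermitian hull, rewrite the distance via $(\cc_S)^{\perp_{\rm H}}=(\cc^{\perp_{\rm H}})^S$ from Lemma \ref{pas}, invoke purity to get ${\rm wt}\left((\cc^{\perp_{\rm H}})^S \setminus({\rm Hull}_{\rm H}(\cc))_S\right)\geq\delta-s$, and feed $\cc_S$ back into Lemma \ref{prop:two}. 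The only difference is that you make explicit the lifting argument behind the inequality (nonzero preimages in $\cc^{\perp_{\rm H}}$ have weight $\geq\delta$ and puncturing loses at most $s$), which the paper asserts without detail.
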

\begin{proof}
Theorem \ref{Hull} and Lemma \ref{pasdim} allow us to deduce that $\cc_S$ is an $[n-s,k-s]_{q^2}$ code with $\dim_{\Ff_{q^2}}({\rm Hull}_{\rm H}(\cc_S))=\ell-s$. Since $\mathcal{Q}$ is pure, ${\rm wt}\left(\cc^{\perp_{\rm H}}\right) = \delta$. By Lemma \ref{pas} and Theorem \ref{Hull}, we arrive at 
\[
{\rm wt}\left((\cc_S)^{\perp_{\rm H}} \setminus{\rm Hull}_{\rm H}(\cc_S)\right) = {\rm wt}\left((\cc^{\perp_{\rm H}})^S \setminus({\rm Hull}_{\rm H}(\cc))_S\right)\geq\delta-s.
\]
Lemma \ref{prop:two} produces an $[[n-s,\kappa+s, \geq\delta-s; c]]_q$ EAQECC $\widetilde{\mathcal{Q}}$ from $\cc_S$.
\end{proof}

\begin{remark}\label{Remark21}
The existence of $\widetilde{\mathcal{Q}}$ in Theorem \ref{EAQP1} is not guaranteed unless $\mathcal{Q}$ is pure. If $(\cc^{\perp_{\rm H}})^S \setminus({\rm Hull}_{\rm H}(\cc))^S = \left(\mathcal{C}^{\perp_{\rm H}} \setminus {\rm Hull}_{\rm H}(\cc)\right)^S$, then, by Theorem \ref{Hull}, we know that $\begin{pmatrix} I_{\ell} & A \end{pmatrix}$ generates ${\rm Hull}_{\rm H}(\cc)$. We can then stipulate an $(n-k)\times n$ generator matrix $G=\begin{pmatrix}
I_{\ell} & A \\
O & B
\end{pmatrix}$ of $\cc^{\perp_{\rm H}}$. Let $\widetilde{I}_{\ell,S}$ be the matrix constructed by deleting the columns of $I_\ell$ whose indices are in $S$. Then $({\rm Hull}_{\rm H}(\cc))^S$ and $(\cc^{\perp_{\rm H}})^S$  have respective generator matrices
$\begin{pmatrix} \widetilde{I}_{\ell,S} & A \end{pmatrix}$ and 
$G^S=\begin{pmatrix} \widetilde{I}_{\ell,S} & A \\ 
O & B \end{pmatrix}$. Hence, one can write 
\begin{multline}
(\cc^{\perp_{\rm H}})^S \setminus({\rm Hull}_{\rm H}(\cc))^S =\left(\mathcal{C}^{\perp_{\rm H}} \setminus {\rm Hull}_{\rm H}(\cc)\right)^S \\
= \{(u_1,\ldots,u_{n-k}) \,G^S : u_i\in\Ff_{q^2} \mbox{ for } i \in[n-k] \mbox{ and there exists } k \mbox{ with } \ell < k \leq n-k \mbox{ such that } u_k \neq 0\}.    
\end{multline}
Since $({\rm Hull}_{\rm H}(\cc))_S\subseteq ({\rm Hull}_{\rm H}(\cc))^S$, we have $\left(\mathcal{C}^{\perp_{\rm H}} \setminus {\rm Hull}_{\rm H}(\cc)\right)^S \subseteq(\cc^{\perp_{\rm H}})^S \setminus({\rm Hull}_{\rm H}(\cc))_S$. In the proof of Theorem \ref{EAQP1}, we are unable to guarantee that ${\rm wt}\left((\cc^{\perp_{\rm H}})^S \setminus({\rm Hull}_{\rm H}(\cc))_S\right) \geq \delta-s$ if there exists a codeword whose weight is less than $\delta-s$ in the gap between $\left(\mathcal{C}^{\perp_{\rm H}} \setminus {\rm Hull}_{\rm H}(\cc)\right)^S$ and $(\cc^{\perp_{\rm H}})^S \setminus({\rm Hull}_{\rm H}(\cc))_S$.
\end{remark}

Combining Theorems \ref{EAQP} and \ref{EAQP1} and the bound in \eqref{eq:QMDS} leads to the conclusion that each \emph{optimal} EAQECC produces \emph{optimal} EAQECC whose range of parameters is listed in the next theorem.

\begin{thm}\label{EAQMDSP}
Let $\mathcal{Q}$ be an $[[n,\kappa, \delta; c]]_q$ optimal EAQECC with respect to the bound in \eqref{eq:QMDS}, with $\mathcal{Q}$ being obtained by Lemma \ref{prop:two} via a linear code $\cc$. Let $\ell$ be the dimension of the Hermitian hull of $\cc$. For each $s\in[\ell]$, the code $\mathcal{Q}$ gives rise to EAQECCs whose respective parameters $[[n-s,\kappa, \delta; c+s]]_q$ and $[[n-s,\kappa+s,\delta-s; c]]_q$ reach equality in the bound in \eqref{eq:QMDS}.
\end{thm}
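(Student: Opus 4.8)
The plan is to run a two-sided squeeze on the minimum distances of the two derived codes: Theorems \ref{EAQP} and \ref{EAQP1} supply lower bounds, the Singleton-like bound in \eqref{eq:QMDS} supplies matching upper bounds, and the two coincide precisely because $\mathcal{Q}$ is optimal. Since $\mathcal{Q}$ is optimal I start from the equality $2\delta = n+c-\kappa+2$. The crucial structural observation is that the derived codes are themselves produced by Lemma \ref{prop:two} (from $\cc^S$ and $\cc_S$, respectively), so the bound in \eqref{eq:QMDS}, which applies to \emph{any} EAQECC coming from Lemma \ref{prop:two}, is legitimate for each of them.

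For the first code, I would invoke Theorem \ref{EAQP} with a set $S$ of size $s$ contained in an information set of ${\rm Hull}_{\rm H}(\cc)$ to obtain an $[[n-s,\kappa,\geq\delta;c+s]]_q$ EAQECC; writing $\delta_1$ for its true minimum distance gives $\delta_1 \geq \delta$. Applying \eqref{eq:QMDS} to this code yields $2\delta_1 \le (n-s)+(c+s)-\kappa+2 = n+c-\kappa+2 = 2\delta$, so $\delta_1 \le \delta$. The two inequalities force $\delta_1 = \delta$ and simultaneously show that equality holds in \eqref{eq:QMDS}; hence the first code is the optimal $[[n-s,\kappa,\delta;c+s]]_q$. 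The point that makes this work is that the passage $n \mapsto n-s$, $c \mapsto c+s$, $\kappa \mapsto \kappa$ leaves the quantity $n+c-\kappa$ unchanged, so the right-hand side of the bound is exactly $2\delta$.

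For the second code, I would appeal to Theorem \ref{EAQP1}, which requires $\mathcal{Q}$ to be \emph{pure}; this purity is the one hypothesis that must be carried along, since an optimal code need not be pure in general, and it is what licenses the passage through $\cc_S$. Theorem \ref{EAQP1} yields an $[[n-s,\kappa+s,\geq\delta-s;c]]_q$ EAQECC; let $\delta_2 \geq \delta - s$ be its true distance. Now $n+c-\kappa$ drops by $2s$ under $n \mapsto n-s$, $\kappa \mapsto \kappa+s$, $c \mapsto c$, so \eqref{eq:QMDS} gives $2\delta_2 \le (n-s)+c-(\kappa+s)+2 = (n+c-\kappa+2)-2s = 2\delta - 2s = 2(\delta-s)$, whence $\delta_2 \le \delta - s$. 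The squeeze again produces $\delta_2 = \delta - s$ with equality in \eqref{eq:QMDS}, so the second code is the optimal $[[n-s,\kappa+s,\delta-s;c]]_q$.

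The arithmetic is routine; the only genuinely delicate points are the hypotheses under which \eqref{eq:QMDS} may be applied to the \emph{derived} codes. I expect the main obstacle to be exactly this bookkeeping: one must confirm that $\cc^S$ and $\cc_S$ still feed Lemma \ref{prop:two}, so that the Singleton-like bound is valid for the new codes, and, for the second rule, that the purity assumption demanded by Theorem \ref{EAQP1} is in force. Once these are verified, both squeezes close immediately and the two-sided estimates pin the distances to $\delta$ and $\delta-s$ exactly.
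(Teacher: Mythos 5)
Your strategy is the same as the paper's: obtain the two codes from Theorems \ref{EAQP} and \ref{EAQP1} and transfer optimality through the bound in \eqref{eq:QMDS}. Your squeeze computation is correct and in fact spells out what the paper's one-line proof leaves implicit: the quantity $n+c-\kappa$ is unchanged under $(n,\kappa,c)\mapsto(n-s,\kappa,c+s)$ and drops by $2s$ under $(n,\kappa,c)\mapsto(n-s,\kappa+s,c)$, and both derived codes again arise from Lemma \ref{prop:two} (applied to $\cc^S$ and $\cc_S$, with $c+s=k-(\ell-s)$ and $c=(k-s)-(\ell-s)$ respectively), so the bound is legitimately applicable to them. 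For the first family, $[[n-s,\kappa,\delta;c+s]]_q$, this is a complete proof, since Theorem \ref{EAQP} does not require purity.

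For the second family, however, there is a genuine gap, and you resolve it in the wrong direction. Theorem \ref{EAQP1} requires $\mathcal{Q}$ to be pure, while Theorem \ref{EAQMDSP} assumes only optimality; you acknowledge this, but then write that ``an optimal code need not be pure in general'' and leave purity as an unverified side hypothesis that ``must be carried along.'' The paper closes exactly this gap with the opposite fact: every EAQECC produced by Lemma \ref{prop:two} that attains equality in \eqref{eq:QMDS} \emph{is} pure, which is what its proof means by ``each optimal EAQECC is pure'' (this comes from the analysis of the bound in \cite[Section 3]{Luo2021}). To see what is at stake, note that for a code from Lemma \ref{prop:two} one has $n+c-\kappa+2=2k+2$, so optimality forces $\delta={\rm wt}\left(\cc^{\perp_{\rm H}}\setminus{\rm Hull}_{\rm H}(\cc)\right)=k+1$; since $d(\cc^{\perp_{\rm H}})=\min\left\{\delta,\,{\rm wt}\left({\rm Hull}_{\rm H}(\cc)\setminus\{\mathbf{0}\}\right)\right\}$, purity is equivalent to ${\rm Hull}_{\rm H}(\cc)$ containing no nonzero codeword of weight at most $k$, and this is precisely the nontrivial implication your argument is missing. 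Without it, your proof establishes the existence of the $[[n-s,\kappa+s,\delta-s;c]]_q$ codes only under an added purity hypothesis, i.e., a strictly weaker statement than Theorem \ref{EAQMDSP}; with it, both of your squeezes close and the proof is complete.
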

\begin{proof}
The conclusion follows from Theorems \ref{EAQP} and \ref{EAQP1} and the fact that each  optimal EAQECC is pure.
\end{proof}

\subsection{Propagation Rules on Subsystem Codes}

Aly and Klappenecker demonstrated in \cite{SALAHA.2009} that the existence of a \emph{pure} $[[n,\kappa,r,\delta]]_q$ subsystem code implies the existence of a \emph{pure} $[[n-1,\kappa+1,r,\delta-1]]_q$ subsystem code. Theorem \ref{Hull} characterizes their rule in detail and provides a new propagation rule on quantum subsystem codes based on punctured codes. This new rule does \emph{not} require the initial codes to be pure.

\begin{thm}\label{QSCP}
Let $\mathcal{Q}$ be an $[[n,\kappa,r,\delta]]_q$ subsystem code constructed by Lemma \ref{QSCH} via a corresponding linear code $\cc$. If the dimension of the Hermitian hull of $\cc$ is $\ell$, then the following statements hold.
\begin{enumerate}
\item For each $s\in[\ell]$, the code $\mathcal{Q}$ gives rise to a subsystem code with parameters $[[n-s,\kappa,r+s,\geq\delta-s]]_q$. 
\item If the code $\mathcal{Q}$ is pure, then there exists an $[[n-s,\kappa+s,r,\geq\delta-s]]_q$ subsystem code for each $s \in [\ell]$.
\end{enumerate}
\end{thm}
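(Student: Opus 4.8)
The plan is to mirror the structure of the proofs of Theorems \ref{EAQP} and \ref{EAQP1}, since the subsystem construction in Lemma \ref{QSCH} depends on the same classical ingredients, namely the code $\cc$, its Hermitian hull ${\rm Hull}_{\rm H}(\cc)$, and the minimum weight of a suitable coset space. For each claim I would fix a subset $S$ of an information set of ${\rm Hull}_{\rm H}(\cc)$ with $|S|=s$ and apply Theorem \ref{Hull} to transport the hull along puncturing and shortening. For the first statement I would use the punctured code $\cc^S$, which by Lemma \ref{pasdim} is an $[n-s,k,\geq d-s]_{q^2}$ code whose Hermitian hull has dimension $\ell-s$. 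Feeding these parameters into Lemma \ref{QSCH} yields subsystem parameters $[[(n-s),(n-s)-k-(\ell-s),k-(\ell-s),\cdot]]_q = [[n-s,n-k-\ell,k-\ell+s,\cdot]]_q$; recalling from Lemma \ref{QSCH} that the original code $\mathcal{Q}$ has $\kappa=n-k-\ell$ and $r=k-\ell$, this is exactly $[[n-s,\kappa,r+s,\cdot]]_q$, as claimed.

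For the second statement I would instead use the shortened code $\cc_S$, which by Lemma \ref{pasdim} is an $[n-s,k-s,\geq d]_{q^2}$ code, again with $(\ell-s)$-dimensional Hermitian hull by Theorem \ref{Hull}. Lemma \ref{QSCH} then gives subsystem parameters whose $\kappa$-component is $(n-s)-(k-s)-(\ell-s)=n-k-\ell+s=\kappa+s$ and whose $r$-component is $(k-s)-(\ell-s)=k-\ell=r$, matching $[[n-s,\kappa+s,r,\cdot]]_q$. The length and outer parameters are therefore immediate in both cases once Theorem \ref{Hull} and Lemma \ref{pasdim} are invoked, so the only substantive work is the distance bound.

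The distance is where the real argument, and the purity hypothesis in statement 2), enters. The minimum distance delivered by Lemma \ref{QSCH} is ${\rm wt}\bigl({\rm Hull}_{\rm H}(\cc)^{\perp_{\rm H}}\setminus\cc\bigr)$, and for the new codes I must lower-bound the analogous quantity ${\rm wt}\bigl({\rm Hull}_{\rm H}(\cc^S)^{\perp_{\rm H}}\setminus\cc^S\bigr)$ (respectively with $\cc_S$) by $\delta-s$. Using Lemma \ref{pas} together with the identity ${\rm Hull}_{\rm H}(\cc^S)=({\rm Hull}_{\rm H}(\cc))_S$ from Theorem \ref{Hull}, I would rewrite ${\rm Hull}_{\rm H}(\cc^S)^{\perp_{\rm H}}$ as $\bigl(({\rm Hull}_{\rm H}(\cc))_S\bigr)^{\perp_{\rm H}}$ and express it as a puncturing or shortening of ${\rm Hull}_{\rm H}(\cc)^{\perp_{\rm H}}$, reducing the problem to comparing coset spaces of the form $\bigl({\rm Hull}_{\rm H}(\cc)^{\perp_{\rm H}}\setminus\cc\bigr)$ after puncturing/shortening on $S$. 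The key maneuver, exactly as in the proof of Theorem \ref{EAQP}, is the set identity expressing the punctured (or shortened) difference as the difference of the punctured (or shortened) spaces: a codeword in the relevant difference lifts, by reinserting zeros on $S$, to a codeword of ${\rm Hull}_{\rm H}(\cc)^{\perp_{\rm H}}\setminus\cc$ of the same weight, and conversely. Since puncturing on $s$ coordinates lowers each weight by at most $s$, this yields the bound $\geq\delta-s$.

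The main obstacle will be the second statement, where the shortened code is involved. As Remark \ref{Remark21} warns in the EAQECC setting, the naive set identity can fail when the initial code is not pure: the space $\bigl({\rm Hull}_{\rm H}(\cc)^{\perp_{\rm H}}\bigr)^S\setminus({\rm Hull}_{\rm H}(\cc))_S$ may be strictly larger than $\bigl({\rm Hull}_{\rm H}(\cc)^{\perp_{\rm H}}\setminus\cc\bigr)^S$, so that short-weight codewords can slip into the gap and spoil the distance bound. The purity hypothesis ${\rm wt}\bigl({\rm Hull}_{\rm H}(\cc)^{\perp_{\rm H}}\bigr)=\delta$ is precisely what closes this gap, since it forces every nonzero codeword of ${\rm Hull}_{\rm H}(\cc)^{\perp_{\rm H}}$ to have weight at least $\delta$ and hence every punctured codeword to retain weight at least $\delta-s$. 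I would therefore record this point carefully and use purity to control the coset difference, whereas for statement 1), which uses the punctured code, the bound follows without any purity assumption, exactly paralleling Theorem \ref{EAQP}.
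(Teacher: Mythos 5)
Your proposal is correct and follows essentially the same route as the paper's proof: puncture on a subset of an information set of the hull for statement 1) and shorten for statement 2), use Theorem \ref{Hull} and Lemma \ref{pasdim} to track the hull and code dimensions, identify ${\rm Hull}_{\rm H}(\cc^S)^{\perp_{\rm H}}$ and ${\rm Hull}_{\rm H}(\cc_S)^{\perp_{\rm H}}$ with $({\rm Hull}_{\rm H}(\cc)^{\perp_{\rm H}})^S$ via Lemma \ref{pas}, and then bound the weight of the coset difference by $\delta-s$ using the Remark \ref{Remark21}-style set identity in the punctured case and the purity hypothesis in the shortened case. Your placement of where purity is genuinely needed (only for the shortened/statement 2 case) matches the paper exactly; the only slight imprecision is describing the lift for the \emph{punctured} difference as ``reinserting zeros'' of the same weight, which is the shortening mechanism, but your subsequent weight argument (puncturing lowers weight by at most $s$) is the correct one.
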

\begin{proof}
Let $S$ be a subset of an information set of ${\rm Hull}_{\rm H}(\cc)$ such that $|S|=s$. By Theorem \ref{Hull}, there exist a punctured code $\cc^S$ and a shortened code $\cc_S$ of $\cc$ such that ${\rm Hull}_{\rm H}(\cc^S)={\rm Hull}_{\rm H}(\cc_S)=({\rm Hull}_{\rm H}(\cc))_S$. The respective dimensions of ${\rm Hull}_{\rm H}(\cc)$ and $({\rm Hull}_{\rm H}(\cc))_S$ are $\ell$ and $\ell-s$. By Lemma \ref{pasdim}, the respective dimensions of $\cc^S$ and $\cc_S$ are $k$ and $k-s$. By Lemma \ref{pas}, we have $(({\rm Hull}_{\rm H}(\cc))_S)^{\perp_{\rm H}}=({\rm Hull}_{\rm H}(\cc)^{\perp_{\rm H}})^S$. 
Next, we determine the minimum distance of the resulted subsystem codes in the following two cases.
\begin{enumerate}
\item Since $\mathcal{Q}$ has parameters $[[n,\kappa,r,\delta]]_q$, Lemma \ref{QSCH} gives $\delta ={\rm wt}\left({\rm Hull}_{\rm H}(\cc)^{\perp_{\rm H}} \setminus \cc\right)$. Using the same argument as in Remark \ref{Remark21}, we get $({\rm Hull}_{\rm H}(\cc)^{\perp_{\rm H}})^S \setminus \cc^S = \left({\rm Hull}_{\rm H}(\cc)^{\perp_{\rm H}} \setminus \cc\right)^S$. Thus, we arrive at 
\begin{equation}
{\rm wt}\left({\rm Hull}_{\rm H}(\cc^S)^{\perp_{\rm H}} \setminus \cc^S\right) = {\rm wt}\left(({\rm Hull}_{\rm H}(\cc)^{\perp_{\rm H}})^S \setminus \cc^S\right) = {\rm wt}\left(\left({\rm Hull}_{\rm H}(\cc)^{\perp_{\rm H}} \setminus \cc\right)^S\right) \geq \delta-s.
\end{equation}
By Lemma \ref{QSCH}, we obtain an $[[n-s,\kappa,r+s,\geq\delta-s]]_q$ subsystem code from $\cc^S$.
\item If the code $\mathcal{Q}$, whose minimum distance is $\delta$, is pure, then $\delta = {\rm wt}\left({\rm Hull}_{\rm H}(\cc)^{\perp_{\rm H}}\right)$. It is then immediate to confirm that
\begin{equation}
{\rm wt}\left({\rm Hull}_{\rm H}(\cc_S)^{\perp_{\rm H}} \setminus \cc_S\right) = {\rm wt}\left(({\rm Hull}_{\rm H}(\cc)^{\perp_{\rm H}})^S \setminus \cc_S\right) \geq \delta-s.
\end{equation}
Finally, Lemma \ref{QSCH} ensures the existence of an $[[n-s,\kappa+s,r,\geq\delta-s]]_q$ subsystem code from $\cc_S$.
\end{enumerate}
\end{proof}

Theorem \ref{QSCP} shows how an \emph{optimal} subsystem code constructed by Lemma \ref{QSCH} implies the existence of two \emph{optimal} subsystem codes for each $s$ in the stipulated range.

\begin{thm}\label{QSCPMDS}
An $[[n,\kappa,r,\delta]]_q$ optimal subsystem code obtained by Lemma \ref{QSCH} gives rise to two optimal subsystem codes with respective parameters $[[n-s,\kappa,r+s,\delta-s]]_q$ and $[[n-s,\kappa+s,r,\delta-s]]_q$ for each $s\in[\delta-1]$.
\end{thm}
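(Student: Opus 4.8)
The plan is to reduce the statement to Theorem~\ref{QSCP} together with the subsystem Singleton-like bound $\kappa+r\le n-2\delta+2$, after translating the word ``optimal'' into a statement about the underlying classical code $\cc$. Writing $\ell=\dim_{\Ff_{q^2}}({\rm Hull}_{\rm H}(\cc))$ and recalling from Lemma~\ref{QSCH} that $\kappa=n-k-\ell$ and $r=k-\ell$, we obtain $\kappa+r=n-2\ell$; hence equality $\kappa+r=n-2\delta+2$ in the bound is equivalent to $\ell=\delta-1$. In particular the stated range $s\in[\delta-1]$ coincides exactly with the range $s\in[\ell]$ in which Theorem~\ref{QSCP} operates, so every admissible $s$ is covered and nothing is lost at the endpoints.

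The one genuinely nontrivial ingredient, and the step I expect to be the main obstacle, is \emph{purity}: Theorem~\ref{QSCP}(2) -- the only route to the second family $[[n-s,\kappa+s,r,\ge\delta-s]]_q$ -- requires the initial code $\mathcal{Q}$ to be pure, that is, $\delta=d({\rm Hull}_{\rm H}(\cc)^{\perp_{\rm H}})$. This is the subsystem analogue of the assertion ``each optimal EAQECC is pure'' used in the proof of Theorem~\ref{EAQMDSP}. I would settle it as follows. Since $\ell=\delta-1$, the code ${\rm Hull}_{\rm H}(\cc)^{\perp_{\rm H}}$ has length $n$ and dimension $n-\ell$, so the Singleton bound gives $d({\rm Hull}_{\rm H}(\cc)^{\perp_{\rm H}})\le\ell+1=\delta$, while trivially $\delta={\rm wt}({\rm Hull}_{\rm H}(\cc)^{\perp_{\rm H}}\setminus\cc)\ge d({\rm Hull}_{\rm H}(\cc)^{\perp_{\rm H}})$. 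The crux is the reverse inequality $d({\rm Hull}_{\rm H}(\cc)^{\perp_{\rm H}})\ge\delta$; for the optimal codes produced in this paper, ${\rm Hull}_{\rm H}(\cc)$ is MDS with parameters $[n,\ell,n-\ell+1]_{q^2}$, so its Hermitian dual ${\rm Hull}_{\rm H}(\cc)^{\perp_{\rm H}}$ is again MDS with minimum distance $\ell+1=\delta$. The two estimates then close up to $\delta=d({\rm Hull}_{\rm H}(\cc)^{\perp_{\rm H}})$, establishing purity and licensing both parts of Theorem~\ref{QSCP}.

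With purity in hand, I would first apply Theorem~\ref{QSCP} to obtain, for each $s\in[\ell]=[\delta-1]$, subsystem codes with parameters $[[n-s,\kappa,r+s,\ge\delta-s]]_q$ and $[[n-s,\kappa+s,r,\ge\delta-s]]_q$. To upgrade each ``$\ge\delta-s$'' to ``$=\delta-s$'' and simultaneously certify optimality, I would feed the new parameters back into the Singleton-like bound. For the first family, $\kappa'+r'=\kappa+r+s$ and the length is $n-s$, so the bound reads $\kappa+r+s\le(n-s)-2\delta'+2$; substituting the original equality $\kappa+r=n-2\delta+2$ collapses this to $\delta'\le\delta-s$, which together with the lower bound from Theorem~\ref{QSCP} forces $\delta'=\delta-s$ and shows the new code meets the bound. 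The identical computation applied to the second family, where $\kappa'+r'=(\kappa+s)+r=\kappa+r+s$ as well, again squeezes the distance to exactly $\delta-s$ and certifies optimality. This completes the argument.
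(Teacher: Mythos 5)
Your skeleton is the same as the paper's: translate optimality into $\ell=\delta-1$ via $\kappa+r=n-2\ell$, invoke Theorem \ref{QSCP} for both families, and certify exact distance and optimality of the offspring by feeding the new parameters back into the Singleton-like bound. That final squeeze is correct and is exactly what the paper leaves implicit when it says Theorem \ref{QSCP} ``leads directly to the desired conclusion'' (the paper instead routes the distance through the MDS property of the shortened hull, citing \cite[Exercise 317]{huffman2003}, but the substance is the same). You have also correctly isolated the crux: Theorem \ref{QSCP}(2), the only source of the family $[[n-s,\kappa+s,r,\delta-s]]_q$, requires the initial code $\mathcal{Q}$ to be pure, while the first family needs no purity at all.

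The gap is in how you settle purity. Under optimality ($\ell=\delta-1$), the statements ``${\rm Hull}_{\rm H}(\cc)$ is MDS'', ``${\rm Hull}_{\rm H}(\cc)^{\perp_{\rm H}}$ is MDS'', and ``$d({\rm Hull}_{\rm H}(\cc)^{\perp_{\rm H}})=\delta$'' (purity) are all equivalent, so when you write ``for the optimal codes produced in this paper, ${\rm Hull}_{\rm H}(\cc)$ is MDS'' you are assuming precisely what has to be proven. What your argument establishes is the theorem under the \emph{extra} hypothesis that the ingredient code has an MDS Hermitian hull---enough for every application in Section \ref{sec:CQC}, but not the statement itself, which concerns an arbitrary optimal subsystem code obtained from Lemma \ref{QSCH}. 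The paper conceals the same step behind ``one can readily check that an optimal subsystem code constructed by Lemma \ref{QSCH} must be pure,'' offered without proof; be aware that this assertion is genuinely delicate, since optimality alone does not force purity. For instance, over $\Ff_4$ take $\cc=\langle(1,1,\omega,\omega,0),(0,0,0,0,1)\rangle$. Then ${\rm Hull}_{\rm H}(\cc)=\langle(1,1,\omega,\omega,0)\rangle$, so $\ell=1$ and $k+\ell<n$; the only weight-one vectors of ${\rm Hull}_{\rm H}(\cc)^{\perp_{\rm H}}$ are the multiples of $(0,0,0,0,1)$, which lie in $\cc$, whence ${\rm wt}\left({\rm Hull}_{\rm H}(\cc)^{\perp_{\rm H}}\setminus\cc\right)=2$. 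Lemma \ref{QSCH} thus yields a $[[5,2,1,2]]_2$ subsystem code meeting $\kappa+r=n-2\delta+2$ with equality, yet $d\left({\rm Hull}_{\rm H}(\cc)^{\perp_{\rm H}}\right)=1<2=\delta$, so this optimal code is not pure. Consequently the purity step cannot be waved through from optimality of the parameters alone: one must either restrict the theorem (as your proof effectively does) to initial codes whose Hermitian hulls are MDS, or supply a different argument for the second family. Your first family, resting only on Theorem \ref{QSCP}(1) and the squeeze, is complete as written.
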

\begin{proof}
One can readily check that an optimal subsystem code constructed by Lemma \ref{QSCH} must be pure. Let $\cc$ be a linear code that corresponds to the optimal $[[n,\kappa,r,\delta]]_q$ subsystem code $\mathcal{Q}$ in Lemma \ref{QSCH}. Let ${\rm Hull}_{\rm H}(\cc)$ be an MDS code of dimension $\delta-1$ and let $S$ be a subset of an information set of ${\rm Hull}_{\rm H}(\cc)$. By \cite[Exercise 317]{huffman2003}, the shortened code $({\rm Hull}_{\rm H}(\cc))_S$ is MDS if $|S|<\delta-1$. Theorem \ref{Hull} guarantees that ${\rm Hull}_{\rm H}(\cc^S)$ and ${\rm Hull}_{\rm H}(\cc_S)$ are MDS. If $|S|=\delta-1$, then ${\rm Hull}_{\rm H}(\cc)^{\perp_{\rm H}}=\Ff_{q^2}^n$. In both cases, Theorem \ref{QSCP} leads directly to the desired conclusion.
\end{proof}

\section{Constructions of optimal quantum codes}\label{sec:CQC}

This section discusses a general framework to produce generalized Reed-Solomon (GRS) codes whose Hermitian hulls are MDS. With the help of these codes, we construct numerous families of optimal EAQECCs and subsystem codes.

\subsection{GRS Codes whose Hermitian Hulls are MDS}

We have just demonstrated that classical GRS codes whose Hermitian hulls are MDS correspond to optimal EAQECCs and optimal subsystem codes. Remark \ref{rem:21} has explained how one can vary the dimensions of the Hermitian hulls. We begin with some known facts regarding GRS codes and cyclic codes.

Let $\Ff_q^*$ denote the multiplicative group of $\Ff_q$. Let $b_1,\ldots,b_n$ be distinct elements of $\Ff_q$ and let $\mathbf{b}=(b_1,\ldots,b_n)$ be a vector of length $n$. Writing $\mathbf{a} = (a_1,\ldots,a_n) \in  (\Ff_q^*)^n$, for each $k\in \{0,1,\ldots,n\}$, the generalized Reed-Solomon (GRS) code $GRS_k(\mathbf{b},\mathbf{a})$ is the evaluation code
\begin{equation}\label{eq:GRS}
GRS_k(\mathbf{b},\mathbf{a}) := \left\{(a_1f(b_1),\ldots,a_nf(b_n)) : f(x) \in\Ff_q[x] \mbox{, with } \deg(f(x)) < k \right\}.
\end{equation}
We know, {\it e.g.}, from \cite[Section 9]{Ling2004}, that $GRS_k(\mathbf{b},\mathbf{a})$ is an $[n,k,n-k+1]_q$ MDS code with generator matrix
\[
G_k=\begin{pmatrix}
a_1 & a_2 &\cdots & a_n\\
a_1b_1 & a_2b_2 &\cdots & a_nb_n\\
\vdots& \vdots &  \ddots & \vdots\\
a_1b_1^{k-1} & a_2b_2^{k-1} &\cdots & a_nb_n^{k-1}\\
\end{pmatrix}.
\]
The Euclidean dual code of $GRS_k(\mathbf{b},\mathbf{a})$ is also a GRS code.

For two integers $m$ and $n$, let $[m,n]$ denote the set $\{m,m+1,\ldots,n\}$, if $m\leq n$, and the empty set $\emptyset$, if $m>n$. An $[n,k,d]_q$ code $\cc$ is {\it cyclic} if $(c_{n-1},c_0,\ldots,c_{n-2})\in\cc$ for each $(c_0,c_1,\ldots,c_{n-1})\in\cc$. Each codeword $(c_0,c_1,\ldots,c_{n-1})$ of $\cc$ can be represented by a polynomial $c(x)=c_0+c_1x+\ldots+c_{n-1}x^{n-1} \in \Ff_q[x]$. This representation allows for the identification of $\cc$ as an ideal in $\Ff_q[x]/\langle x^n-1\rangle$. The monic polynomial $g(x)$ of degree $n-k$ in the ideal is a divisor of $x^n-1$ and is called the \emph{generator polynomial} of $\cc$.

If $r={\rm ord}_n(q)$ and $\alpha$ is a primitive $n^{\rm th}$ root of unity in $\Ff_{q^r}$, then the roots of $x^n-1$ are the elements $\alpha^j$ for $j\in[0,n-1]$. The collection $D=\{i\in[0,n-1]:g(\alpha^i)=0\}$ is the \emph{defining set} of a cyclic code $\cc$. Its complement $[0,n-1]\setminus D$ is the \emph{generating set} of $\cc$. Since $D$ is the union of some $q$-cyclotomic cosets modulo $n$, we can assume that the number of such $q$-cyclotomic cosets is $t$ and write $D=\bigcup_{\ell=1}^tC_{i_\ell}$ with $C_{i_\ell}$ being $q$-cyclotomic coset containing $i_\ell$. Using the $t\times n$ matrix
\[
H=\begin{pmatrix}
1 & \alpha^{i_1} &\cdots & \alpha^{i_1(n-1)}\\
1 & \alpha^{i_2} &\cdots & \alpha^{i_2(n-1)}\\
\vdots& \vdots &  \ddots & \vdots\\
1 & \alpha^{i_{t}} &\cdots & \alpha^{i_{t}(n-1)}\\
\end{pmatrix},
\]
a codeword $\mathbf{c}$ is in the cyclic code $\cc$ if and only if $H\mathbf{c}^{\top}=\mathbf{0}$. Let $\Tr^m_1$ denote the \emph{trace function} from $\Ff_{q^m}$ onto $\Ff_q$. The cyclic code $\cc$ has the following representation in terms of the trace function.
\begin{lem}{\rm (\cite{Delsarte1975})}\label{trace}
Let $\cc$ be a cyclic code of length $n$ over $\Ff_q$. Let $r={\rm ord}_n(q)$ and let $\alpha$ be a primitive $n^{\rm th}$ root of unity in $\Ff_{q^r}$. Let the generating set of $\cc$ be $\bigcup_{\ell=1}^sC_{i_\ell}$, where $C_{i_\ell}$ is the $q$-cyclotomic cosets modulo $n$ that contains $i_\ell$. Let $|C_{i_\ell}|= m_\ell$ for any $\ell\in[s]$. Then
\[
\cc=\left\{\left(\sum_{\ell=1}^s \Tr^{m_\ell}_1(\theta_\ell\alpha^{-ui_\ell})\right)_{u\in[0,n-1]}: \theta_\ell\in\Ff_{q^{m_\ell}},1\leq\ell\leq s\right\}.
\]
\end{lem}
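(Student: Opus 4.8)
The plan is to prove the two sides coincide by establishing one inclusion together with a dimension count, using the spectral (discrete Fourier) description of cyclic codes. Write $T=\bigcup_{\ell=1}^s C_{i_\ell}$ for the generating set and $D=[0,n-1]\setminus T$ for the defining set, so that, reading off the condition $c(\alpha^j)=0$ for $j\in D$ from the parity-check matrix $H$ above, a vector $\mathbf{c}=(c_0,\ldots,c_{n-1})$ lies in $\cc$ precisely when its spectrum $\hat c_j:=\sum_{u=0}^{n-1}c_u\alpha^{uj}$ vanishes for every $j\in D$. Denote the right-hand set by $\cc'$. Since $\Tr_1^{m_\ell}$ is $\Ff_q$-linear and each $\alpha^{i_\ell}$ lies in $\Ff_{q^{m_\ell}}$, the set $\cc'$ is an $\Ff_q$-linear subspace of $\Ff_q^n$, so it suffices to show $\cc'\subseteq\cc$ and $\dim_{\Ff_q}\cc'=\dim_{\Ff_q}\cc$.

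First I would compute the spectrum of a single trace codeword. Expanding $\Tr_1^{m_\ell}(\theta_\ell\alpha^{-ui_\ell})=\sum_{t=0}^{m_\ell-1}\theta_\ell^{q^t}\alpha^{-ui_\ell q^t}$ and interchanging the order of summation gives
\[
\hat c_j=\sum_{t=0}^{m_\ell-1}\theta_\ell^{q^t}\sum_{u=0}^{n-1}\alpha^{u(j-i_\ell q^t)}.
\]
The inner sum is a geometric series in the $n$-th root of unity $\alpha$, hence it equals $n\cdot 1_{\Ff_q}$ when $j\equiv i_\ell q^t \pmod n$ and $0$ otherwise. Because $\gcd(n,q)=1$ forces $n\neq 0$ in $\Ff_q$, the surviving terms are exactly those with $j\in C_{i_\ell}$, so $\hat c_j=0$ for every $j\notin C_{i_\ell}$, and in particular for every $j\in D$. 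Since the spectrum is linear in the codeword, summing over $\ell$ shows that each element of $\cc'$ has spectrum supported on $T$, whence $\cc'\subseteq\cc$.

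The dimension count then finishes the argument. On one side, $\dim_{\Ff_q}\cc=n-\deg g=|T|=\sum_{\ell=1}^s m_\ell$, because the cyclotomic cosets $C_{i_\ell}$ are disjoint. On the other side, for fixed $\ell$ the map $\theta_\ell\mapsto(\Tr_1^{m_\ell}(\theta_\ell\alpha^{-ui_\ell}))_u$ is injective: its kernel forces $\Tr_1^{m_\ell}(\theta_\ell\beta^u)=0$ for all $u$, where $\beta=\alpha^{-i_\ell}\in\Ff_{q^{m_\ell}}$ has degree $m_\ell$ over $\Ff_q$, so that $1,\beta,\ldots,\beta^{m_\ell-1}$ is an $\Ff_q$-basis and nondegeneracy of the trace form gives $\theta_\ell=0$. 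Because the spectral supports $C_{i_\ell}$ are pairwise disjoint, the computation of the previous paragraph also shows that the contributions from distinct $\ell$ add up directly, so $\dim_{\Ff_q}\cc'=\sum_{\ell=1}^s m_\ell=\dim_{\Ff_q}\cc$. Combined with $\cc'\subseteq\cc$, this yields $\cc'=\cc$.

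I expect the main obstacle to be the spectral computation of the middle paragraph: in particular, keeping track of the orthogonality relation $\sum_{u}\alpha^{um}\in\{0,\,n\cdot 1_{\Ff_q}\}$ and invoking the hypothesis $\gcd(n,q)=1$ to guarantee that the factor $n$ does not vanish, together with the bookkeeping needed to see that disjointness of the cyclotomic cosets makes the sum over $\ell$ direct. Everything else is routine linearity and a dimension comparison.
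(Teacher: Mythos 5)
Your proof is correct, but note that there is nothing in the paper to compare it against: the paper states this lemma as a quoted result of Delsarte \cite{Delsarte1975} and gives no proof of it. Your argument is the standard spectral one and it is sound: identify membership in $\cc$ with vanishing of the discrete Fourier coefficients $\hat c_j=\sum_u c_u\alpha^{uj}$ on the defining set, show each trace vector has spectrum supported on $C_{i_\ell}$ (using $\gcd(n,q)=1$ so that $n\neq 0$ in $\Ff_q$), and finish with a dimension count. Two points are worth making explicit. First, the trace expressions are well defined because $|C_{i_\ell}|=m_\ell$ forces $(\alpha^{i_\ell})^{q^{m_\ell}}=\alpha^{i_\ell}$, hence $\alpha^{\pm i_\ell}\in\Ff_{q^{m_\ell}}$ and $m_\ell\mid r$; this also justifies your claim that $\beta=\alpha^{-i_\ell}$ has degree exactly $m_\ell$ over $\Ff_q$, since $\Ff_q(\alpha^{-i_\ell})=\Ff_q(\alpha^{i_\ell})$. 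Second, the assertion that the contributions from distinct $\ell$ ``add up directly'' needs, besides disjointness of the spectral supports, the injectivity of the spectrum map $c\mapsto\hat c$ (a Vandermonde matrix with the distinct nodes $\alpha^j$, invertible because $x^n-1$ is separable when $\gcd(n,q)=1$). In fact your own computation gives this more cheaply: for the trace vector attached to $\theta_\ell$ one has $\hat c_{i_\ell}=n\,\theta_\ell$, so each component map is injective and the joint kernel is trivial directly from the spectrum, making the separate nondegeneracy-of-trace argument redundant (though correct).
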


The Hartmann-Tzeng bound in \cite{Hartmann1972} is a \emph{lower bound} on the minimum distance of a cyclic code.

\begin{lem}{\rm (Hartmann-Tzeng bound)}\label{HTbound} Let $\cc$ be an $[n,k,d]_q$ cyclic code with defining set $D$. Let $a$, $b$, and $c$ be integers such that $\gcd(b,n)=1$ and $\gcd(c,n)=1$. If $\{a+bi_1+ci_2 : i_1 \in[0,x-2] \mbox{ and } i_2\in[0,y]\}$ is contained in $D$, then $d\geq x+y$.
\end{lem}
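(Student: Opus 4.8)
The plan is to show directly that every nonzero codeword of $\cc$ has weight at least $x+y$; since $d$ is the minimum weight, this yields $d\ge x+y$. Fix a nonzero $\mathbf{c}=(c_0,\ldots,c_{n-1})\in\cc$ of weight $w$, let $j_1,\ldots,j_w$ be its support, and set $\eta_l=\alpha^{j_l}$ for $l\in[w]$. Membership in $\cc$ forces $\sum_{l=1}^w c_{j_l}\alpha^{m j_l}=0$ for every $m\in D$. Applying this to the elements $m=a+bi_1+ci_2$ that lie in $D$ by hypothesis, and writing $u_l=c_{j_l}\eta_l^{a}$, I obtain the homogeneous system
\begin{equation}\label{eq:HTsystem}
\sum_{l=1}^w u_l\,(\eta_l^{b})^{i_1}(\eta_l^{c})^{i_2}=0,\qquad 0\le i_1\le x-2,\ 0\le i_2\le y .
\end{equation}
Here every $u_l\neq 0$ because $c_{j_l}\neq 0$ and $\eta_l\neq 0$. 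Moreover $\gcd(b,n)=1$ forces the $\eta_l^{b}$ to be pairwise distinct (the map $j\mapsto bj$ is a bijection modulo $n$), and $\gcd(c,n)=1$ likewise forces the $\eta_l^{c}$ to be distinct. The theorem is thus reduced to the purely linear-algebraic claim that a system of the form \eqref{eq:HTsystem} with nonzero $u_l$, distinct $\eta_l^{b}$, and distinct $\eta_l^{c}$ cannot exist unless $w\ge x+y$.

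I would prove this claim by induction on $y$. For the base case $y=0$, the equations in \eqref{eq:HTsystem} reduce to $\sum_{l} u_l(\eta_l^{b})^{i_1}=0$ for $0\le i_1\le x-2$. If $w\le x-1$, the first $w$ of these equations form a square Vandermonde system in the distinct nodes $\eta_1^{b},\ldots,\eta_w^{b}$, whose matrix is nonsingular; this forces every $u_l=0$, a contradiction. Hence $w\ge x=x+0$, as required.

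For the inductive step I would use an annihilation (shift) trick to trade one unit of $y$ for one unit of $w$. Fix the node $\eta_w^{c}$ and define the modified coefficients $\widetilde u_l=u_l(\eta_l^{c}-\eta_w^{c})$. Then $\widetilde u_w=0$, while $\widetilde u_l\neq 0$ for $l<w$ precisely because the $\eta_l^{c}$ are distinct. A direct computation splits $\sum_l \widetilde u_l(\eta_l^b)^{i_1}(\eta_l^c)^{i_2}$ into two shifted instances of the left-hand side of \eqref{eq:HTsystem}, indexed by $i_2+1$ and $i_2$, each of which vanishes as long as $i_2+1\le y$; consequently the $\widetilde u_l$ satisfy a system of the same shape but with the $i_2$-range shortened to $0\le i_2\le y-1$ and effectively only $w-1$ nonzero unknowns $\widetilde u_1,\ldots,\widetilde u_{w-1}$. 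The induction hypothesis then yields $w-1\ge x+(y-1)$, that is, $w\ge x+y$.

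The routine parts -- the Vandermonde nonsingularity in the base case and the bijectivity giving distinctness of $\eta_l^{b}$ and $\eta_l^{c}$ -- are standard. I expect the main point requiring care to be the inductive shift: one must verify that multiplying by $(\eta_l^{c}-\eta_w^{c})$ exactly reproduces \eqref{eq:HTsystem} with $y$ decreased by one (this is where the hypothesis that the whole block $0\le i_2\le y$ lies in $D$ is used, since the shift consumes the top layer $i_2=y$) and that the reduced unknowns genuinely stay nonzero, which is exactly what $\gcd(c,n)=1$ guarantees. Tracking these two bookkeeping conditions is the crux of the argument.
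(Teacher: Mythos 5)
The paper offers no proof of this lemma to compare against: it is stated as a known result, cited directly from Hartmann and Tzeng \cite{Hartmann1972}, and used as a black box to lower-bound the minimum distance of the codes $E(\mathcal{D}_{k,\ell})$. Your attempt must therefore be judged on its own, and it is a correct, self-contained proof along the classical lines: evaluating a weight-$w$ codeword at the zeros $\alpha^{a+bi_1+ci_2}$ produces the bilinear system $\sum_{l=1}^{w}u_l(\eta_l^b)^{i_1}(\eta_l^c)^{i_2}=0$ with nonzero coefficients; the hypotheses $\gcd(b,n)=1$ and $\gcd(c,n)=1$ make the nodes $\eta_l^b$ (respectively $\eta_l^c$) pairwise distinct; the case $y=0$ is the Vandermonde (BCH-bound) argument; and the passage from $y$ to $y-1$ is the standard annihilation trick, multiplying by $\eta_l^c-\eta_w^c$ to kill one unknown at the cost of the top layer $i_2=y$. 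Two pieces of bookkeeping should be made explicit to finish the argument. First, the statement implicitly assumes $x\ge 2$: when $x=1$ the hypothesis is vacuous (the range $[0,x-2]$ is empty) while the conclusion $d\ge 1+y$ can fail, and your base case tacitly uses $x\ge 2$ as well; this is a defect of the quoted statement rather than of your proof, but it should be flagged. Second, when you invoke the induction hypothesis on the reduced system you need it to contain at least one nonzero unknown, that is, $w-1\ge 1$; this is easily secured by first applying your base-case argument to the sub-system with $i_2=0$, which already forces $w\ge x\ge 2$, but as written that step is silently assumed. Neither point is a genuine gap; both are one-line repairs, and the overall structure (Vandermonde base case plus inductive shifting) is sound.
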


The \emph{extended code} of the cyclic code $\cc$ is 
\[
E(\cc)=\left\{(c_0,\ldots,c_{n-1},c_n):(c_0,\ldots,c_{n-1})\in\cc \mbox{ and } \sum_{i=0}^n c_i =0\right\}.
\]
From \cite[Subsection 1.5.2]{huffman2003}, the code $E(\cc)$ has length $n+1$, dimension $k$, and parity-check matrix
\begin{equation}\label{extendedcyclic}
E(H)=\begin{pmatrix}
1 & 1 &\cdots & 1 & 1\\
1 & \alpha^{i_1} &\cdots & \alpha^{i_1(n-1)}& 0\\
\vdots& \vdots &  \ddots & \vdots&\vdots\\
1 & \alpha^{i_{t}} &\cdots & \alpha^{i_{t}(n-1)}& 0\\
\end{pmatrix}.
\end{equation}

We recall a special linear code that can serve as a powerful ingredient in a construction of Hermitian self-orthogonal codes. The code was introduced by Rains in \cite{Rains1999} and is defined by
\begin{equation}\label{eq:RainP}
P(\cc)=\left\{(a_1,\cdots,a_n)\in\Ff_q^n:\sum_{i=1}^na_iu_iv_i^q=0,\mbox{ for all}\ \mathbf{u},\mathbf{v}\in\cc\right\},
\end{equation}
where $\cc$ is an $[n,k,d]_{q^2}$ code, $\mathbf{u}=(u_1,\cdots,u_n)$, and $\mathbf{v}=(v_1,\cdots,v_n)$. Using $P(\cc)$, we can determine all Hermitian self-orthogonal codes that are monomially equivalent to $\cc$ or to the punctured codes of $\cc$.
\begin{lem}{\rm \cite{Ball2022}}\label{lem31}
Let $\cc$ be an $[n,k,d]_{q^2}$ code. There exists a punctured code of $\cc$ of length $m\leq n$ which is monomially equivalent to a Hermitian self-orthogonal code if and only if there is a codeword of weight $m$ in $P(\cc)$.
\end{lem}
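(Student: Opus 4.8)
The plan is to convert the condition ``$\cc^S$ is monomially equivalent to a Hermitian self-orthogonal code'' into a purely algebraic condition on a diagonal scaling of the surviving coordinates, and then to recognize that this condition is exactly the membership of a suitable $\Ff_q$-vector in $P(\cc)$, the bridge being the norm map $N\colon\Ff_{q^2}^*\to\Ff_q^*$, $N(\lambda)=\lambda^{q+1}$.

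First I would record two preliminary observations that drive everything. (i) A coordinate permutation preserves Hermitian self-orthogonality, since $\sum_i x_{\rho(i)}y_{\rho(i)}^q=\sum_j x_jy_j^q$; together with the factorization of any monomial map as a permutation applied after a diagonal scaling, this lets me assume the monomial equivalence is a \emph{pure diagonal scaling}, the permutation being harmless. (ii) The map $N(\lambda)=\lambda^{q+1}$ is a surjective homomorphism from $\Ff_{q^2}^*$ onto $\Ff_q^*$, with kernel of order $q+1$; in particular every $a\in\Ff_q^*$ equals $\mu^{q+1}$ for some $\mu\in\Ff_{q^2}^*$. This surjectivity is precisely what forces $P(\cc)$ to be an $\Ff_q$-code and is the reason that weights in $P(\cc)$ correspond to lengths of punctured Hermitian self-orthogonal codes.

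For the ``if'' direction, given $\mathbf{a}\in P(\cc)$ of weight $m$ I would set $T=\{i:a_i\ne0\}$ and $S=[n]\setminus T$, and puncture on $S$, obtaining a length-$m$ code. Using (ii), pick $\lambda_i\in\Ff_{q^2}^*$ with $\lambda_i^{q+1}=a_i$ for each $i\in T$ and scale coordinate $i$ by $\lambda_i$. For all $\mathbf{u},\mathbf{v}\in\cc$,
\[
\sum_{i\in T}(\lambda_iu_i)(\lambda_iv_i)^q=\sum_{i\in T}\lambda_i^{q+1}u_iv_i^q=\sum_{i\in T}a_iu_iv_i^q=\sum_{i=1}^na_iu_iv_i^q=0,
\]
since $a_i=0$ on $S$ and $\mathbf{a}\in P(\cc)$; hence the scaled punctured code is Hermitian self-orthogonal. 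For the ``only if'' direction I would reverse this: by (i) the hypothesis yields scalars $\mu_i\in\Ff_{q^2}^*$, $i\in T$, with $\sum_{i\in T}\mu_i^{q+1}u_iv_i^q=0$ for all $\mathbf{u},\mathbf{v}\in\cc$; setting $a_i=\mu_i^{q+1}\in\Ff_q^*$ for $i\in T$ and $a_i=0$ otherwise produces $\mathbf{a}\in P(\cc)$ with ${\rm wt}(\mathbf{a})=m$.

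The step needing the most care is observation (i): I must disentangle the monomial map so that only the diagonal scaling touches the Hermitian form, and confirm that each scalar contributes $\mu_i^{q+1}$, which lands in $\Ff_q^*$ rather than merely in $\Ff_{q^2}$. Once this norm-map dictionary $a_i\leftrightarrow\mu_i^{q+1}$ and its surjectivity are in hand, the two implications become mirror images of one another and the length-versus-weight bookkeeping is immediate.
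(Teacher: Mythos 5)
Your proposal is correct. The paper does not reprove this lemma (it is quoted from Ball--Vilar \cite{Ball2022}), but your argument --- factoring the monomial map into a harmless permutation plus a diagonal scaling, and using surjectivity of the norm map $\lambda\mapsto\lambda^{q+1}$ from $\Ff_{q^2}^*$ onto $\Ff_q^*$ to translate between weight-$m$ vectors of $P(\cc)$ and scalings that render the punctured code Hermitian self-orthogonal --- is exactly the technique the paper itself uses to prove the analogous Lemma~\ref{lem32}, so this is essentially the same approach.
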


Lemma \ref{lem31} motivates us to give a similar characterization of a GRS code whose Hermitian hull contains a GRS code. Let $\alpha$ be a primitive element of $\Ff_{q^2}$. Using $\mathbf{1}=(1,\ldots,1)$ and $\mathbf{b}=(\alpha^0,\ldots,\alpha^{q^2-2},0)$, we construct $GRS_k(\mathbf{b},\mathbf{1})$, with parameters $[q^2,k,q^2-k+1]_{q^2}$, whose Hermitian hull contains $GRS_\ell(\mathbf{b},\mathbf{1})$ as a subcode with $\ell\leq k$. We define the code $P(GRS_{k,\ell})$ as
\begin{equation}\label{GRS1}
P(GRS_{k,\ell})=\left\{(a_1,\cdots,a_n)\in\Ff_{q}^n:\sum_{i=1}^{q^2}a_iu_iv_i^q=0,\mbox{ for all } \mathbf{u}\in GRS_\ell(\mathbf{b},\mathbf{1}), \mbox{ for all } \mathbf{v}\in GRS_k(\mathbf{b},\mathbf{1})\right\}.
\end{equation}
By definition, $P(GRS_{k,\ell})$ has parity-check matrix
\[
H=\begin{pmatrix}
1 & 1 &\cdots & 1 & 1\\
1 & \alpha^{s_1} &\cdots & \alpha^{s_1(q^2-2)}& 0\\
\vdots& \vdots &  \ddots & \vdots&\vdots\\
1 & \alpha^{s_{t}} &\cdots & \alpha^{s_{t}(q^2-2)}& 0\\
\end{pmatrix},
\]
where $\{s_1,\ldots,s_t\}=\{i+qj:i\in[0,\ell-1],j\in[0,k-1]\}\setminus\{0\}$. Based on (\ref{extendedcyclic}), we know that $P(GRS_{k,\ell})$ is an extended code $E(\mathcal{D}_{k,\ell})$ of a $q$-ary cyclic code $\mathcal{D}_{k,\ell}$ of length $q^2-1$ with defining set 
\begin{equation}\label{eq:defset}
\{i+qj: i\in[0,\ell-1],j\in[\ell,k-1]\} \bigcup \{i+qj:i\in[\ell,k-1],j\in[0,\ell-1]\} \bigcup \left(\{i+qj:i\in[0,\ell-1],j\in[0,\ell-1]\} \setminus\{0\}\right).
\end{equation}
Note that the above defining set is $\{i+qj:i\in[0,\ell-1],j\in[0,\ell-1]\} \setminus\{0\}$ if $\ell=k$ as $[k,k-1]=\emptyset$. By Lemma \ref{HTbound}, the parameters of $E(\mathcal{D}_{k,\ell})$ are $[q^2,q^2-2\ell k+\ell^2,d\geq k+\ell-1]_q$.

\begin{lem}\label{lem32}
Let $\alpha$ be a primitive element of $\Ff_{q^2}$ and let $\mathbf{b}=(\alpha^0,\ldots,\alpha^{q^2-2},0)$. Let $P(GRS_{k,\ell})$ in (\ref{GRS1}) be the extended code $E(\mathcal{D}_{k,\ell})$ of $\mathcal{D}_{k,\ell}$ whose defining set is in (\ref{eq:defset}) and let $\ell\leq k$. Let $\widehat{\mathbf{a}}$ and $\widehat{\mathbf{b}}$ be vectors in $\Ff_{q^2}^m$ with $m\leq q^2$. There exists a $q^2$-ary GRS code $GRS_k(\widehat{\mathbf{b}},\widehat{\mathbf{a}})$ whose Hermitian hull contains its subcode $GRS_\ell(\widehat{\mathbf{b}},\widehat{\mathbf{a}})$ if and only if the following conditions are met. 
\begin{itemize}
    \item There exists a codeword $(x_1,\ldots,x_{q^2})$ of weight $m$ with nonzero entries $x_{i_1},\ldots,x_{i_m}$ in $E(\mathcal{D}_{k,\ell})$.
    \item The vector $\widehat{\mathbf{a}}$ is of the form $(a_{i_1},\ldots,a_{i_m})$ with $a_{i_j}^{q+1}=x_{i_j}$ for each $j\in[m]$.
    \item The vector $\widehat{\mathbf{b}}$ is obtained by deleting the $q^2-m$ entries whose coordinates are indexed by $[q^2]\setminus\{i_1,\ldots,i_m\}$ in $\mathbf{b}$.
\end{itemize}
\end{lem}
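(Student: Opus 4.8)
The plan is to strip every definition down until the hull-containment statement becomes a system of power-sum identities that is literally the parity-check system defining $P(GRS_{k,\ell})=E(\mathcal{D}_{k,\ell})$; the lemma then reduces to a dictionary between weight-$m$ codewords of that extended cyclic code and admissible pairs $(\widehat{\mathbf{b}},\widehat{\mathbf{a}})$, in exact analogy with how Lemma \ref{lem31} trades codewords of $P(\cc)$ for punctured Hermitian self-orthogonal codes. First I would note that, since $\ell\le k$ and the two GRS codes share the evaluation vector $\widehat{\mathbf{b}}$ and the column multipliers $\widehat{\mathbf{a}}$, the inclusion $GRS_\ell(\widehat{\mathbf{b}},\widehat{\mathbf{a}})\subseteq GRS_k(\widehat{\mathbf{b}},\widehat{\mathbf{a}})$ is automatic; hence the requirement $GRS_\ell(\widehat{\mathbf{b}},\widehat{\mathbf{a}})\subseteq{\rm Hull}_{\rm H}(GRS_k(\widehat{\mathbf{b}},\widehat{\mathbf{a}}))$ is equivalent to the single orthogonality condition $GRS_\ell(\widehat{\mathbf{b}},\widehat{\mathbf{a}})\subseteq GRS_k(\widehat{\mathbf{b}},\widehat{\mathbf{a}})^{\perp_{\rm H}}$.

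Next I would expand that orthogonality explicitly. For $f$ with $\deg f<\ell$ and $g$ with $\deg g<k$ one has $\langle(\widehat{a}_jf(\widehat{b}_j))_j,(\widehat{a}_jg(\widehat{b}_j))_j\rangle_{\rm H}=\sum_j\widehat{a}_j^{q+1}f(\widehat{b}_j)g(\widehat{b}_j)^q$. Writing $f=\sum_s f_sx^s$ and $g=\sum_t g_tx^t$ and using that the $f_s$ and, through the Frobenius bijection $c\mapsto c^q$, the $g_t^q$ range freely over $\Ff_{q^2}$, this bilinear form vanishes for all $f,g$ if and only if $\sum_{j=1}^m\widehat{a}_j^{q+1}\,\widehat{b}_j^{\,s+qt}=0$ for every $s\in[0,\ell-1]$ and $t\in[0,k-1]$. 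These exponents $s+qt$ are exactly the ones attached to the parity-check matrix $H$ of $P(GRS_{k,\ell})$, so the remaining work is purely bookkeeping: transporting these identities, read on the full length-$q^2$ vector $\mathbf{b}=(\alpha^0,\ldots,\alpha^{q^2-2},0)$, into a weight-$m$ codeword of $E(\mathcal{D}_{k,\ell})$, and back.

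For the two directions I would argue as follows. Given a codeword $(x_1,\ldots,x_{q^2})\in E(\mathcal{D}_{k,\ell})$ of weight $m$ with support $\{i_1,\ldots,i_m\}$, its checks read $\sum_{j=1}^m x_{i_j}b_{i_j}^{\,s+qt}=0$ over that exponent set, since the zero coordinates drop out. Each nonzero $x_{i_j}$ lies in $\Ff_q^*$, and the norm map $c\mapsto c^{q+1}$ from $\Ff_{q^2}^*$ onto $\Ff_q^*$ is surjective (its kernel has order $\gcd(q+1,q^2-1)=q+1$, so its image has order $q-1$), so I may pick $a_{i_j}\in\Ff_{q^2}^*$ with $a_{i_j}^{q+1}=x_{i_j}$; taking $\widehat{\mathbf{a}}=(a_{i_1},\ldots,a_{i_m})$ and $\widehat{\mathbf{b}}=(b_{i_1},\ldots,b_{i_m})$ reproduces the power-sum identities, hence the desired hull containment. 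Conversely, given such a GRS code I would set $x_{i_j}=\widehat{a}_j^{q+1}\in\Ff_q^*$ on the coordinate $i_j$ for which $\widehat{b}_j=b_{i_j}$ — legitimate because $\mathbf{b}$ enumerates all of $\Ff_{q^2}$, so the distinct $\widehat{b}_j$ form a sub-selection — and $x_i=0$ elsewhere; the orthogonality identities then show that this length-$q^2$, weight-$m$, $\Ff_q$-valued vector satisfies every check of $P(GRS_{k,\ell})=E(\mathcal{D}_{k,\ell})$.

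I do not expect a single hard obstacle, but two pieces of bookkeeping need genuine care. The first is reconciling the raw exponent set $\{s+qt:s\in[0,\ell-1],\,t\in[0,k-1]\}\setminus\{0\}$ coming from $GRS_\ell\perp_{\rm H}GRS_k$ with the defining set (\ref{eq:defset}), which is its closure under $e\mapsto qe\bmod(q^2-1)$; these impose identical constraints precisely because the entries $x_j=\widehat{a}_j^{q+1}$ lie in $\Ff_q$, so $\big(\sum_j x_j\widehat{b}_j^{\,e}\big)^q=\sum_j x_j\widehat{b}_j^{\,qe}$ makes the check at $e$ and at $qe$ equivalent, and one must further check that the excluded exponent $0$ is faithfully carried by the all-ones row of the extension evaluated against the point $b_{q^2}=0$. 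The second is the field-theoretic input that every nonzero element of $\Ff_q$ is a norm $c^{q+1}$, which is exactly what makes the passage between the $q$-ary codeword entries $x_{i_j}$ and the $q^2$-ary multipliers $a_{i_j}$ possible and free, mirroring the weight-to-support correspondence behind Lemma \ref{lem31}.
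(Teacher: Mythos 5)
Your proof is correct and takes essentially the same route as the paper's: both reduce hull containment to Hermitian orthogonality via the automatic inclusion $GRS_\ell(\widehat{\mathbf{b}},\widehat{\mathbf{a}})\subseteq GRS_k(\widehat{\mathbf{b}},\widehat{\mathbf{a}})$, and both trade a weight-$m$ codeword of $P(GRS_{k,\ell})=E(\mathcal{D}_{k,\ell})$ for the multiplier vector $\widehat{\mathbf{a}}$ through surjectivity of the norm map $c\mapsto c^{q+1}$ and the identity $x_{i_j}u_{i_j}v_{i_j}^q=(a_{i_j}u_{i_j})(a_{i_j}v_{i_j})^q$. The only cosmetic difference is that the paper applies the defining bilinear condition (\ref{GRS1}) directly to codeword pairs restricted to the support, whereas you expand it in the monomial basis into power-sum identities; the cyclotomic-closure and zero-evaluation-point bookkeeping you carefully flag is already absorbed into the paper's prior identification of $P(GRS_{k,\ell})$ with $E(\mathcal{D}_{k,\ell})$, which the lemma takes as given.
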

\begin{proof}
Let there be a codeword $(x_1,\ldots,x_n)$ of weight $m$ with nonzero entries $x_{i_1},\ldots,x_{i_m}$ in $E(\mathcal{D}_{k,\ell})$. There exists a nonzero element $a_{i_j} \in\Ff_{q^2}^*$ such that $x_{i_j}=a_{i_j}^{q+1}$ for each $j\in[m]$, since $x_{i_j}\in\Ff_q^*$. We note that $m\geq k+\ell-1\geq k$. Assume that $\mathbf{1}=(1,\ldots,1)$ is a vector of length $q^2$. Puncturing $GRS_k(\mathbf{b},\mathbf{1})$ and $GRS_\ell(\mathbf{b},\mathbf{1})$ on $S=[q^2]\setminus\{i_1,\ldots,i_m\}$, respectively, we obtain $GRS_k(\widehat{\mathbf{b}},\widehat{\mathbf{1}})$ and $GRS_\ell(\widehat{\mathbf{b}},\widehat{\mathbf{1}})$, both of length $m$. By (\ref{GRS1}), for any $(u_{i_1},\ldots,u_{i_m})\in GRS_\ell(\widehat{\mathbf{b}},\widehat{\mathbf{1}})$ and any $(v_{i_1},\ldots,v_{i_m})\in GRS_k(\widehat{\mathbf{b}},\widehat{\mathbf{1}})$, we have 
\[
\sum_{j=1}^mx_{i_j}u_{i_j}v_{i_j}^q = \sum_{j=1}^ma_{i_j}u_{i_j}(a_{i_j}v_{i_j})^q=0.
\]
Writing $\widehat{\mathbf{a}}=(a_{i_1},\ldots,a_{i_m})$, we arrive at $GRS_\ell(\widehat{\mathbf{b}},\widehat{\mathbf{a}}) \subseteq GRS_k(\widehat{\mathbf{b}},\widehat{\mathbf{a}})^{\perp_{\rm H}}$, which implies that $GRS_\ell(\widehat{\mathbf{b}},\widehat{\mathbf{a}})$ is a subcode of the Hermitian hull of $GRS_k(\widehat{\mathbf{b}},\widehat{\mathbf{a}})$. Conversely, the desired result follows by reversing the direction of the argument above.
\end{proof}

Lemma \ref{lem32} directly leads to the following technique to generate GRS codes whose Hermitian hulls are MDS.

\begin{thm}\label{GRSMDSHULL}
Let the notation be as in Lemma \ref{lem32}. If there is a codeword $(x_1,\ldots,x_{q^2})$ of weight $m$ with nonzero entries $x_{i_1},\ldots,x_{i_m}$ in $E(\mathcal{D}_{k,k-1})\setminus E(\mathcal{D}_{k,k})$, then the Hermitian hull of $GRS_k(\widehat{\mathbf{b}},\widehat{\mathbf{a}})$ is $GRS_{k-1}(\widehat{\mathbf{b}},\widehat{\mathbf{a}})$.
\end{thm}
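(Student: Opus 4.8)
The plan is to prove the equality by applying Lemma \ref{lem32} twice, once with $\ell=k-1$ to get a containment and once with $\ell=k$ to pin down the dimension, and then matching the two. The hypothesis hands me a weight-$m$ codeword $\mathbf{x}=(x_1,\ldots,x_{q^2})$ sitting in $E(\mathcal{D}_{k,k-1})\setminus E(\mathcal{D}_{k,k})$, with nonzero entries $x_{i_1},\ldots,x_{i_m}$. From this single vector I would build the data $(\widehat{\mathbf{a}},\widehat{\mathbf{b}})$ exactly as prescribed in Lemma \ref{lem32}: set $\widehat{\mathbf{a}}=(a_{i_1},\ldots,a_{i_m})$ with $a_{i_j}^{q+1}=x_{i_j}$ (possible since each $x_{i_j}\in\Ff_q^*$ is a norm), and obtain $\widehat{\mathbf{b}}$ by deleting the coordinates of $\mathbf{b}$ indexed by $[q^2]\setminus\{i_1,\ldots,i_m\}$.

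First I would read off the containment. Since $\mathbf{x}\in E(\mathcal{D}_{k,k-1})$, the ``if'' direction of Lemma \ref{lem32} with $\ell=k-1$ gives $GRS_{k-1}(\widehat{\mathbf{b}},\widehat{\mathbf{a}})\subseteq GRS_k(\widehat{\mathbf{b}},\widehat{\mathbf{a}})^{\perp_{\rm H}}$, so this $(k-1)$-dimensional GRS subcode lies in ${\rm Hull}_{\rm H}(GRS_k(\widehat{\mathbf{b}},\widehat{\mathbf{a}}))$, forcing $\dim{\rm Hull}_{\rm H}(GRS_k(\widehat{\mathbf{b}},\widehat{\mathbf{a}}))\ge k-1$. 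Next I would extract the reverse inequality from the ``only if'' direction with $\ell=k$. The case $\ell=k$ makes the subcode $GRS_\ell(\widehat{\mathbf{b}},\widehat{\mathbf{a}})$ coincide with $GRS_k(\widehat{\mathbf{b}},\widehat{\mathbf{a}})$ itself, so ``the hull contains the subcode'' is literally the assertion that $GRS_k(\widehat{\mathbf{b}},\widehat{\mathbf{a}})$ is Hermitian self-orthogonal. If it were, Lemma \ref{lem32} would produce a weight-$m$ codeword in $E(\mathcal{D}_{k,k})$ attached to the \emph{same} $(\widehat{\mathbf{a}},\widehat{\mathbf{b}})$; but for a fixed support and fixed $\widehat{\mathbf{a}}$ the only candidate is the vector with $x_{i_j}=a_{i_j}^{q+1}$, i.e. $\mathbf{x}$ itself. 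This would give $\mathbf{x}\in E(\mathcal{D}_{k,k})$, contradicting the hypothesis. Hence $GRS_k(\widehat{\mathbf{b}},\widehat{\mathbf{a}})$ is not self-orthogonal, its hull is a proper subspace of the $k$-dimensional code, and so $\dim{\rm Hull}_{\rm H}(GRS_k(\widehat{\mathbf{b}},\widehat{\mathbf{a}}))\le k-1$.

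Combining the two bounds yields $\dim{\rm Hull}_{\rm H}(GRS_k(\widehat{\mathbf{b}},\widehat{\mathbf{a}}))=k-1$; since the earlier containment is then between two spaces of the same dimension, it becomes the desired equality ${\rm Hull}_{\rm H}(GRS_k(\widehat{\mathbf{b}},\widehat{\mathbf{a}}))=GRS_{k-1}(\widehat{\mathbf{b}},\widehat{\mathbf{a}})$. I expect the only delicate point to be the rigidity argument in the upper-bound step: one must be sure that the self-orthogonality of $GRS_k(\widehat{\mathbf{b}},\widehat{\mathbf{a}})$ is detected by the \emph{same} coordinate vector $\mathbf{x}$ that witnesses the $GRS_{k-1}$ containment, rather than by some unrelated element of $E(\mathcal{D}_{k,k})$. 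This is exactly what the structure of Lemma \ref{lem32} guarantees, since both membership tests are evaluated against the identical data $(\widehat{\mathbf{a}},\widehat{\mathbf{b}})$, and the defining set of $\mathcal{D}_{k,k}$ is that of $\mathcal{D}_{k,k-1}$ together with the single coset $\{(k-1)(q+1)\}$, so passing from $E(\mathcal{D}_{k,k-1})$ to $E(\mathcal{D}_{k,k})$ imposes one further parity check on the very same $\mathbf{x}$. A final piece of routine bookkeeping is to note that $m\ge k+\ell-1\ge k$ ensures $GRS_{k-1}(\widehat{\mathbf{b}},\widehat{\mathbf{a}})$ and $GRS_k(\widehat{\mathbf{b}},\widehat{\mathbf{a}})$ really have dimensions $k-1$ and $k$.
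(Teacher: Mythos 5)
Your proof is correct and follows essentially the same route as the paper, which states Theorem \ref{GRSMDSHULL} as a direct consequence of Lemma \ref{lem32}: the ``if'' direction with $\ell=k-1$ gives ${\rm Hull}_{\rm H}(GRS_k(\widehat{\mathbf{b}},\widehat{\mathbf{a}}))\supseteq GRS_{k-1}(\widehat{\mathbf{b}},\widehat{\mathbf{a}})$, and the ``only if'' direction with $\ell=k$ rules out Hermitian self-orthogonality, pinning the hull dimension at $k-1$. Your rigidity observation---that the witness codeword for $\ell=k$ is forced, via the fixed data $(\widehat{\mathbf{a}},\widehat{\mathbf{b}})$, to be $\mathbf{x}$ itself, so that self-orthogonality would place $\mathbf{x}$ in $E(\mathcal{D}_{k,k})$---is exactly the detail the paper leaves implicit, and you have filled it in correctly.
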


\begin{remark}\label{remark1}
The respective dimensions of $E(\mathcal{D}_{k,k-1})$ and $E(\mathcal{D}_{k,k})$ are $q^2-k^2+1$ and $q^2-k^2$. Hence, there exists a nonzero codeword in $E(\mathcal{D}_{k,k-1})\setminus E(\mathcal{D}_{k,k})$ whenever $k\leq q$. If $k<q$, then Lemma \ref{trace} allows us to express $E(\mathcal{D}_{k,k-1})\setminus E(\mathcal{D}_{k,k})$ as
\begin{equation}\label{eqtr}
\left\{\left(c_0,\cdots,c_{q^2-2},\sum_{r=0}^{q^2-2}c_r\right):\theta_{k-1,k-1}\in\Ff_{q}^*,\theta_{t,t}\in\Ff_{q},t\in[k,q-1], \mbox{ and } \theta_{i,j}\in\Ff_{q^2} \mbox{ for } (i,j) \in T \right\},
\end{equation}
where $T=\bigcup_{i=k}^{q-1}\left\{(i,j):j\in[0,k-1] 
\bigcup[i+1,q-1]\right\}$, with $[q,q-1]=\emptyset$, and
\[
c_r=\sum_{t=k-1}^{q-1}\theta_{t,t}\alpha^{-r t(q+1)}+\sum_{(i,j)\in T}\Tr^2_1\left(\theta_{i,j}\alpha^{-r(i+qj))}\right)
\]
for each $r\in[0,q^2-2]$. We can then check that $\sum_{r=0}^{q^2-2}c_r=-\theta_{q-1,q-1}$. By \eqref{eqtr}, if  there exists a polynomial
\[
f(x) = x^{(k-1)(q+1)}+ \sum_{t=k}^{q-1}\theta_{t,t}x^{t(q+1)}+ \sum_{i=k}^{q-1} \sum_{j=0}^{k-1}\left(\theta_{i,j}x^{i+qj}+ \theta_{i,j}^qx^{j+qi}\right)+\sum_{i=k}^{q-2} \sum_{j=i+1}^{q-1}\left(\theta_{i,j}x^{i+qj}+ \theta_{i,j}^qx^{j+qi}\right)
\]
with $q^2-m$ distinct roots over $\Ff_{q^2}$, then $E(\mathcal{D}_{k,k-1})\setminus E(\mathcal{D}_{k,k})$ contains a codeword of weight $m$.
\end{remark}

Setting $k=q$ in Theorem \ref{GRSMDSHULL} gives a construction of GRS codes whose Hermitian hulls are MDS of dimension $q-1$.

\begin{thm}\label{GRScon1}
We use a primitive element $\alpha$ of $\Ff_{q^2}$ to stipulate vectors $\mathbf{1}=(1,\ldots,1)$ and $\mathbf{b}=(\alpha^0,\ldots,\alpha^{q^2-2},0)$ of length $q^2$. The Hermitian hull of $GRS_q(\mathbf{b},\mathbf{1})$ is $GRS_{q-1}(\mathbf{b},\mathbf{1})$.
\end{thm}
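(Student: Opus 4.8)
The plan is to realize Theorem \ref{GRScon1} as the instance $k=q$ of Theorem \ref{GRSMDSHULL}, so that the entire task reduces to exhibiting a full-weight codeword in $E(\mathcal{D}_{q,q-1})\setminus E(\mathcal{D}_{q,q})$. Once a codeword $(x_1,\ldots,x_{q^2})$ of weight $m=q^2$ lives in that difference, Theorem \ref{GRSMDSHULL} applies with no puncturing: the index set of nonzero entries is all of $[q^2]$, so $\widehat{\mathbf{b}}=\mathbf{b}$, and the relations $a_{i_j}^{q+1}=x_{i_j}$ coming from Lemma \ref{lem32} determine $\widehat{\mathbf{a}}$. I therefore expect the proof to have essentially one moving part, namely the identification of the two extended cyclic codes involved.

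First I would pin down $E(\mathcal{D}_{q,q})$. Specializing the defining set \eqref{eq:defset} to $\ell=k=q$, its three constituent pieces merge into $\{i+qj:i,j\in[0,q-1]\}\setminus\{0\}$, which by the base-$q$ digit bijection is exactly $[1,q^2-1]$; reducing modulo $n=q^2-1$ this is all of $[0,q^2-2]$. Hence $\mathcal{D}_{q,q}$, and with it $E(\mathcal{D}_{q,q})$, is the zero code, in agreement with the dimension count $q^2-k^2=0$ recorded in Remark \ref{remark1}. Consequently $E(\mathcal{D}_{q,q-1})\setminus E(\mathcal{D}_{q,q})$ is simply the set of nonzero codewords of the one-dimensional code $E(\mathcal{D}_{q,q-1})$, all of which share a common weight.

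Next I would identify that one-dimensional code. Specializing \eqref{eq:defset} to $\ell=q-1$, $k=q$ and tracking base-$q$ digits shows that exactly the two pairs $(i,j)=(0,0)$ and $(i,j)=(q-1,q-1)$ are omitted, so the defining set is $[1,q^2-2]$ and the generating set is the single cyclotomic coset $\{0\}$. By the trace representation in Lemma \ref{trace}, $\mathcal{D}_{q,q-1}$ is therefore the length-$(q^2-1)$ repetition code. Extending it, a constant word $(\theta,\ldots,\theta)$ of length $q^2-1$ has coordinate sum $(q^2-1)\theta=-\theta$ in $\Ff_q$, so its appended parity coordinate is again $\theta$; thus $E(\mathcal{D}_{q,q-1})$ is the length-$q^2$ repetition code, and every nonzero codeword has weight $q^2$.

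To conclude, I would take the all-ones codeword $\mathbf{1}$, which lies in $E(\mathcal{D}_{q,q-1})\setminus E(\mathcal{D}_{q,q})$ and has weight $m=q^2$; since each $x_{i_j}=1$ we may choose $a_{i_j}=1$, giving $\widehat{\mathbf{a}}=\mathbf{1}$ and $\widehat{\mathbf{b}}=\mathbf{b}$. Theorem \ref{GRSMDSHULL} then yields ${\rm Hull}_{\rm H}(GRS_q(\mathbf{b},\mathbf{1}))=GRS_{q-1}(\mathbf{b},\mathbf{1})$. The only genuinely delicate point is the combinatorial bookkeeping in the two defining-set computations—keeping the base-$q$ digit bijection and the reduction modulo $q^2-1$ straight—together with the small observation that $q^2-1\equiv-1$ in $\Ff_q$, which is what keeps the extended code all-ones rather than merely constant; everything else is a direct appeal to the machinery already assembled in Lemma \ref{lem32} and Theorem \ref{GRSMDSHULL}.
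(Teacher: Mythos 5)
Your proposal is correct and follows essentially the same route as the paper's proof: specialize Theorem \ref{GRSMDSHULL} to $k=q$, identify $E(\mathcal{D}_{q,q})$ as the zero code and $E(\mathcal{D}_{q,q-1})$ as the length-$q^2$ repetition code, and feed the all-ones codeword (with $\widehat{\mathbf{a}}=\mathbf{1}$, $\widehat{\mathbf{b}}=\mathbf{b}$) into that theorem. The paper merely asserts the dimension count and the repetition-code identification, whereas you verify them via the defining sets and the trace representation, so your write-up is a more detailed version of the same argument.
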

\begin{proof}
The respective dimensions of $E(\mathcal{D}_{q,q-1})$ and $E(\mathcal{D}_{q,q})$ are $1$ and $0$. By definition, 
\[
E(\mathcal{D}_{q,q-1})=\{(a,\ldots,a):a\in\Ff_{q^2}\}.
\]
By Theorem \ref{GRSMDSHULL}, the Hermitian hull of $GRS_q(\mathbf{b},\mathbf{1})$ is $GRS_{q-1}(\mathbf{b},\mathbf{1})$.
\end{proof}

Relying on (\ref{remark1}), we turn the task of finding a nonzero codeword of $E(\mathcal{D}_{k,k-1})\setminus E(\mathcal{D}_{k,k})$ into counting the number of the roots of a polynomial. As the $\theta_{i,j}$ in (\ref{eqtr}) traverses the elements of $\Ff_{q^2}$, we obtain the following constructions.

\begin{thm}\label{GRScon2}
Let $k$ be an integer such that $1<k<q$. Let $\alpha$ be a primitive element of $\Ff_{q^2}$. Using vectors 
\[
\mathbf{a}=(\alpha^{-0(k-1)},\ldots,\alpha^{-(q^2-2)(k-1)}) \mbox{ and } \mathbf{b}=(\alpha^0,\ldots,\alpha^{q^2-2})
\]
of length $q^2-1$, the Hermitian hull of $GRS_k(\mathbf{b},\mathbf{a})$ is $GRS_{k-1}(\mathbf{b},\mathbf{a})$.
\end{thm}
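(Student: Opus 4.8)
The plan is to apply Theorem \ref{GRSMDSHULL} with the specific choice $\ell = k-1$ and to exhibit an explicit codeword of weight $m = q^2-1$ in $E(\mathcal{D}_{k,k-1})\setminus E(\mathcal{D}_{k,k})$ whose nonzero entries, together with the recipe in Lemma \ref{lem32}, recover exactly the pair $(\mathbf{b},\mathbf{a})$ stated here. Concretely, since the target length is $m = q^2-1$ (one less than $q^2$), I want a codeword of $E(\mathcal{D}_{k,k-1})$ whose single zero coordinate sits in the position corresponding to the evaluation point $0$ in $\mathbf{b}=(\alpha^0,\ldots,\alpha^{q^2-2},0)$, so that deleting it yields $\widehat{\mathbf{b}} = (\alpha^0,\ldots,\alpha^{q^2-2})$ as claimed. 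The first step is therefore to identify the right codeword.

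First I would invoke Remark \ref{remark1}, which (for $k<q$) parametrizes the codewords of $E(\mathcal{D}_{k,k-1})\setminus E(\mathcal{D}_{k,k})$ via the polynomial
\[
f(x) = x^{(k-1)(q+1)}+ \sum_{t=k}^{q-1}\theta_{t,t}x^{t(q+1)}+ \sum_{i=k}^{q-1} \sum_{j=0}^{k-1}\left(\theta_{i,j}x^{i+qj}+ \theta_{i,j}^qx^{j+qi}\right)+\sum_{i=k}^{q-2} \sum_{j=i+1}^{q-1}\left(\theta_{i,j}x^{i+qj}+ \theta_{i,j}^qx^{j+qi}\right),
\]
where a codeword of weight $m$ corresponds to $f$ having exactly $q^2-m$ distinct roots in $\Ff_{q^2}$. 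The simplest, cleanest choice is to set every free parameter $\theta_{i,j}$ and $\theta_{t,t}$ to zero, leaving only the leading monomial $f(x) = x^{(k-1)(q+1)}$. Then I would argue that this $f$ vanishes only at $x=0$, so that the associated codeword $(c_0,\ldots,c_{q^2-2}, -\theta_{q-1,q-1})$ has its unique zero precisely at the coordinate of the evaluation point $0$, giving weight $m = q^2-1$. Computing the entries $c_r = \alpha^{-r(k-1)(q+1)}$ for $r\in[0,q^2-2]$ then shows the nonzero values are $x_{i_j} = \alpha^{-r(k-1)(q+1)} = (\alpha^{-r(k-1)})^{q+1}$, so that Lemma \ref{lem32} yields $a_{i_j} = \alpha^{-r(k-1)}$, matching $\mathbf{a}=(\alpha^{-0(k-1)},\ldots,\alpha^{-(q^2-2)(k-1)})$ exactly. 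Feeding this into Theorem \ref{GRSMDSHULL} gives the asserted Hermitian hull $GRS_{k-1}(\mathbf{b},\mathbf{a})$.

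I anticipate the main obstacle to be the bookkeeping that confirms $x^{(k-1)(q+1)}$ genuinely arises as the defining polynomial of a codeword in $E(\mathcal{D}_{k,k-1})\setminus E(\mathcal{D}_{k,k})$ and not of $E(\mathcal{D}_{k,k})$ alone. One must verify that the monomial exponent $(k-1)(q+1)$ corresponds to the coset representative $\theta_{k-1,k-1}$ with a \emph{nonzero} coefficient (the constraint $\theta_{k-1,k-1}\in\Ff_q^*$ in \eqref{eqtr}), which is what places the codeword in the set difference rather than in $E(\mathcal{D}_{k,k})$; this is exactly the role of the distinguished diagonal term. A secondary subtlety is matching coordinates: the cyclic structure indexes positions by $r\in[0,q^2-2]$ via the trace representation, while the extended-code and GRS pictures index by the evaluation points $\alpha^r$, so I must check that the appended coordinate $\sum_r c_r = -\theta_{q-1,q-1}$ (here $=0$) corresponds to the evaluation point $0$ appended in $\mathbf{b}$, confirming that the deleted zero coordinate is indeed the last one. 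Once these identifications are pinned down, the remaining computation of $a_{i_j}^{q+1} = x_{i_j}$ is routine, and the conclusion follows immediately from Theorem \ref{GRSMDSHULL}.
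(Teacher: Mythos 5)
Your proposal is correct and takes essentially the same route as the paper's own proof: the paper simply sets $\theta_{k-1,k-1}=1$ and all other parameters to zero in \eqref{eqtr}---exactly your choice $f(x)=x^{(k-1)(q+1)}$---to exhibit the weight-$(q^2-1)$ codeword $\left(\alpha^{-0(k-1)(q+1)},\ldots,\alpha^{-(q^2-2)(k-1)(q+1)},0\right)$ in $E(\mathcal{D}_{k,k-1})\setminus E(\mathcal{D}_{k,k})$, and then concludes via Theorem \ref{GRSMDSHULL}. The bookkeeping you flag (nonzero coefficient on the $(k-1)(q+1)$ term placing the codeword in the set difference, and the appended coordinate matching the evaluation point $0$) is exactly what makes the paper's one-line verification work, so the two arguments coincide.
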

\begin{proof}
By setting $\theta_{k-1,k-1}=1$ and $\theta_{t,t}=\theta_{i,j}=0$, for each $t\in[k,q-1]$ and each $(i,j)\in T$ in (\ref{eqtr}), there exists
\[
\left(\alpha^{-0(k-1)(q+1)},\ldots,\alpha^{-(q^2-2)(k-1)(q+1)},0 \right) \in E(\mathcal{D}_{k,k-1})\setminus E(\mathcal{D}_{k,k}).
\]
By Theorem \ref{GRSMDSHULL}, 
${\rm Hull}_{\rm H}(GRS_k(\mathbf{b},\mathbf{a})) = GRS_{k-1}(\mathbf{b},\mathbf{a})$.
\end{proof}

\begin{thm}\label{GRScon3}
Let $k$ be a positive integer such that $k <q$. Let $\alpha$ be a primitive element of $\Ff_{q^2}$. Let 
\[
s=\gcd(k-1,q-1) \mbox{ and } B=\left\{i+\frac{q-1}{s}j:i\in[\frac{q-1}{s}-1],j\in[0,(q+1)s-1]\right\}.
\]
If $\mathbf{a}=((a_\ell)_{\ell\in B},a)$ and $\mathbf{b}=((\alpha^\ell)_{\ell\in B},0)$ are vectors of length $q^2-s(q+1)$ with $a^{q+1}=-1$ and $a_\ell^{q+1}=\alpha^{-\ell(k-1)(q+1)}-1$ for each $\ell\in B$, then the Hermitian hull of $GRS_k(\mathbf{b},\mathbf{a})$ is $GRS_{k-1}(\mathbf{b},\mathbf{a})$.
\end{thm}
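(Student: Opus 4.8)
The plan is to apply Theorem \ref{GRSMDSHULL} by exhibiting a suitable codeword in $E(\mathcal{D}_{k,k-1}) \setminus E(\mathcal{D}_{k,k})$ of the appropriate weight, and then reading off $\widehat{\mathbf{a}}$ and $\widehat{\mathbf{b}}$ from its support in the manner prescribed by Lemma \ref{lem32}. Concretely, I would start from the trace representation in \eqref{eqtr} and choose the free parameters $\theta_{i,j}$ so that the resulting polynomial $f(x)$ has a large, structured set of roots. The natural choice here, suggested by the appearance of $s = \gcd(k-1,q-1)$, is to take $\theta_{k-1,k-1} = 1$ and all other $\theta$'s equal to $0$, so that the associated word has coordinates $c_r = \alpha^{-r(k-1)(q+1)}$ for $r \in [0,q^2-2]$. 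The final coordinate is then $-\theta_{q-1,q-1} = 0$, and the support of the codeword on the first $q^2-1$ positions is exactly the set of indices $r$ for which $\alpha^{-r(k-1)(q+1)} \neq 0$, which is all of them.

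The heart of the argument is to identify which coordinates $x_{i_j}$ of this codeword are zero versus nonzero, since the length $m = q^2 - s(q+1)$ and the prescribed evaluation set $B$ must match the support precisely. I would compute the order of $\alpha^{(k-1)(q+1)}$ in $\Ff_{q^2}^*$: since $\alpha$ is primitive of order $q^2-1 = (q-1)(q+1)$, the element $\alpha^{(k-1)(q+1)}$ has order $(q-1)/\gcd(k-1,q-1) = (q-1)/s$. The key computation is to recognize $c_r = \alpha^{-r(k-1)(q+1)}$ as a function of $r$ that is periodic with period $(q-1)/s$ on the exponent, and hence to count and locate the positions where this codeword is supported. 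The set $B = \{i + \tfrac{q-1}{s} j : i \in [\tfrac{q-1}{s}-1], j \in [0,(q+1)s-1]\}$ should turn out to be exactly the set of nonzero positions, and a short counting check confirms $|B| = (\tfrac{q-1}{s}-1)(q+1)s = q^2 - s(q+1) - \text{(boundary terms)}$, so I would verify this cardinality carefully against $m$. This indexing-and-counting step, reconciling the combinatorial description of $B$ with the analytic condition $c_r \neq 0$, is where I expect the main difficulty to lie; the ranges in the definition of $B$ must be checked to agree exactly with the support, including the final appended coordinate.

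Once the support is pinned down, the remaining steps are routine applications of the established machinery. For each surviving position $\ell$, the nonzero entry is $x_\ell = \alpha^{-\ell(k-1)(q+1)} - 1$ (the $-1$ coming from the extension/extended-code structure relating $c_r$ to the parity coordinate), and Lemma \ref{lem32} requires a scalar $a_\ell \in \Ff_{q^2}^*$ with $a_\ell^{q+1} = x_\ell$; this is solvable precisely because $x_\ell \in \Ff_q^*$ and the norm map $x \mapsto x^{q+1}$ maps $\Ff_{q^2}^*$ onto $\Ff_q^*$. The condition $a^{q+1} = -1$ for the final coordinate likewise follows from the value of the appended position. Since the chosen codeword lies in $E(\mathcal{D}_{k,k-1})$ but not in $E(\mathcal{D}_{k,k})$ (because $\theta_{k-1,k-1} = 1 \neq 0$), Theorem \ref{GRSMDSHULL} applies directly and yields that the Hermitian hull of $GRS_k(\mathbf{b},\mathbf{a})$ is exactly $GRS_{k-1}(\mathbf{b},\mathbf{a})$, completing the proof.
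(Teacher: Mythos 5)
There is a genuine gap, and it is located exactly where you predicted the ``main difficulty'' would lie. Your chosen codeword is wrong: setting only $\theta_{k-1,k-1}=1$ and all other $\theta$'s to zero gives $c_r=\alpha^{-r(k-1)(q+1)}$, which is a power of the primitive element $\alpha$ and hence \emph{never} zero. So this codeword has full support on the first $q^2-1$ coordinates and a zero appended coordinate; it is precisely the weight-$(q^2-1)$ codeword used in Theorem \ref{GRScon2} to build the length-$(q^2-1)$ code, not a codeword of weight $m=q^2-s(q+1)$. Your periodicity argument cannot rescue this: the order of $\alpha^{(k-1)(q+1)}$ being $(q-1)/s$ makes $c_r$ periodic, but it never makes $c_r$ vanish, so the support can never equal $B$. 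Relatedly, with your choice the appended coordinate is $-\theta_{q-1,q-1}=0$, so the evaluation point $0$ (the last entry of $\mathbf{b}$) would not lie in the support at all, and there would be nothing for the condition $a^{q+1}=-1$ to match --- contradicting the statement you are trying to prove.

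The paper's proof fixes this by choosing \emph{two} nonzero coefficients: $\theta_{k-1,k-1}=1$ \emph{and} $\theta_{q-1,q-1}=-1$ (all others zero). Since $\alpha^{-r(q-1)(q+1)}=\alpha^{-r(q^2-1)}=1$ for every $r$, this second term contributes a constant $-1$ to every coordinate, yielding
\[
c_r=\alpha^{-r(k-1)(q+1)}-1 ,
\]
which vanishes exactly when $\frac{q-1}{s}\mid r$, i.e.\ at the $s(q+1)$ positions $r=i\frac{q-1}{s}$, $i\in[0,s(q+1)-1]$; the support on the first $q^2-1$ positions is then exactly $B$, and the appended coordinate equals $-1\neq 0$, which is why $\mathbf{b}$ ends in the evaluation point $0$ and why $a^{q+1}=-1$. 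In your write-up you do eventually write the correct values $x_\ell=\alpha^{-\ell(k-1)(q+1)}-1$, but you attribute the ``$-1$'' to ``the extension/extended-code structure,'' which is not where it comes from: it comes from the deliberate choice $\theta_{q-1,q-1}=-1$ in the trace representation \eqref{eqtr}. As written, your proposal starts from a codeword whose support is provably not $B$, so the argument fails; the repair is the two-coefficient choice above, after which the rest of your outline (norm surjectivity of $x\mapsto x^{q+1}$ onto $\Ff_q^*$, membership in $E(\mathcal{D}_{k,k-1})\setminus E(\mathcal{D}_{k,k})$ via $\theta_{k-1,k-1}\neq 0$, and the application of Theorem \ref{GRSMDSHULL}) goes through as in the paper.
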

\begin{proof}
In (\ref{eqtr}), let $\theta_{k-1,k-1}=1$, $\theta_{q-1,q-1}=-1$, and $\theta_{t,t}=\theta_{i,j}=0$ for each $t\in[k,q-2]$ and each $(i,j) \in T$. Then the corresponding codeword in $E(\mathcal{D}_{k,k-1})\setminus E(\mathcal{D}_{k,k})$ is
\[
\mathbf{c} = \left(\alpha^{-0(k-1)(q+1)}-1,\ldots, \alpha^{-(q^2-2)(k-1)(q+1)}-1,-1 \right).
\]
Since $s=\gcd(k-1,q-1)$, the zero components of $\mathbf{c}$ are $\alpha^{-i\frac{q-1}{s}(k-1)(q+1)}-1$ with $i\in[0,s(q+1)-1]$. By Theorem \ref{GRSMDSHULL}, we have 
${\rm Hull}_{\rm H}(GRS_k(\mathbf{b},\mathbf{a})) = GRS_{k-1}(\mathbf{b},\mathbf{a})$.
\end{proof}

\begin{thm}\label{GRScon4}
Let $k$ and $m$ be positive integers such that $k<q$ and $k-1<m<q-1$. Let $\alpha$ be a primitive element of $\Ff_{q^2}$. Let 
\[
s =\gcd(m-k+1,q-1) \mbox{ and } B=\left\{i+\frac{q-1}{s}j:i\in[\frac{q-1}{s}-1],j\in[0,(q+1)s-1]\right\}.
\]
If $\mathbf{a}=(a_\ell)_{\ell\in B}$ and $\mathbf{b}=(\alpha^\ell)_{\ell\in B}$ are vectors of length $(q+1)(q-1-s)$, with $a_\ell^{q+1}=\alpha^{-\ell(k-1)(q+1)}-1$ for each $\ell\in B$, then the Hermitian hull of $GRS_k(\mathbf{b},\mathbf{a})$ is $GRS_{k-1}(\mathbf{b},\mathbf{a})$.
\end{thm}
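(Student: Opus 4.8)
The plan is to mirror the strategy of Theorems \ref{GRScon2} and \ref{GRScon3}: exhibit one explicit codeword of $E(\mathcal{D}_{k,k-1})\setminus E(\mathcal{D}_{k,k})$ whose support and nonzero entries match the stated $\mathbf{b}$ and $\mathbf{a}$, and then invoke Theorem \ref{GRSMDSHULL}. Concretely, in the trace description \eqref{eqtr} of $E(\mathcal{D}_{k,k-1})\setminus E(\mathcal{D}_{k,k})$ supplied by Remark \ref{remark1}, I would specialize the free parameters by setting $\theta_{k-1,k-1}=1$ and $\theta_{m,m}=-1$, with all remaining diagonal coefficients $\theta_{t,t}$ for $t\in[k,q-1]\setminus\{m\}$ and all $\theta_{i,j}$ for $(i,j)\in T$ equal to $0$. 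This is admissible because $k-1<m<q-1$ forces $m\in[k,q-2]$, so $\theta_{m,m}$ is genuinely one of the diagonal coefficients that \eqref{eqtr} leaves free in $\Ff_q$, while $\theta_{q-1,q-1}$ stays $0$.

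With this choice the resulting codeword is $\left(c_0,\ldots,c_{q^2-2},\sum_r c_r\right)$ with $c_\ell=\alpha^{-\ell(k-1)(q+1)}-\alpha^{-\ell m(q+1)}$ for $\ell\in[0,q^2-2]$; since $\theta_{q-1,q-1}=0$, Remark \ref{remark1} gives $\sum_r c_r=-\theta_{q-1,q-1}=0$, so the final (evaluation-at-zero) coordinate vanishes. Because $\theta_{k-1,k-1}=1\neq 0$, this codeword lies in $E(\mathcal{D}_{k,k-1})$ but not in $E(\mathcal{D}_{k,k})$, exactly as Theorem \ref{GRSMDSHULL} requires. Note each $c_\ell$ is a power of $\alpha^{q+1}$, hence lies in $\Ff_q$, so the codeword is genuinely $q$-ary.

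The heart of the argument is the support computation. Using that $\alpha^{q+1}$ generates $\Ff_q^*$ and therefore has multiplicative order $q-1$, I would observe that $c_\ell=0$ if and only if $\alpha^{\ell(m-k+1)(q+1)}=1$, that is, if and only if $(q-1)\mid \ell(m-k+1)$, equivalently $\tfrac{q-1}{s}\mid\ell$ with $s=\gcd(m-k+1,q-1)$. Thus the zero coordinates among the first $q^2-1$ positions are precisely the multiples of $\tfrac{q-1}{s}$, of which there are $s(q+1)$; together with the vanishing final coordinate, this leaves $q^2-1-s(q+1)=(q+1)(q-1-s)$ nonzero coordinates, indexed exactly by the set $B$ (the complement of the multiples of $\tfrac{q-1}{s}$ in $[0,q^2-2]$). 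Feeding this weight-$(q+1)(q-1-s)$ codeword into Theorem \ref{GRSMDSHULL}, its nonzero entries $c_\ell$ for $\ell\in B$ prescribe $\mathbf{a}=(a_\ell)_{\ell\in B}$ via $a_\ell^{q+1}=c_\ell$, while the retained evaluation points give $\mathbf{b}=(\alpha^\ell)_{\ell\in B}$, and we conclude ${\rm Hull}_{\rm H}(GRS_k(\mathbf{b},\mathbf{a}))=GRS_{k-1}(\mathbf{b},\mathbf{a})$.

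I expect the main obstacle to be the combinatorial bookkeeping in the support count: matching the multiples-of-$\tfrac{q-1}{s}$ zero set to the complement description of $B$ and confirming that the cardinality is exactly the claimed length $(q+1)(q-1-s)$. A secondary point worth flagging is that this derivation yields $a_\ell^{q+1}=\alpha^{-\ell(k-1)(q+1)}-\alpha^{-\ell m(q+1)}$; the factored form $a_\ell^{q+1}=\left(\alpha^{-\ell m}\right)^{q+1}\left(\alpha^{\ell(m-k+1)(q+1)}-1\right)$ shows the $(q+1)$-power factor $\left(\alpha^{-\ell m}\right)^{q+1}$ can be absorbed into a monomial rescaling of $\mathbf{a}$, so some care is needed to reconcile it with the exact expression written in the statement.
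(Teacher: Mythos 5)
Your proof is correct and is essentially the paper's own argument: the paper likewise takes $\theta_{k-1,k-1}=1$, $\theta_{m,m}=-1$ and all remaining coefficients zero in \eqref{eqtr}, and then invokes ``the same argument as in the proof of Theorem~\ref{GRScon3}''; your support computation, the count $q^2-1-s(q+1)=(q+1)(q-1-s)$, and the vanishing of the extended coordinate (because $\theta_{q-1,q-1}=0$) are exactly the details that appeal suppresses. The discrepancy you flag in your final paragraph is real, and it is a typo in the paper's statement rather than a defect of your argument: the codeword produced by this choice of coefficients has entries $c_\ell=\alpha^{-\ell(k-1)(q+1)}-\alpha^{-\ell m(q+1)}$, so the hypothesis on $\mathbf{a}$ should read $a_\ell^{q+1}=\alpha^{-\ell(k-1)(q+1)}-\alpha^{-\ell m(q+1)}$. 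The printed condition $a_\ell^{q+1}=\alpha^{-\ell(k-1)(q+1)}-1$ is carried over verbatim from Theorem~\ref{GRScon3} and is not even consistent with the support set $B$: its zeros are the multiples of $(q-1)/\gcd(k-1,q-1)$ rather than of $(q-1)/s$, and these need not avoid $B$, so the printed formula can force $a_\ell=0$ at some $\ell\in B$. One caveat on your attempted reconciliation: factoring out $\left(\alpha^{-\ell m}\right)^{q+1}$ leaves the factor $\alpha^{\ell(m-k+1)(q+1)}-1$, not $\alpha^{-\ell(k-1)(q+1)}-1$, so the monomial rescaling does not actually recover the printed formula; but nothing in your proof depends on that reconciliation, and what you proved---the corrected statement---is precisely what the paper's proof establishes.
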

\begin{proof}
In (\ref{eqtr}), we put $\theta_{k-1,k-1}=1$, $\theta_{m,m}=-1$, and $\theta_{t,t}=\theta_{i,j}=0$ for each $t\in([k,q-1]\setminus\{m\})$ and each $(i,j)\in T$. The same argument as the one in the proof of Theorem \ref{GRScon3} leads us to the desired conclusion.
\end{proof}

\subsection{Optimal EAQECCs}

The families of optimal EAQECCs in the next theorem are explicitly constructed based on Theorems \ref{EAQMDSP}, \ref{GRScon1}, \ref{GRScon2}, \ref{GRScon3}, and \ref{GRScon4}. We once again note that optimal EAQECCs are codes that reach equality in the bound in \eqref{eq:QMDS}.

\begin{thm}\label{thmMDS}
Let $q$ be a prime power and let $k$ be an integer with $1<k<q$. Let $i=\gcd(k-1,q-1)$ and let $m$ be an integer such that $k-1<m<q-1$. If $j=\gcd(m-k+1,q-1)$, then we obtain the sixteen families of optimal EAQECCs in Table \ref{table1}.
\begin{table*}[ht]
\caption{Parameters of optimal $[[n,\kappa,\delta;c]]_q$ EAQECCs from Theorem \ref{thmMDS}.}
\label{table1}
\renewcommand{\arraystretch}{1.2}
\centering
\begin{tabular}{ccclll}
\toprule
No. & $n$  & $\kappa$  & $\delta$ &  $c$ & Range of $s$ \\
\midrule
$1$ & $q^2-s$  & $1$ & $q^2-q+1$ & $q^2-2q+1+s$ & $s\in \{0,1,\ldots,q-1\}$  \\
$2$ & $q^2-s$  & $1+s$ & $q^2-q+1-s$ & $q^2-2q+1$& \\
$3$ & $q^2-s$  & $q^2-2q+1$ & $q+1$ & $1+s$ \\
$4$ & $q^2-s$  & $q^2-2q+1+s$ & $q+1-s$ & $1$ \\
\midrule
$5$ & $q^2-1-s$  & $1$ & $q^2-k$ & $q^2-2k+s$ & $s\in \{0,1,\ldots,k-1\}$\\
$6$ & $q^2-1-s$  & $1+s$ & $q^2-k-s$ & $q^2-2k$ &  \\
$7$ & $q^2-1-s$  & $q^2-2k$ & $k+1$ & $1+s$ \\
$8$ & $q^2-1-s$  & $q^2-2k+s$ & $k+1-s$ & $1$ \\
\midrule
$9$ & $q^2-i(q+1)-s$  & $1$ & $q^2-i(q+1)-k+1$ & $q^2-i(q+1)-2k+1+s$ & $s\in \{0,1,\ldots,k-1\}$ \\
$10$ & $q^2-i(q+1)-s$  & $1+s$ & $q^2-i(q+1)-k+1-s$ & $q^2-i(q+1)-2k+1$& \\
$11$ & $q^2-i(q+1)-s$  & $q^2-i(q+1)-2k+1$ & $k+1$ & $1+s$\\
$12$ & $q^2-i(q+1)-s$  & $q^2-i(q+1)-2k+1+s$ & $k+1-s$ & $1$\\
\midrule
$13$ & $q^2-j(q+1)-1-s$  & $1$ & $q^2-j(q+1)-k$ & $q^2-i(q+1)-2k+s$ & $s\in \{0,1,\ldots,k-1\}$ \\
$14$ & $q^2-j(q+1)-1-s$  & $1+s$ & $q^2-j(q+1)-k-s$ & $q^2-i(q+1)-2k$& \\
$15$ & $q^2-j(q+1)-1-s$  & $q^2-i(q+1)-2k$ & $k+1$ & $1+s$\\
$16$ & $q^2-j(q+1)-1-s$  & $q^2-i(q+1)-2k+s$ & $k+1-s$ & $1$\\
\bottomrule
\end{tabular}
\end{table*}
\end{thm}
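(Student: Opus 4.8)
The plan is to treat Theorem \ref{thmMDS} as an assembly of three ingredients already at our disposal: the four GRS families of Theorems \ref{GRScon1}--\ref{GRScon4}, the seed construction recorded in \eqref{EAQcon} via Lemma \ref{prop:two}, and the propagation rule of Theorem \ref{EAQMDSP}. The first thing I would record is that each of the four constructions produces an $[n,k,n-k+1]_{q^2}$ MDS code $\cc$ whose Hermitian hull is $GRS_{k-1}(\widehat{\mathbf{b}},\widehat{\mathbf{a}})$, an MDS code of dimension $\ell=k-1$; for Theorem \ref{GRScon1} this reads $\ell=q-1$ with $k=q$, which is again $\ell=k-1$. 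Thus in every case the hull dimension is $\ell=k-1$, and only the length differs across the four families, namely $n=q^2$, $n=q^2-1$, $n=q^2-i(q+1)$, and $n=q^2-j(q+1)-1$, where $i=\gcd(k-1,q-1)$ and $j=\gcd(m-k+1,q-1)$.

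Feeding each such $\cc$ into Lemma \ref{prop:two} through \eqref{EAQcon} with $\ell=k-1$ then yields two seed EAQECCs, namely $[[n,1,n-k+1;n-2k+1]]_q$ and $[[n,n-2k+1,k+1;1]]_q$. Next I would verify that both seeds are optimal with respect to \eqref{eq:QMDS}: substituting into $2\delta\le n+c-\kappa+2$, the first seed gives $2(n-k+1)=2n-2k+2$ on both sides and the second gives $2(k+1)=2k+2$ on both sides, so equality holds in each case. This legitimizes the use of Theorem \ref{EAQMDSP}, whose hypothesis is precisely an optimal EAQECC arising from Lemma \ref{prop:two}, and whose proof further exploits that such a code is automatically pure.

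With the two seeds in hand, Theorem \ref{EAQMDSP} is invoked twice per seed: for each $s\in[\ell]=[k-1]$ it outputs the pair $[[n-s,\kappa,\delta;c+s]]_q$ and $[[n-s,\kappa+s,\delta-s;c]]_q$, both optimal, where the distance lower bounds supplied by Theorems \ref{EAQP} and \ref{EAQP1} are in fact attained because the hull is MDS. Carrying the $s=0$ base case along with $s\in[k-1]$ produces the range $\{0,1,\ldots,k-1\}$ advertised in Table \ref{table1}, and $\{0,1,\ldots,q-1\}$ for the Theorem \ref{GRScon1} family. Two seeds times two propagation directions gives four families per GRS construction, hence the sixteen rows; the remaining work is the substitution of the four concrete lengths into the seed parameters $\delta=n-k+1$, $c=n-2k+1$, and $\kappa=n-2k+1$.

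The step I expect to demand the most care is this final substitution and matching. One must assign each of the four propagation outputs to the correct row, track which coordinate absorbs the shift $s$ under each propagation direction (the entanglement parameter $c$ in one, the logical dimension $\kappa$ together with a drop in $\delta$ in the other), and confirm that the stated range of $s$ agrees with $\{0\}\cup[\ell]$. There is no genuine mathematical obstacle, since all the hard analytic content lives in the already-proved Theorems \ref{GRScon1}--\ref{GRScon4} and \ref{EAQMDSP}; the risk is purely clerical, chiefly in keeping $i=\gcd(k-1,q-1)$ and $j=\gcd(m-k+1,q-1)$ straight across the two gcd-based length families of rows $9$--$16$.
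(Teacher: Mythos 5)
Your proposal is correct and takes essentially the same route as the paper: the paper offers no explicit proof beyond stating that Table \ref{table1} is obtained by combining the four GRS families of Theorems \ref{GRScon1}--\ref{GRScon4} with the seed pairs from \eqref{EAQcon} via Lemma \ref{prop:two} and then propagating with Theorem \ref{EAQMDSP}, which is exactly your assembly. Your added check that both seeds meet \eqref{eq:QMDS} with equality (which the paper leaves implicit) and your caution about keeping $i$ and $j$ straight are both sound---indeed, rows $13$--$16$ of Table \ref{table1} write $i$ where $j$ is evidently intended.
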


\begin{remark}
We see in \eqref{EAQcon} that optimal EAQECCs constructed by Lemma \ref{prop:two} always come in pairs. The \emph{initial} EAQECCs in Table \ref{table1} are given as Entries $4$, $8$, $12$, and $16$. They all have $s=0$ and  $c=1$. In most cases, their lengths are greater than $q+1$. Table \ref{table3} lists known optimal $[[n,\kappa,\delta;1]]_q$ EAQECCs of lengths $n>q+1$. The parameters of the initial codes in Entries $4$ and $8$ were known in the literature. However, we obtain optimal EAQECCs with new parameters by applying the propagation rules in Theorem \ref{EAQMDSP} to those initial codes. The initial codes in Entries $12$ and $16$ have new parameters. Thus, all eight families of codes in Entries $9$ to $16$ are new.
\begin{table*}[ht]
\caption{Parameters of known optimal $[[n,\kappa,\delta;1]]_q$ EAQECCs with $n>q+1$.}
\label{table3}
\renewcommand{\arraystretch}{1.5}
\centering
\begin{tabular}{ccclll}
\toprule
No. & $n$  & $\kappa$  & $\delta$ & Constraints & Reference \\
\midrule
$1$ & $\frac{q^2+1}{5}$  & $\frac{q^2+1}{5}-2k+3$ & $k$ & $q=10m+3$, with even $m$, and $2\leq k\leq 8m+61$, with even $d$ & \cite{Lu2018}  \\

$2$ & $\frac{q^2+1}{5}$  & $\frac{q^2+1}{5}-2k+3$ & $k$ & $q=10m+7$ with even $m$ &   \\

\midrule
$3$ & $z(q-1)$  & $z(q-1)-2k+1$ & $k+1$ & $1\leq z\leq q+1$, $\gcd(n,q)=1$, and $1\leq k\leq q-1$ & \cite{Guenda2017}  \\

\midrule

$4$ & $q^2+1$  & $q^2-2k+4$ & $k$ & $2\leq k \leq 2q$ with even $k$, & \cite{Fan2016}  \\
$5$ & $q^2$  & $q^2-2k+3$ & $k$ & $q+1\leq k \leq 2q-1$, &  \\
$6$ & $q^2-1$  & $q^2-2k+2$ & $k$ & $2\leq k \leq 2q-2$, &  \\

\midrule

$7$ & $\frac{q^2+1}{10}$  & $\frac{q^2+1}{5}-2k+3$ & $k$ & $q=10m+3$ and $2\leq k\leq 6m+2$ with even $d$ & \cite{Lu2018a}  \\

$8$ & $\frac{q^2+1}{10}$  & $\frac{q^2+1}{5}-2k+3$ & $k$ & $q=10m+7$ and $2\leq k\leq 6m+4$ with even $d$ &   \\

$9$ & $\frac{q^2+1}{h}$  & $\frac{q^2+1}{h}-2k+3$ & $k$ & $h\in\{3,5,7\}$, with $h\mid(q+1)$, and $\frac{q+1}{h}+1\leq k\leq \frac{(q+1)(h+3)}{2h}-1$ &   \\

\midrule

$10$ & $tr^z+s$  & $tr+s-2k-1$ & $k+1$ & $q=p^m$, $r=p^e$, with $e\mid m$, $1\leq t\leq r$,  & \cite{Fang2020}  \\
 &   &  &  &  $1\leq z\leq\frac{2m}{e}-1$, $1\leq k\leq \left\lfloor\frac{n-1+q}{q+1}\right\rfloor$, and $s\in\{0,1\}$ & \\

$11$ & $tr+s$  & $tr+s-2k-1$ & $k+1$ & $1\leq t\leq \frac{q-1}{a}$, with $a=\frac{r}{\gcd(r,q+1)}$, $r\mid(q^2-1)$, and $1\leq k\leq \left\lfloor\frac{n+q}{q+1}\right\rfloor$ &   \\
\bottomrule
\end{tabular}
\end{table*}
\end{remark}

\subsection{Optimal Subsystem Codes}

Applying Lemma \ref{QSCH} and Theorem \ref{QSCPMDS} on the GRS codes constructed in Section \ref{sec:CQC}-A yields eight families of optimal MDS subsystem codes.

\begin{thm}\label{subMDS}
Let $q$ be a prime power and let $k$ be an integer with $1<k<q$. Let $i=\gcd(k-1,q-1)$ and let $m$ be an integer such that $k-1<m<q-1$. If $j=\gcd(m-k+1,q-1)$, then we obtain the eight families of optimal subsystem codes in Table \ref{table2}.
\begin{table*}[ht]
\caption{The Parameters $[[n,\kappa,r,\delta]]_q$ of optimal subsystem codes from Theorem \ref{subMDS}.}
\label{table2}
\renewcommand{\arraystretch}{1.3}
\centering
\begin{tabular}{ccclll}
\toprule
No. & $n$  & $\kappa$  & $r$ &  $\delta$ & Range of $s$ \\
\midrule
$1$ & $q^2-s$  & $q^2-2q+1$ & $1+s$ & $q-s$ & $s \in \{0,1,\ldots,q-1\}$ \\
$2$ & $q^2-s$  & $q^2-2q+1+s$ & $1$ & $q-s$ & \\
\hline
$3$ & $q^2-1-s$  & $q^2-2k$ & $1+s$ & $k-s$ & $s \in \{0,1,\ldots,k-1\}$  \\
$4$ & $q^2-1-s$  & $q^2-2k+s$ & $1$ & $k-s$ &   \\
\hline
$5$ & $q^2-i(q+1)-s$  & $q^2-i(q+1)-2k+1$ & $1+s$ & $k-s$ & $s \in \{0,1,\ldots,k-1\}$ \\
$6$ & $q^2-i(q+1)-s$  & $q^2-i(q+1)-2k+1+s$ & $1$ & $k-s$ &  \\
\hline
$7$ & $q^2-j(q+1)-1-s$  & $q^2-j(q+1)-2k$ & $1+s$ & $k-s$ & $s \in \{0,1,\ldots,k-1\}$  \\
$8$ & $q^2-j(q+1)-1-s$  & $q^2-j(q+1)-2k+s$ & $1$ & $k-s$ &  \\
\bottomrule
\end{tabular}
\end{table*}
\end{thm}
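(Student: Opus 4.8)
The plan is to trace each of the four GRS constructions from Theorems \ref{GRScon1}, \ref{GRScon2}, \ref{GRScon3}, and \ref{GRScon4} through Lemma \ref{QSCH} to obtain an initial optimal subsystem code, and then to propagate it via Theorem \ref{QSCPMDS}. Each of the four GRS codes $\cc = GRS_k(\mathbf{b},\mathbf{a})$ has length $n_0$ equal respectively to $q^2$, $q^2-1$, $q^2-i(q+1)$, and $q^2-j(q+1)$, is MDS of dimension $k$, and by the cited theorems has Hermitian hull equal to $GRS_{k-1}(\mathbf{b},\mathbf{a})$, which is MDS of dimension $\ell = k-1$. (For the first family I would specialize $k$ to $q$, so that $\ell = q-1$.) First I would feed $\cc$ into Lemma \ref{QSCH}: since $k + \ell = 2k-1 < n_0$ in every case, the lemma yields a subsystem code with parameters $[[n_0,\, n_0-k-\ell,\, k-\ell,\, \delta_0]]_q = [[n_0,\, n_0-2k+1,\, 1,\, \delta_0]]_q$, where $\delta_0 = {\rm wt}({\rm Hull}_{\rm H}(\cc)^{\perp_{\rm H}} \setminus \cc)$.

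The key intermediate step is to verify that this initial code is \emph{optimal} in the sense of the Singleton-like bound $\kappa + r \le n - 2\delta + 2$, which by the remark preceding Theorem \ref{QSCPMDS} forces it to be pure, so that $\delta_0 = d({\rm Hull}_{\rm H}(\cc)^{\perp_{\rm H}})$. Here ${\rm Hull}_{\rm H}(\cc) = GRS_{k-1}(\mathbf{b},\mathbf{a})$ is MDS of dimension $k-1$ and length $n_0$, so its Hermitian dual is an MDS code of dimension $n_0-k+1$ and minimum distance $d({\rm Hull}_{\rm H}(\cc)^{\perp_{\rm H}}) = n_0-(n_0-k+1)+1 = k$. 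Thus $\delta_0 = k$ (and $\delta_0 = q$ in the first family), and one checks $\kappa + r = (n_0-2k+1) + 1 = n_0 - 2k + 2 = n_0 - 2\delta_0 + 2$, confirming optimality. This identifies the initial entries (rows with $s=0$): $[[q^2, q^2-2q+1, 1, q]]_q$, $[[q^2-1, q^2-2k+1, 1, k]]_q$, and the two analogues with $n_0 = q^2-i(q+1)$ and $n_0 = q^2-j(q+1)$.

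With optimality established, Theorem \ref{QSCPMDS} applies directly. Since ${\rm Hull}_{\rm H}(\cc)$ is MDS of dimension $\ell = \delta_0 - 1 = k-1$, for each $s \in [\delta_0 - 1] = [k-1]$ (resp. $[q-1]$ in the first family) the theorem produces two optimal subsystem codes with parameters $[[n_0-s,\, \kappa,\, r+s,\, \delta_0-s]]_q$ and $[[n_0-s,\, \kappa+s,\, r,\, \delta_0-s]]_q$. Substituting $\kappa = n_0-2k+1$, $r = 1$, $\delta_0 = k$ gives exactly the paired rows of Table \ref{table2}: the first of each pair increases $r$ to $1+s$ while fixing $\kappa$, and the second increases $\kappa$ to $n_0-2k+1+s$ while fixing $r=1$; in both, $\delta$ drops to $k-s$. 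Matching lengths ($q^2-s$, $q^2-1-s$, $q^2-i(q+1)-s$, $q^2-j(q+1)-1-s$) and the stated ranges of $s$ completes the identification of all eight families.

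The main obstacle is the purity computation in the second paragraph — specifically, pinning down $\delta_0 = k$ by correctly identifying $d({\rm Hull}_{\rm H}(\cc)^{\perp_{\rm H}})$. This rests on the Hermitian dual of a \emph{GRS} code being MDS, which is a standard fact but must be invoked cleanly, and on the equivalence (noted before Theorem \ref{QSCPMDS}) that an optimal Clifford subsystem code is automatically pure, so that the ``$\ge \delta_0 - s$'' distances appearing in Theorem \ref{QSCP} are in fact exact equalities, giving genuine optimality rather than mere lower bounds. Everything else is bookkeeping of the four length values and the arithmetic of $\kappa$, $r$, and $\delta$ against the Singleton-like bound.
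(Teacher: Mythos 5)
Your route is exactly the paper's: the published proof of Theorem \ref{subMDS} is a single sentence citing Lemma \ref{QSCH}, Theorems \ref{GRScon1}--\ref{GRScon4}, and Theorem \ref{QSCPMDS}, and your write-up supplies the bookkeeping the authors leave implicit. The skeleton is right, but your ``key intermediate step'' is circular as written: you invoke optimality of the initial code to conclude it is pure, use purity to get $\delta_0 = d\left({\rm Hull}_{\rm H}(\cc)^{\perp_{\rm H}}\right) = k$, and then use $\delta_0 = k$ to verify optimality. The non-circular order is: since ${\rm Hull}_{\rm H}(\cc)^{\perp_{\rm H}}\setminus\cc \subseteq {\rm Hull}_{\rm H}(\cc)^{\perp_{\rm H}}$, the definition of ${\rm wt}(\cdot)$ gives $\delta_0 \geq d\left({\rm Hull}_{\rm H}(\cc)^{\perp_{\rm H}}\right) = k$ (the MDS dual computation you already did); then the Singleton-like bound $\kappa + r \leq n_0 - 2\delta_0 + 2$, applied to the code with $\kappa + r = n_0 - 2k + 2$, forces $\delta_0 \leq k$. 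Hence $\delta_0 = k$, equality in the bound, and purity follow --- in that order.

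Two arithmetic slips should also be fixed, both inconsistent with your own (correct) general formulas. The code of Theorem \ref{GRScon4} has length $(q+1)(q-1-j) = q^2 - j(q+1) - 1$, not $q^2 - j(q+1)$ as in your first two paragraphs (your final matching uses the correct $q^2-j(q+1)-1-s$). And the second initial code should be $[[q^2-1,\, q^2-2k,\, 1,\, k]]_q$, not $[[q^2-1,\, q^2-2k+1,\, 1,\, k]]_q$, since $\kappa = n_0 - 2k + 1$ with $n_0 = q^2-1$. With these repairs the proposal is a complete and faithful expansion of the paper's proof.
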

\begin{proof}
The listed parameters were obtained by applying Lemma \ref{QSCH}, Theorems \ref{GRScon1}, \ref{GRScon2}, \ref{GRScon3}, \ref{GRScon4}, and \ref{QSCPMDS}.
\end{proof}

\begin{remark}
In most cases, the lengths of our optimal subsystem codes are greater than $q+1$. We list the parameters of known optimal subsystem codes of lengths $n>q+1$ in Table \ref{table4}. Comparing the entries in Tables \ref{table2} and \ref{table4} confirms that the optimal subsystem codes in Table \ref{table2} have new parameters, except for the codes in Entries $2$ and $4$.

\begin{table*}[ht]
\caption{The Parameters of known $[[n,\kappa,r,\delta]]_q$ optimal subsystem codes}
\label{table4}
\renewcommand{\arraystretch}{1.3}
\centering
\begin{tabular}{cccllll}
\toprule
No. & $n$  & $\kappa$  & $r$ &  $\delta$ & Constraints & Reference \\
\midrule
$1$ & $q^2-1-s$ & $q^2-2k-1-\ell+s$ & $\ell$ & $k+1-s$ & $0\leq k<q-1$, $s \in \{0,1,\ldots,k\}$, and $0\leq \ell< q^2-2k-1$ & \cite{SALAHA.2009}\\
$2$ & $q^2-s$ & $q^2-2k-2-\ell+s$ & $\ell$ & $k+2-s$ & $0\leq k<q-1$, $s \in \{0,1,\ldots,k+1\}$, and $0\leq \ell < q^2-2k-2$ & \\
\hline
$3$ & $2^{2m}+1$  & $(2^m-1)^4$ & $4$ & $2^m-1$ & $m\geq 1$ & \cite{Qian2013} \\
\bottomrule
\end{tabular}
\end{table*}
\end{remark}

\section{Concluding remarks}\label{sec:conclu}

General propagation rules in the stabilizer framework for quantum error-control have been known since almost the beginning of studies into the topic. For qubit, most of the rules are discussed in \cite{Calderbank1998}. The rules extend naturally to cover qudit cases. Controlling quantum errors in the respective frameworks of entanglement-assisted and quantum subsystem requires thoughtful modifications. The corresponding classical coding structures that allow for the derivation of the parameters of the relevant quantum codes have not been as extensively studied as in the stabilizer framework. 

In this work, we have presented new propagation rules on quantum codes in the entanglement-assisted and subsystem frameworks. We examine classical codes that can serve as ingredients in the constructions of quantum codes with desirable properties. It turns out that focusing on the Hermitian hulls of punctured and shortened linear codes with optimal parameters leads us to new propagation rules. With obvious modifications, we can derive analogous results when switching the inner product from Hermitian to Euclidean. Specific constructions to demonstrate the efficacy of our approach have also been proposed. We have included tables that list the parameters of known optimal EAQECCs and quantum subsystem codes for ease of reference.

In Section \ref{sec:CQC}, we started with generalized Reed-Solomon (GRS) codes and established conditions that ensure their Hermitian hulls are maximum distance separable. Puncturing and shortening maintain the optimality properties in the resulting codes and in their corresponding quantum codes. For implementation purposes in actual quantum channels, however, one would prefer to keep to a qubit or a qutrit setup, that is, keeping to $q \in \{2,3\}$ as in Tables \ref{table1} to \ref{table4}. Under the Hermitian inner product, this means keeping the classical code ingredients to be linear over $\Ff_4$ or $\Ff_9$. By definition, this requirement severely limits the lengths of the quantum codes that we can construct. Three open directions naturally suggest themselves. 
\begin{enumerate}
    \item Determine if there exist linear codes, not necessarily MDS, whose hulls, under some valid inner product, are MDS of dimension $> 1$. If yes, propose valid constructions of such codes.
    \item Derive measures of optimality other than the Singleton or Singleton-like upper bounds. The bounds should ideally be better suited for the already implementable qubit setup and, to a lesser extent, for the qutrit setup.
    \item Construct new quantum codes which are demonstrably optimal or better than currently known. In particular, it will be especially interesting to construct optimal $q$-ary quantum codes with minimum quantum distance $\delta > q+1$. In the stabilizer QECCs, we learn from \cite{Ball2022} that the $q$-ary quantum codes derived from GRS codes have minimum quantum distance $\delta\leq q$.
\end{enumerate}

There are likely to be other propagation rules to be discovered even in the already familiar scenarios for quantum error-control. Formulating them explicitly and constructing good codes as a result constitutes a valuable contribution.

\section*{Acknowledgments}
The authors thank Markus Grassl for insightful discussions. His comments on earlier drafts of this work led to sharper analysis on whether or not a general propagation rule requires the initial quantum code to be pure.


\end{document}